\newtheorem{hypothesis}[theorem]{Induction Hypothesis}
\newtheorem{basecase}[theorem]{Base Case}
\newcommand\soaalloc{\textsc{DynaSOAr}}
\title{DynaSOAr: A Parallel Memory Allocator for Object-oriented Programming on GPUs with Efficient Memory Access}
\titlerunning{\soaalloc{}}%optional, please use if title is longer than one line
\author{Matthias Springer}{Tokyo Institute of Technology}{matthias.springer@acm.org}{}{}%mandatory, please use full name; only 1 author per \author macro; first two parameters are mandatory, other parameters can be empty.
\author{Hidehiko Masuhara}{Tokyo Institute of Technology}{masuhara@acm.org}{}{}
\authorrunning{M. Springer and H. Masuhara}%mandatory. First: Use abbreviated first/middle names. Second (only in severe cases): Use first author plus 'et al.'
\keywords{CUDA, Data Layout, Dynamic Memory Allocation, GPUs, Object-oriented Programming, SIMD, Single-Instruction Multiple-Objects, Structure of Arrays} %mandatory
\tikzset{
  level/.style   = { ultra thick, blue },
  connect/.style = { dashed, red },
  notice/.style  = { draw, rectangle callout, callout relative pointer={#1} },
  label/.style   = { text width=2cm }
}
\DeclareMathOperator*{\argmin}{argmin}
\newcommand\narrowstyle{\SetTracking{encoding=*}{-50}\lsstyle}
\definecolor{mygreen}{rgb}{0,0.6,0}
\definecolor{mygray}{rgb}{0.5,0.5,0.5}
\definecolor{mymauve}{rgb}{0.58,0,0.82}
\tiny\color{mygray}, % the style that is used for the line-numbers
\begin{document}

\maketitle

\begin{abstract}
Object-oriented programming has long been regarded as too inefficient for SIMD high-performance computing, despite the fact that many important HPC applications have an inherent object structure. On SIMD accelerators, including GPUs, this is mainly due to performance problems with memory allocation and memory access: There are a few libraries that support parallel memory allocation directly on accelerator devices, but all of them suffer from uncoalesed memory accesses.

%This is partly due to missing data layout optimizations and unoptimized dynamic memory allocators.

%Designing dynamic memory allocators for GPUs is challenging because applications issue allocation requests in a highly parallel fashion. In addition, memory access and the data layout of allocations must be optimized to achieve good memory bandwidth utilization. 

%Despite recent advances in GPU computing, current memory allocators for SIMD architectures are still not suitable for structured data because they fail to incorporate well-known best practices for optimizing memory access. 

We discovered a broad class of object-oriented programs with many important real-world applications that can be implemented efficiently on massively parallel SIMD accelerators. We call this class \emph{Single-Method Multiple-Objects} (SMMO), because parallelism is expressed by running a method on all objects of a type.

To make fast GPU programming available to domain experts who are less experienced in GPU programming, we developed \textsc{DynaSOAr}, a CUDA framework for SMMO applications. \textsc{DynaSOAr} consists of (1) a fully-parallel, lock-free, dynamic memory allocator, (2) a data layout DSL and (3) an efficient, parallel do-all operation. \textsc{DynaSOAr} achieves performance superior to state-of-the-art GPU memory allocators by controlling both memory allocation and memory access.

%\textsc{DynaSOAr} was designed for a class of object-oriented programs that we call \emph{Single-Method Multiple Objects} (SMMO), in which parallelism is expressed over a set of objects. \textsc{DynaSOAr} is the first GPU object allocator that provides a parallel do-all operation, which is the foundation of SMMO applications.

\textsc{DynaSOAr} improves the usage of allocated memory with a Structure of Arrays (SOA) data layout and achieves low memory fragmentation through efficient management of free and allocated memory blocks with lock-free, hierarchical bitmaps. Contrary to other allocators, our design is heavily based on atomic operations, trading raw (de)allocation performance for better overall application performance. In our benchmarks, \textsc{DynaSOAr} achieves a speedup of application code of up to 3x over state-of-the-art allocators. Moreover, \textsc{DynaSOAr} manages heap memory more efficiently than other allocators, allowing programmers to run up to 2x larger problem sizes with the same amount of memory.

%DynaSOAr is also the first GPU object allocator that provides a parallel do-all operation, which is an important recurring pattern in high-performance code  where parallelism is expressed over a set of objects. Parallel do-all is the foundation of SMMO applications.

%Therefore, we developed \soaalloc{}, a new dynamic object allocator for GPUs. Besides delivering competitive raw (de)allocation performance, 

%Contrary to all previous designs, we achieve better performance with dense allocation which requires more synchronization between threads.
\end{abstract}

\section{Introduction}
%Class of problems: agent-based modeling, agent types represented as types, iterative, run specific operation for all agents, create new/delete existing agents. Example: Wa-tor simulation, traffic simulation.
General-purpose GPU computing has long been a tedious job, requiring programmers to write hand-optimized, low-level programs. In an attempt to make GPU computing available to a broader range of developers, our efforts are centered around bringing fast object-oriented programming (OOP) to low-level languages such as CUDA.

OOP has a wide range of applications in high-performance computing~\cite{bandini2009, Kale:1993:CPC:165854.165874, allan2010survey, 10.1007/978-3-540-25934-3_2, CARY199720} but is often avoided due to bad performance~\cite{master_th_patel}.
Dynamic memory management and the ability/flexibility of creating/deleting objects at any time is one of the corner stones of OOP. Due to the massive parallelism and data-parallel execution of GPUs, the number of simultaneous (de)allocations is significantly higher than on other parallel hardware architectures. In recent years, fast, dynamic memory allocators have been developed for GPUs~\cite{6339604,5577907, Widmer:2013:FDM:2458523.2458535, Vinkler:2015:RED:3071494.3071506, DBLP:journals/corr/abs-1710-11246, Spliet:2014:KDM:2588768.2576781, osti_1398234, Gelado:2019:TGM:3293883.3295727} and demanded by application developers~\cite{Zhu:2015:PIM:2817095.2817115, master_th_cuda_allc, doi:10.1002/cpe.3808, IJNC126, Schafer:2013:RLD:2492045.2492052, Li:2014:ENS:2701002.2701020, Li:2015:CAS:2769458.2769470}, showing a growing interest in better programming models and abstractions that have long been available on other platforms. However, while these allocators often provide good (de)allocation performance, they miss key optimizations for structured data, leading to poor data locality and memory bandwidth utilization when accessing allocated memory.

\paragraph*{Single-Method Multiple-Objects (SMMO)}
We identified a class of high-performance computing applications that can be expressed as object-oriented programs and implemented efficiently on SIMD architectures such as GPUs. We call this class \emph{Single-Method Multiple-Objects} (SMMO). The most fundamental operation of SMMO is parallel \emph{do-all}: Running one method in parallel on all existing objects of a type (\emph{object set}). Such operations fit perfectly with the data-parallel SIMD execution model of GPUs and can be implemented very efficiently. The main challenge lies is the fact that the object set is dynamic: Objects can be created and deleted in GPU code. The main contribution of our work is the design and implementation of a dynamic memory allocator that works well with SMMO applications and runs entirely on the GPU.

\begin{figure}[!t]
  \centering
  \includegraphics[width=0.75\textwidth]{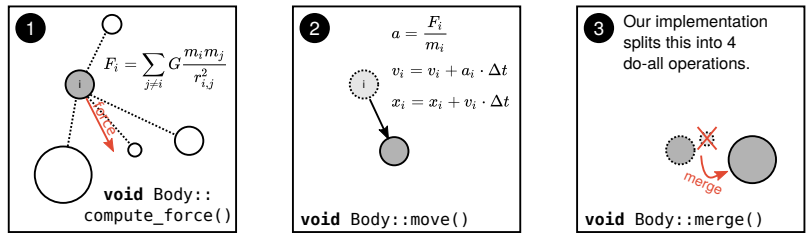}
  \caption{N-body Simulation with Collisions. The simulation consists of multiple do-all operations that are run in a loop for a fixed number of iterations (\emph{time steps})\protect\footnotemark. Every do-all operation runs in parallel and is a synchronization point: The next one can start only if the previous one has finished.}
  \label{fig:nbody_motivation}
  %\vspace{10pt}
\end{figure}
\footnotetext{We implement merging behavior with multiple do-all operations to avoid race conditions.}

\begin{figure}[!t]
  \centering
  \includegraphics[width=0.9\textwidth]{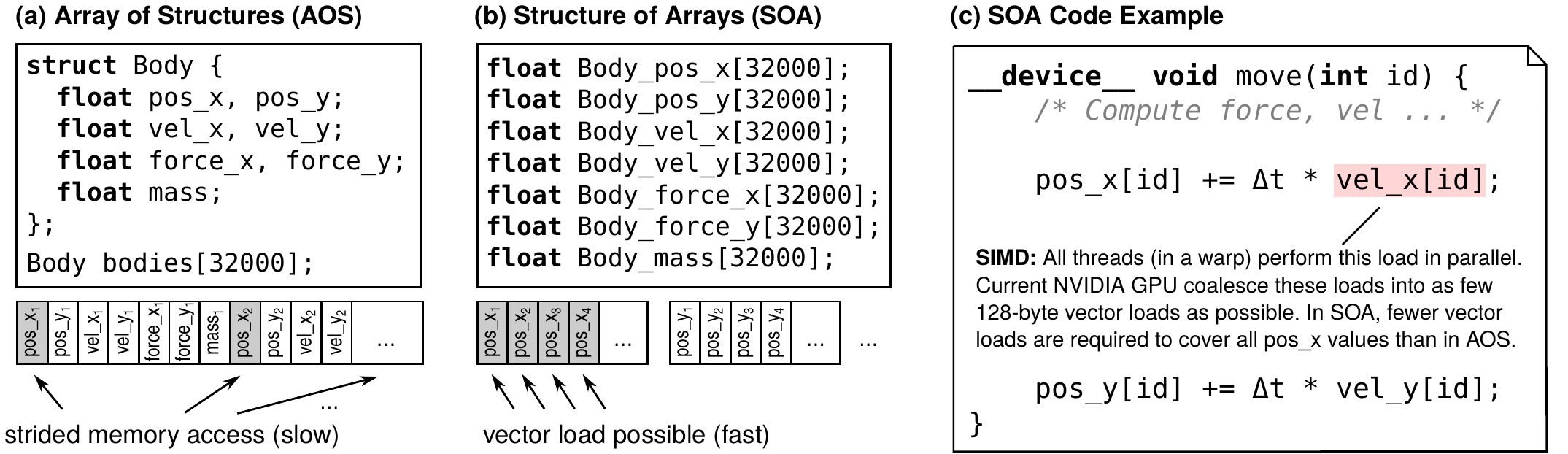}
  \caption{Data Layout of N-body Simulation in AOS and SOA. In SOA, multiple values of a field (e.g., \texttt{pos\_x$_1$} and \texttt{pos\_x$_2$}) can be loaded into a vector register with a single vector load instruction. In AOS, a less efficient, strided memory load or multiple smaller memory loads are necessary, because accessed data is not contiguous.}
  \label{fig:aos_soa_explain}
\end{figure}

SMMO is a broad class of problems with many real-world applications, such as social simulations~\cite{RePEc:mtp:titles:0262550253}, evacuation simulations~\cite{Li:2015:CAS:2769458.2769470}, predicting wildfire spreading~\cite{doi:10.1080/21580103.2016.1262793}, (adaptive~\cite{97c20025105249ca9f87087e9d7ec2c8}) finite element methods~\cite{FORDE1990355} or particle systems, to name just a few. As an example, consider the n-body simulation with collisions in Fig.~\ref{fig:nbody_motivation}. Such simulations are used by astronomers to simulate the collision of galaxies or the formation of planets~\cite{ALEXANDER1998113}. Every body is an object in SMMO and the simulation is a sequence of multiple do-all operations. %The first do-all operation computes the total gravitational force acting on every body. The second do-all operation accelerates every body's velocity according to Newton's second law and calculates a new position. Finally, there are a number of do-all operations for merging a lighter body into a given body according to the physical law of perfectly inelastic collision, if both bodies are close enough. The lighter body object is deleted after a successful merge.

%A simple example of SMMO is an n-body simulation with collisions. A body is represented by an object with a \emph{move} method for simulating a short time step. The simulation is expressed as a series of do-all operations of the \emph{move} method. When two bodies collide, they can either merge into a single body (deleting one object) or disperse into multiple bodies (creating new objects).

%previous work has focused on either (a) libraries and DSLs for high-level languages~\cite{Chakravarty:2011:AHA:1926354.1926358, LeiBa:2015:SED:2814204.2814208} or (b) better language abstractions and libraries for low-level languages~\cite{BELL2012359, eigenweb}. Our efforts are centered around bringing fast object-oriented programming to low-level languages (CUDA). Object-oriented programming has a wide range of applications in high-performance computing~\cite{bandini2009, Kale:1993:CPC:165854.165874, allan2010survey} but is often avoided due to bad performance~\cite{master_th_patel}.
%Even nowadays, many GPU programs still follow a simple but unflexible pattern: Statically allocate a chunk of memory, run one or multiple GPU kernels, copy the results from the GPU, and deallocate the memory. Such a design puts an additional burden on programmers to estimate the memory requirements for each part of the program.

\paragraph*{Structure of Arrays Data Layout}
%Previous work has shown that the data layout of objects can strongly affect of CPU programs. Field reordering and structure splitting are two common techniques to increase the number of \emph{hot fields} in the cache. On SIMD architectures, we are concerned with the layout
%Cache-Conscious Structure Definition
%Automated data-member layout of heap objects to improve memory-hierarchy performance

Structure of Arrays (SOA) and Array of Structures (AOS) describe memory layouts for an object set~\cite{intel_aos_soa} (Fig.~\ref{fig:aos_soa_explain}). In AOS, the standard layout of most platforms, objects are stored as contiguous blocks of memory. In SOA, all values of a field are stored together. This allows for better cache utilization if not all fields are used in a computation. Moreover, it allows for efficient vector loads/stores on SIMD architectures. This is important, because SIMD architectures achieve parallelism by executing the same processor instruction on a \emph{vector} register. Previous work has reported speedups over AOS by multiple factors~(e.g.,~\cite{HOMANN2018325}).

Choosing the best data layout for an application is challenging and depends on the data access patterns of the application. Previous work has shown that a mixture of AOS and SOA can sometimes achieve the best performance~\cite{Franco:2017:YAG:3133850.3133861, 10.1007/978-3-662-48096-0_21, Weber:2017:MAL:3132652.3106341}. How to find good data layouts has been studied before~\cite{10.1007/978-3-662-48096-0_21, Abel:1999:ATS} and is out of the scope of this paper. We are focusing on SOA in this work, but \soaalloc{} could easily be extended to support other layouts in the future. Unfortunately, custom memory layouts come with a number of disadvantages:

%A traditional C++ memory pointers cannot to refer to an object, because  an object is not stored at one location. 

\begin{description}
  \item[Missing OOP Abstractions] In a hand-written SOA layout, programmers refer to an object with an \emph{integer index} into SOA arrays (Fig.~\ref{fig:aos_soa_explain}\textsc{c}). However, OOP language abstractions (e.g., encapsulation, member access, method calls, type checking, inheritance) only work on object pointers/classes in mainstream languages. To overcome such issues, new languages (e.g., Shapes~\cite{Franco:2017:YAG:3133850.3133861}) and language dialects (e.g., ispc~\cite{6339601}) with built-in support for custom data layouts, as well as data layout libraries/DSLs for existing languages~\cite{STRZODKA2012429, Mattis:2015:COI:2814228.2814230, Springer:2018:ICD:3178433.3178439} have been developed.

  \item[Dynamic Object Set Size] SOA and AOS are not suitable for applications in which the number of objects changes over time, because programmers must specify a maximum object set size per type (e.g., 32,000 in Fig.~\ref{fig:aos_soa_explain}) ahead of time. Dynamic memory allocation solves this problem. As one of our contributions, we show how to allocate memory dynamically while preserving the performance characteristics of SOA. %However, dynamic memory allocation is more complex in SOA because objects are not stored in a single memory location.

%This is especially challenging if an application has multiple classes: Even though the total number of objects may not change much, there may sometimes be more objects of the one class and sometimes more of the other one. We evaluate a few such applications (e.g., \emph{Wa-Tor} or \emph{Sugarscape}) in the benchmark section. In fact, this problem is not specific to SOA, but also applies to AOS.
  %Increasing the maximum object set size requires allocating new SOA arrays and copying over the content, similar to a C++ \texttt{vector} or Java \texttt{ArrayList}.

  \item[Subclassing/Inheritance] Inherited methods are shared between superclasses and subclasses. To allow a superclass method implementation to be used for a subclass, the subclass must use the same SOA arrays (and indices) as its superclass. In Columnar Objects, inherited SOA arrays are shared among all objects of all subclasses and newly introduced SOA arrays have a \texttt{null} value for objects of a super class~\cite{Mattis:2015:COI:2814228.2814230}. This approach works, but it can waste a considerable amount of memory.

\end{description}

\paragraph*{\textsc{DynaSOAr}: A Dynamic Allocator and C++/CUDA DSL for SOA Layout}
In this work, we present \soaalloc{}, a CUDA framework for SMMO applications. \soaalloc{} is a parallel, lock-free, dynamic memory allocator, combined with an efficient do-all operation and an embedded C++/CUDA DSL to enable OOP abstractions with custom object layouts. %\soaalloc{} builds on top of ideas from Ikra-Cpp~\cite{Springer:2018:ICD:3178433.3178439}: Programs are written in AOS style but have performance characteristics of SOA style code. In that way, programmers can have the benefits of object-oriented programming such as good abstraction and program readability, without penalizing performance. 

We are focusing on \soaalloc{}'s dynamic memory allocator and do-all operation in this work. \textsc{DynaSOAr} controls the data layout through its memory allocator and data access through its do-all operation. In SMMO applications, \soaalloc{} achieves superior performance compared to state-of-the-art allocators due to three main optimizations.
%On the contrary, \soaalloc{} applications can even be faster than normal CUDA applications due to three main optimizations.

% TODO: Rewrite this part. This list compares SoaAlloc with other allocators, not plain CUDA/SOA.
\begin{itemize}
\item Objects are stored in a \ul{Structure of Arrays (SOA)} data layout, a best practice for structured data in SIMD programs, making usage of allocated memory more efficient when used in conjunction with \soaalloc{}'s do-all operation.
\item Memory fragmentation caused by dynamic object (de)allocation is minimized with \ul{hierarchical bitmaps}. This is important because fragmentation diminishes the benefit of the SOA layout through less efficient vectorized access (more vector transactions are need to access fragmented data) and adversely affects cache performance~\cite{Grunwald:1993:ICL:155090.155107}.
\item Object allocation and deallocation performance is optimized with a number of \ul{low-level techniques}. For example, \soaalloc{} combines allocation requests within SIMD thread groups (\emph{warps}) to reduce the number of memory accesses during allocations~\cite{5577907} and takes advantage of efficient bit operations/intrinsics.
\end{itemize}

%Moreover, \soaalloc{} provides a do-all operation for running a method on all objects of a type, an important operation in programs where parallelism is expressed as an iteration over a set of objects. Finally, even without SOA, \soaalloc{}'s raw (de)allocation performance is competitive with state-of-the-art allocators. % and we believe that our lock-free, hierarchical bitmap design can inspire and be useful for other concurrent algorithms and data structures.

% 3 points where performance of \soaalloc{} is coming from

%\soaalloc{} is on par with state-of-the-art allocators for raw (de)allocation time, but can significantly speed up application code by allocating data in a Structure of Arrays (SOA) memory layout~\cite{Kofler2015, STRZODKA2012429,HOMANN2018325}, a well-studied best practice for SIMD programs. Data in SOA layout can often be accessed with fewer memory operations (\emph{memory coalescing}).

%Moreover, \soaalloc{} can efficiently enumerate all allocated objects of a type in parallel; as an example, such functionality is required in agent-based simulations (e.g., traffic flow simulations~\cite{trafficsim}). It is tedious to implement on GPUs and typically involves a prefix sum step and multiple kernel invocations to compact a sparse array~\cite{Billeter:2009:ESC:1572769.1572795} of object pointers.

%In contrast to memory allocators, \soaalloc{} is an \emph{object allocator} and can allocate only fixed-size objects and not an arbitrary number of bytes. To evaluate \soaalloc{}, we use a Wa-Tor~\cite{Dewdney:1984:CRS} (Fish-and-Shark) simulation as a running example.

\paragraph*{Contributions and Outline}
This paper makes the following contributions.

\begin{itemize}
  \item The concept of Single-Method Multiple-Objects (SMMO) applications. We show that a variety of important HPC problems are SMMO applications.
  \item The design and implementation of \soaalloc{}, a dynamic object allocator for CUDA; with fast (de)allocation and a do-all operation. To the best of our knowledge, \textsc{DynaSOAr} is the first dynamic allocator that stores objects in an SOA data layout.
  \item An extension of the SOA data layout to dynamic object sets and subclassing.
  \item A concurrent, lock-free, hierarchical bitmap, based on atomic operations and retry loops.
  \item A comparison and evaluation of existing GPU memory allocators on SMMO applications.
\end{itemize}

The remainder of this paper is organized as follows. Sec.~\ref{sec:design_goals} gives an overview of the design goals of \soaalloc{}, focusing on memory access considerations of GPUs. Sec.~\ref{sec:arch_overview} describes the high-level architecture of \soaalloc{} and Sec.~\ref{sec:optimizations} explains important optimizations such as hierarchical bitmaps. Sec.~\ref{sec:related_work} compares the design of \soaalloc{} with other allocators and Sec.~\ref{sec:benchmark} evaluates application performance and fragmentation using microbenchmarks and multiple SMMO applications. Finally, Sec.~\ref{sec:conclusion} concludes the paper. Additionally, we provide a systematic correctness analysis in the appendix. %Section~\ref{sec:concurrency} argues the correctness of \soaalloc{} in a concurrent environment.

\section{Design Goals}
\label{sec:design_goals}
\textsc{DynaSOAr} is a CUDA framework for SMMO applications and consists of three parts.

\begin{description}
  \item[Memory Allocator] We developed a dynamic memory allocator that provides \texttt{new}/\texttt{delete} operations in GPU code and stores objects in an SOA data layout. The main task of the allocator is to decide where to store each field value of each object on the heap.
  \item[Data Layout DSL] We developed an embedded C++ DSL to support OOP abstractions while storing objects in a custom layout. We could alternatively implement \textsc{DynaSOAr} in a language that allows programmers to specify custom data layouts (e.g., Shapes~\cite{Franco:2017:YAG:3133850.3133861, Tasos:2018:ESS:3242947.3242951} or ispc~\cite{6339601}), but such languages have limited GPU support.
  \item[Parallel Do-All] We developed an object enumeration strategy for SMMO applications that achieves efficient access of allocated memory on SIMD architectures. By controlling memory allocation and memory access, applications can achive better performance with \textsc{DynaSOAr} than with other state-of-the-art allocators, which are only concerned with memory allocation.
\end{description}

\textsc{DynaSOAr}'s DSL builds on top of Ikra-Cpp, an embedded C++ DSL for object-oriented programming with SOA layout~\cite{Springer:2018:ICD:3178433.3178439}. Its purpose is to make \textsc{DynaSOAr} easier to use for programmers. This paper is mainly about the memory allocator and the do-all operation.

\subsection{Programming Interface}
In contrast to general memory allocators, \soaalloc{} is an \emph{object allocator}. The types (classes/structs) that can be allocated must be specified at compile time. \soaalloc{} provides five basic operations. All operations except for \texttt{parallel\_do} and \texttt{parallel\_new} are \emph{device} functions that can only be called from GPU code.

\begin{itemize}
  \item \texttt{HAllocatorHandle::parallel\_do<T, \&T::func>(args...)}: Launches a GPU kernel that runs a member function \texttt{T::func} for all objects of type $T$ and subtypes\footnote{To avoid branch divergence, we launch a separate kernel for every type.} existing at launch time (\emph{parallel do-all}). \texttt{T::func} may allocate new objects, but those are not enumerated by the same parallel do-all operation. \texttt{T::func} may deallocate any object of different type $U \not= T$, but the object it is bound to (\texttt{\textbf{this}}) is the only object of type $T$ it may deallocate (delete itself). This is to avoid race conditions.
  \item \texttt{HAllocatorHandle::parallel\_new<T>(n, args...)}: Launches a GPU kernel that instantiates $n$ objects of type $T$. In addition to \texttt{args...}, the constructor receives an ID $i$ between 0 and $n - 1$ (for the $i$\textsuperscript{th} object) as the first argument.
  \item \texttt{\textbf{new}(d\_allocator) T(args...)}: Allocates a new object of type $T$ and returns a pointer to the object. The \emph{placement new} notation~\cite{cpp_placement} is a common C++ pattern for arena allocation and \texttt{d\_allocator} is the allocator/arena in which the object is allocated.
  \item \texttt{destroy(d\_allocator, ptr)}: Deletes an object that was allocated with \texttt{d\_allocator}\footnote{There is no \emph{placement delete} syntax, so it is a common pattern to provide a separate \texttt{destroy} function~\cite{placement_delete}.}.
  \item \texttt{DAllocatorHandle::device\_do<T, \&T::func>(args...)}: Runs a member function \texttt{T::func} for all objects of type $T$ in the current GPU thread. Can only be used inside of a \texttt{parallel\_do} or a manually launched GPU kernel. This is a sequential \emph{for-each} loop. It is typically used for processing all pairs of objects (e.g., in n-body simulations). %Does not enumerate objects that did not exist at kernel launch time. 
\end{itemize}

Listing~\ref{lst:short_example} shows parts of the n-body simulation of Fig.~\ref{fig:nbody_motivation} to illustrate \textsc{DynaSOAr}'s API and DSL. %Fields are declared with proxy types (\texttt{SoaField}), but can later be used like normal C++ fields~\cite{Springer:2018:ICD:3178433.3178439}.

\begin{figure}
\begin{lstlisting}[caption={\textsc{DynaSOAr} API Example: Excerpt from an n-body simulation with collisions.}, label={lst:short_example}]
#include "dynasoar.h"

class Body;  // Pre-declare all classes. This simple example has only one class.
using AllocatorT = SoaAllocator</*max_num_obj=*/ 16777216, /*T...=*/ Body>;
__device__ DAllocatorHandle<AllocatorT> d_allocator;

class Body : public AllocatorT::Base {  // Can subclass other user-defined class.
 public:
  // Pre-declare all field types. DynaSOAr uses these to compute the size of blocks.
  declare_field_types(Body, float /*pos_x_*/, float /*pos_y_*/,
                            /* ... */, bool /*was_merged_*/)

 private:
  // Declare fields with proxy types but use like normal C++ fields (as in Ikra-Cpp).
  Field<Body, 0> pos_x_;               // Position X
  Field<Body, 1> pos_y_;               // Position Y
  /* other fields omitted... */
  Field<Body, 9> was_merged_;          // Was this body merged into another one?

 public:
  __device__ Body(float pos_x, float pos_y, float vel_x, float vel_y, float mass)
    : pos_x_(pos_x), pos_y_(pos_y), vel_x_(vel_x), vel_y_(vel_y), mass_(mass) {}

  // This constructor is invoked by parallel_new.
  __device__ Body(int idx)
      : Body(/*pos_x=*/ random_float(-kMaxPos, kMaxPos),
             /*pos_x=*/ random_float(-kMaxPos, kMaxPos), /* ... */) {}

  __device__ void apply_force(Body* other) {
    if (other != this) {
      float dx = pos_x_ - other->pos_x_;  float dy = pos_y_ - other->pos_y_;
      float dist = sqrt(dx*dx + dy*dy);
      float F = kGravityConstant * mass_ * other->mass_ / (dist * dist);
      other->force_x_ += F * dx / dist;  other->force_y_ += F * dy / dist;
    }
  }

  __device__ void step_1_compute_force() {
    force_x_ = force_y_ = 0.0f;
    d_allocator->device_do<Body, &Body::apply_force>(this);
  }

  __device__ void step_2_move(float dt) {
    vel_x_ += force_x_ * dt / mass_;  vel_y_ += force_y_ * dt / mass_;
    pos_x_ += dt * vel_x_;            pos_y_ += dt * vel_y_;
  }

  __device__ void step_6_delete_merged() {
    if (was_merged_) { destroy(d_allocator, this); }
  }
};

int main() {
  // Create new allocator. This will allocate a large buffer ("heap") on the GPU.
  auto* h_allocator = new HAllocatorHandle<AllocatorT>();
  // Copy device handle to d_allocator handle.
  cudaMemcpyToSymbol(d_allocator, h_allocator->device_handle(),
                     cudaMemcpyHostToDevice);  // a bit simplified...

  // Create 65536 random body objects. We do not use the new keyword in this example.
  // Alternatively, we could run this in a kernel: new(d_allocator) Body(...)
  h_allocator->parallel_new<Body>(65536);

  for (int i = 0; i < kIterations; ++i) {
    h_allocator->parallel_do<Body, &Body::step_1_compute_force>();
    h_allocator->parallel_do<Body, &Body::step_2_move>(/*dt=*/ 0.5);
    /* some steps omitted... */
    h_allocator->parallel_do<Body, &Body::step_6_delete_merged>();
  }

  delete h_allocator;  // Deallocate buffer and all allocations within.
  return 0;
}
\end{lstlisting}
\end{figure}

\subsection{Memory Access Performance}
The main insight of our work is that optimizing only for fast (de)allocations is not enough. Optimizing the access of allocated memory can result in much higher speedups, because device (\emph{global}) memory access is the biggest bottleneck of memory-bound GPU applications:

\begin{description}
\item[Latency] Global memory access instructions have a very high latency at around 400--800 clock cycles, compared to arithmetic instructions at around 6--24 cycles. Programmers can hide latency with \emph{high occupancy}~\cite{Volkov:EECS-2016-143} (i.e., running many threads).

\item[Memory Bandwidth] The global memory bandwidth is a limiting factor. Peak memory transfer rates can be achieved only with \emph{memory coalescing}: When the threads in a GPU application simultaneously access different memory addresses, the GPU coalesces accesses from the same SIMD thread group (\emph{warp} in CUDA, every 32 consecutive threads) into one physical transaction if the addresses are on the same 128-byte cache line~\cite{5473222}. However, if threads access data on multiple cache lines (e.g., non-contiguous, spread-out addresses), more transactions are needed\footnote{This is similar to vectorized loads/stores, but coalescing is performed by the hardware.}, which reduces transfer rates significantly. The CUDA Best Practices Guide puts a \emph{high priority} note on coalesced memory accesses~\cite{nvidia_memoryco}. %~\cite{STRZODKA2012429, HOMANN2018325, Springer:2018:ICD:3178433.3178439, 5363313, BELL2012359, 10.1007/978-3-662-48096-0_21}.

\item[Caches] Hits in the L1/L2 cache are served much faster (less latency, memory bandwidth pressure) than global memory loads. Field reordering and structure splitting are common techniques for increasing the number of hot fields in cache~\cite{Chilimbi:1999:CSD:301618.301635}.

%Good data locality can improve memory performance through better cache utilization. A SOA layout can improve cache utilization because fields that are not accessed do not occupy cache lines.
\end{description}

\textsc{DynaSOAr} achieves good memory access performance with a SOA-style data layout: First, SOA increases memory coalescing because values of the same field, which are accessed simultaneously in SIMD, are stored together. Second, SOA is an extreme form of structure splitting and can improve cache utilization because fields that are not accessed do not occupy cache lines.

%A SOA data layout is the principal way to achieve good memory access performance in \soaalloc{}.

\subsection{High Density Memory Allocation}
A SOA data layout (Fig.~\ref{fig:clustering_ex}a) achieves good memory performance but is not suitable for dynamic allocation: The size of SOA arrays is fixed and new allocations cannot be accommodated once all array slots are occupied.

\textsc{DynaSOAr}'s design is based on the insight that a \emph{clustered layout} with SOA-style structure splitting (Fig.~\ref{fig:clustering_ex}b) has the same cache/vector performance characteristics as a SOA layout, if scalar values are stored in dense clusters of at least 128~bytes (vector and cache line size) and clusters are aligned to 128~bytes, regardless of where the clusters are located. This gives \textsc{DynaSOAr} more freedom in the placement of allocations and is exploited by its allocation policy.

%Fragmentation is a measurement for the amount of allocated memory in relation to the overall memory consumption. Traditional memory allocators can suffer from \emph{internal} and \emph{external} fragmentation. In the former case, the allocator returns a larger than requested memory segment; e.g., because it can only handle multiples of a certain byte size. In the latter case, the allocator has multiple chunks of free memory, each of which by itself may be too small to serve a large allocation request. Besides wasting memory, high fragmentation affects cache utilization adversely in traditional memory allocators~\cite{Feng:2005:LDM:1111583.1111594}.

\begin{figure}[!t]
  \centering
  \includegraphics[width=\textwidth]{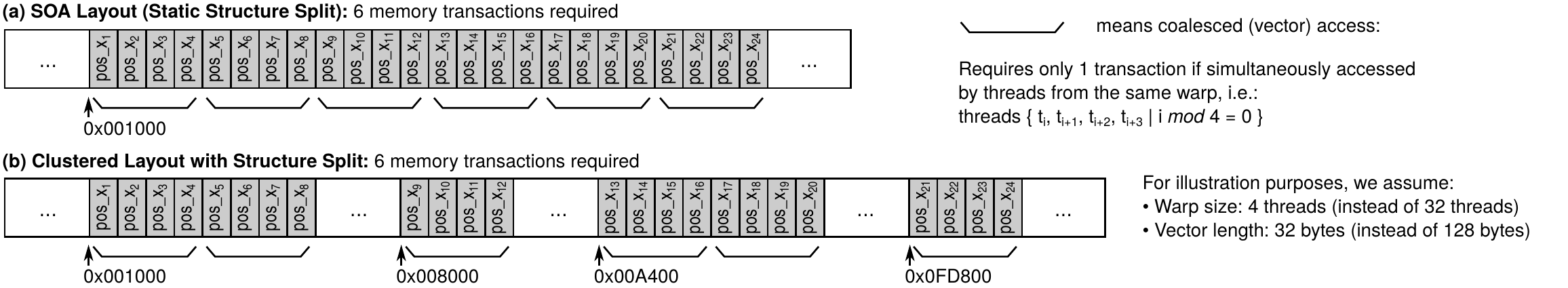}
  \caption{Data Layouts: Number of required memory transactions to read 24 floats simultaneously.}
  \label{fig:clustering_ex}
\end{figure}

\subsection{Parallel Object Enumeration Strategy}
Current GPUs follow the Single-Instruction Multiple-Threads (SIMT) execution model. Intuitively, every SIMD lane corresponds to a thread and every group of consecutive 32 threads forms a \emph{warp} which executes the same instruction on a vector register.

To benefit from memory coalescing, the threads of a warp must access addresses on the same 128-byte L1 cache line. In a SOA data layout, this is achieved when the threads of a warp read/write the same fields of objects with \ul{contiguous indices} at the same time. Intuitively, threads in a warp should process \emph{neighboring} (spatially local) objects.

In \textsc{DynaSOAr}, programmers invoke GPU kernels with parallel do-all operations. These operations must (a) spawn enough GPU threads to hide latency, but not too many to avoid inefficiencies, and (b) assign objects to threads in such a way that memory access is optimized.

%A parallel do-all operation of type $T$ should assign threads to objects in such a way that accesses of fields of $T$ are coalesced. One way to achieve good coalescing is to assign object $i$ to thread $t_i$, but this approach requires too many GPU threads. Too many GPU threads reduce efficiency. 

%To benefit from memory coalescing with SOA-structured data during \texttt{parallel\_do<T>}, all threads of a warp should processes objects in a as small as possible window of indices. E.g., if the 32 threads in a given warp process objects with indices $j$ to $j+31$ in the n-body simulation of Fig.~\ref{fig:aos_soa_explain}, then the corresponding \texttt{pos\_x} values can be loaded into a vector register with only 1 or 2 memory transactions (depending on SOA array alignment and the value of $j$). 

\subsection{Scalability}
Memory allocations require some sort of synchronization between threads to prevent \emph{collisions}, i.e., two threads allocating the same memory location. To avoid collisions, some allocators such as Cilk~\cite{Blumofe:1999:SMC:324133.324234} utilize private heaps, but such designs can lead to high memory consumption (\emph{blowup})~\cite{Berger:2000:HSM:378993.379232} and are in feasible on massively parallel architectures with thousands of threads.

State-of-the-art GPU allocators such as ScatterAlloc~\cite{6339604} and Halloc~\cite{hallocweb} reduce collisions with hashing, which scatters allocations almost randomly on the heap. This would render a SOA layout useless and defeat one of \textsc{DynaSOAr}'s main optimizations.

With such design restrictions, \textsc{DynaSOAr} is bound to have less efficient allocations than other allocators. However, as we show throughout this paper, \textsc{DynaSOAr} can more than make up for slow allocations with more efficient memory access.

%Ideally, the (de)allocation time for an object should not increase with the number of threads, the number of concurrent (de)allocations and the heap size. Fragmentation and the clustering index should always stay low.

Previous CPU memory allocator designs emphasize mechanisms for reducing false sharing, which can degrade performance~\cite{Berger:2000:HSM:378993.379232}. This is not an issue on GPUs, because L1 caches are not coherent. Programmers must use the \texttt{volatile} keyword or atomic operations to enfore a read/write to the shared L2 cache or global memory. %In practice, more (de)allocations will inevitably lead to more memory contention.

%TODO: False sharing

%\paragraph{Control Flow Divergence}

\section{Architecture Overview}
\label{sec:arch_overview}
\begin{figure}
  \centering
  \includegraphics[width=\columnwidth]{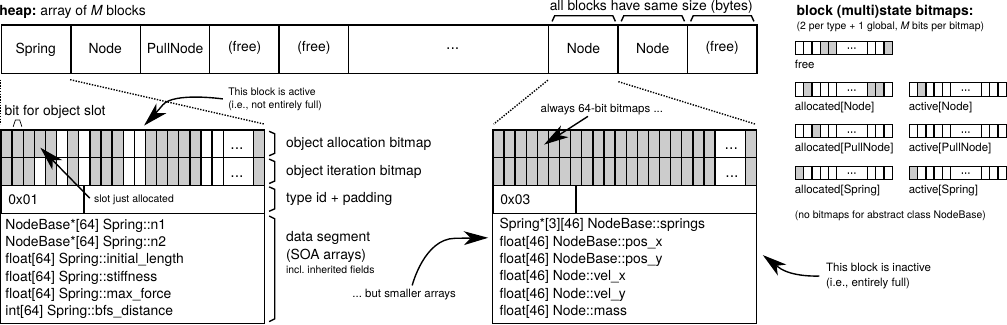}
  \caption{Example: Heap layout for a FEM simulation of a crack in a composite material. The heap is divided into $M$ blocks of equal size. Every block has the same structure: an allocation bitmap, an iteration bitmap, and a type identifier, followed by a data segment storing objects in SOA layout.}
  \label{fig:overview_heap_soaalloc}
  %\vspace{10pt}
\end{figure}
\begin{figure}
  \begin{minipage}[c]{0.4\textwidth}
    \includegraphics[width=\textwidth]{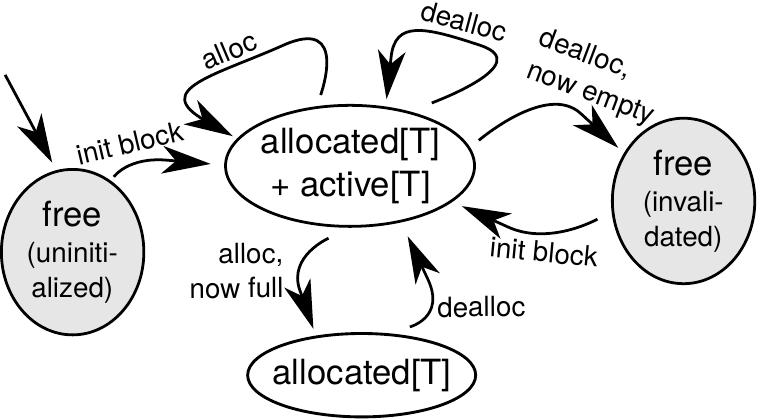}
  \end{minipage}\hfill
  \begin{minipage}[c]{0.56\textwidth}
    \caption{Block State Transitions. At first, blocks are in an uninitialized state. As part of allocation, new active blocks may be initialized (\emph{allocated}). Active blocks become inactive when they are full. Inactive blocks become active again an object is deallocated. Active blocks are invalidated when their last object is deallocated. Invalidated blocks can be reinitialized (to any type) and are handled similar to uninitialized blocks.
    } \label{fig:block_states}
  \end{minipage}
\end{figure}

\soaalloc{} manages a single, large heap in global memory on device. The heap is divided into $M$ blocks of equal number of bytes. $M$ is determined at compile time based on the block size. Multiple objects of the same type (C++ class/struct) are stored in a block in a Structure of Arrays (SOA) data layout (Fig.~\ref{fig:overview_heap_soaalloc}). Once a block is initialized (\emph{allocated}) for a certain type, only objects of that type can be stored in that block until the block (and all its objects) is deallocated again and reinitialized to a different type.

The maximum number of objects in a block depends on its type, because different structs/classes may have different sizes. To improve clustering, \soaalloc{} allocates new objects in already existing, non-full blocks (\emph{fast path}). We call such blocks \emph{active}, because they participate in allocations (Fig.~\ref{fig:block_states}). Only if no active block could be found, a new block is allocated and becomes active (\emph{slow path}).

\subsection{Block Structure}
Every block has two 64-bit object bitmaps: An \emph{object allocation bitmap} and an \emph{object iteration bitmap}. The allocation bitmap tracks allocated slots in the block. The iteration bitmap is used for object enumeration and overwritten with the allocation bitmap before every parallel do-all operation. Its purpose is to ensure that objects that were created during a do-all operation are not enumerated by the same do-all operation; that would a race condition.

The \emph{type identifier} is a unique ID for the type $T$ of a block. The remainder of the block is occupied by padding and the \emph{data segment}, storing $1 \leq N_T \leq 64$ objects in SOA layout. The data segment begins with SOA arrays for inherited fields and ends with SOA arrays of newly introduced fields.

Slots are marked as (de)allocated with atomic AND/OR operations that change a single bit of the object allocation bitmap. Based on their return value\footnote{An atomic operation returns the value in memory before modification.}, we know ...

\begin{itemize}
  \item ... if an allocation was successful or another thread was faster allocating the same slot.
  \item ... if a particular allocation filled up a block (i.e., allocated the last slot).
  \item ... if a particular deallocation emptied a block (i.e., deallocated the last slot).
\end{itemize}

If a thread filled up a block or emptied a block, it is that thread's \emph{responsibility} to update the other internal data structures. This is a common pattern in lock-free designs~\cite{Michael:2004:SLD:996841.996848}. Note that every block has the same byte size and structure; e.g., the bitmaps are always at the same offset. This is an important property for the correctness of our lock-free (de)allocation algorithms and simplifies \emph{safe memory reclamation}.

\subsection{Block Capacity}
The capacity of a block (maximum number of objects) depends on the size (bytes) of the type of objects in the block. If \soaalloc{} manages objects of types $T_1$, $T_2$, ..., $T_n$ and $s=\argmin_{i \in 1...n} \mathit{size}(T_i)$ is the index of the smallest type, then the capacity $N_{T}$ of a block of type $T$ is determined as follows.
\begin{align*}
N_{T} = \left\lfloor{\frac{64 \cdot \mathit{size}(T_s)}{\mathit{size}(T)}}\right\rfloor \tag{\emph{block capacity}}
\end{align*}

A block of the smallest type $T_s$ has capacity 64. Given a fixed heap size, the size of $T_s$ determines the block size in bytes and thus the number of blocks $M$.

As soon as a type $T$ is more than twice as big as $T_s$, the benefit of the SOA layout starts fading away for $T$, because $N_T < 32$. The maximum amount of memory coalescing can only be achieved with vector loads (cluster sizes) of 32 values (assuming 32-bit scalar types). Furthermore, \soaalloc{} cannot handle cases in which a type is more than 64 times bigger than the smallest type. In reality, these limitations proved to be insignificant. None of our benchmarks experienced a slowdown due to unfavorable block sizes.

%This could be solved in future versions with variable-sized blocks; e.g., by replacing the current 64-bit object allocation/iteration bitmaps with larger, hierarchical bitmaps.

\subsection{C++ Data Layout DSL and Object Pointers}
\begin{figure}
  \begin{minipage}[c]{0.5671195351024146\textwidth}
    \includegraphics[width=\textwidth]{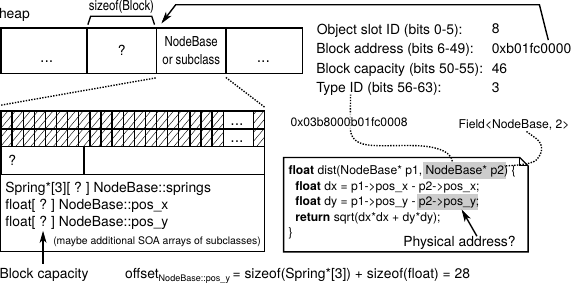}
  \end{minipage}\hfill
  \begin{minipage}[c]{0.4\textwidth}
    \caption{Object Pointer Example. The static type of \texttt{p2} is \texttt{NodeBase*}. The corresponding block has SOA arrays for \texttt{NodeBase} fields and for the additional fields of the runtime type of \texttt{p2}. The size of those arrays is not statically known and depends on the runtime type of \texttt{p2}.
    } \label{fig:global_ref}
  \end{minipage}
\end{figure}

Field access is simple in most object-oriented systems: Given an object pointer, which is a memory location, a field value is stored at a fixed offset from the object pointer.

%\begin{lstlisting}
%class NodeBase : public SoaAllocator<AllocatorT> {
%  /* ... */
%  Field<Body, 2> pos_y;
%  /* ... */
%
%}
%NodeBase* node;
%float result = node->pos_y;  // Located at: ((char*) node) + offsetof(NodeBase, pos_y).
%\end{lstlisting}

%In \textsc{DynaSOAr}, objects are no longer stored contiguously. 

In \textsc{DynaSOAr}, an object pointer is not a memory location, but a combination of various components (\emph{fake pointer}~\cite{Springer:2018:ICD:3178433.3178439}), similar to \emph{global references} in \emph{Shapes}~\cite{Franco:2017:YAG:3133850.3133861}. Upon field access, the \textsc{DynaSOAr} DSL transparently converts object pointers to memory locations, without breaking C++'s OOP abstractions. We follow the implementation strategy of Ikra-Cpp, where fields are declared with proxy types \texttt{Field<B, N>}, which are implicitly converted to \texttt{T\&} values~\cite{cpp_obj}, where \texttt{T} is the N-th predeclared field type of \texttt{B}~\cite{Springer:2018:ICD:3178433.3178439}. This conversion is defined by our DSL and computes the actual, physical memory location within a data segment.

A \textsc{DynaSOAr} object pointer (Fig.~\ref{fig:global_ref}) is based on the address of the block in which the object is located. All blocks are aligned to 64~bytes, so we can store the object slot ID in the 6 least significant bits. Since recent GPU architectures have at most 24~GB of memory and no virtual memory, only the 35 least significant bits are used in memory addresses and the remaining 29 bits are always zero\footnote{We experimentally verified this on NVIDIA Maxwell and NVIDIA Pascal.}. We store additional information in these bits: The 8 most significant bits store the type identifier for fast instance-of checks. The next 6 bits store the capacity of the block. Note that, while C++ stores runtime types with a vtable pointer at the beginning of an object, we store runtime type information in unused pointer bits.

While in most object-oriented systems, runtime type information is only required for virtual function calls, \textsc{DynaSOAr} needs the block capacity (a property of the runtime type) also for field accesses, because SOA array offsets within the data segment depend on it.

For example, \texttt{p2} in Fig.~\ref{fig:global_ref} is statically known to be of type \texttt{NodeBase*}, but the block capacity (size of SOA arrays) depends on the runtime type, which can be any subclass of \texttt{NodeBase}. Those subclasses can have different block capacities. The size of SOA arrays and the object slot ID are required to compute the physical location of \texttt{p2->pos\_y}, so we encode both inside object pointers.

%In a SOA layout, objects are no longer stored contiguously and objects cannot be refered to with a traditional memory pointer. Instead, \textsc{DynaSOAr} encodes all information that is required for field access in a 64-bit integer that we call an \emph{object identifier} (Fig.~\ref{fig:global_ref}). , but we encode different information.

%\begin{itemize}
%  \item Bits 0--5: Object slot index $\mathit{id}_\mathit{O}$ within a block.
%  \item Bits 6--47: Block index $\mathit{id}_B$. Based on heap address (\emph{heap}), block size in bytes ($\mathit{size}_\mathit{Block}$) and block index, the address of a block (and its data segment) can be computed.
%  \item Bits 48--55: Block size $\mathit{N}_O$. The maximum number of objects in the block. This number is statically known for all types. However, if inheritance is used, this number depends on the runtime type of the object.
%  \item Bits 56--63: Type identifier $\mathit{id}_T$. This is useful for fast instance-of checks and possibly for virtual functions in the future.
%\end{itemize}

%Given an object identifier, the address of a field $F$ of that object is computed as follows, where $\mathit{offset}_\mathit{DS}$ is the offset of the data segment within every block, $\mathit{offset}_F$ is the size of all fields before $F$ and $\mathit{size}_F$ is the size of $F$'s data type in bytes.
%$$ \mathit{addr}_F = \mathit{heap} + \mathit{size}_\mathit{Block} \cdot \mathit{id}_B + \mathit{offset}_{\mathit{DS}} + N_O \cdot \mathit{offset}_F + \mathit{id}_O \cdot \mathit{size}_F $$

This computation, along with bit-shifting and bit-AND operations for extracting all components from an object pointer, is performed on every field read/write (Sec.~\ref{sec:address_computation_ex}). This overhead may seem large, but arithmetic operations are much faster than memory access, even in case of an L2 cache hit. Overall, the performance benefit of SOA is much larger than the address computation overhead.

%Previous work on Ikra-Cpp has shown how such computations can be implemented transparently in C++ with type punning, template metaprogramming and operator overloading~\cite{Springer:2018:ICD:3178433.3178439}, so that programmers can use class pointers instead of integers for object identifiers. \textsc{DynaSOAr} follows the same implementation strategy. The key insight is that memory pointers are also 64 bit in size and can be reinterpret-casted from/to 64-bit integers. Fields are declared with special proxy types that can be transparently converted to their actual types by overloading C++'s implicit conversion operator~\cite{cpp_obj}. This type conversion computes and accesses the correct memory location. This allows programmers to use object-oriented abstractions such as type checking or the member-of (``dot'') operator.

\subsection{Block Bitmaps}
To find blocks or free memory quickly during object enumeration or object allocation, \soaalloc{} maintains three bitmaps of size $M$, where $M$ is the maximum number of blocks on the heap.

\begin{itemize}
  \item The \emph{free block bitmap} indexes block locations that are not yet allocated. This bitmap is used to determine where new blocks are allocated. Bit $i$ is 1 iff block $i$ is free (uninitialized or invalidated). Initially, every bit is 1.
  \item There is one \emph{block allocation bitmap} for every type $T$. That bitmap indexes blocks of type $T$ and is used for enumeration of all objects. Blocks of subclasses are not included in bitmaps of the superclass. Initially, every bit is 0.
  \item There is one \emph{active block bitmap} for every type $T$, indexing allocated, non-full blocks. If a bit is 1, then the same bit in the block allocation bitmap must also be 1. This bitmap is used to find a block in which a new object can be allocated. Initially, every bit is 0.
\end{itemize}

Due to concurrent (de)allocations, block bitmaps cannot be kept consistent with the actual block states all the time, as indicated by object allocation bitmaps and type identifiers of blocks. However, we designed our algorithms in such a way that they can handle such inconsistencies and keep block states and block bitmaps \emph{eventually consistent}.

\subsection{Object Slot Allocation}
When a new object is created, \textsc{DynaSOAr} allocates memory and runs the constructor on the object pointer. Alg.~\ref{alg:alloc_algo} shows how memory is allocated. This algorithm runs entirely on the GPU and is completely lock-free.

\textsc{DynaSOAr} tries to allocate memory in an already existing, active block. If no block could be found, it first initializes a new block at a location that is known to be free (\emph{slow path}). The state of the new block is \emph{allocated} and \emph{active}, so that the new block can also be found by other threads.

Once a block was selected, an object slot is reserved by atomically finding and flipping a bit from 0 to 1 in the object allocation bitmap (details in Alg.~\ref{alg:block_allocate}). Based on the return value of the atomic operation, we know if this operation just allocated the last slot. In that case, the block is marked as \emph{inactive} in the active block bitmap (Line~12).

Since the allocator is used concurrently by many threads, we may select a block (Line~2) that is full or no longer exists when attempting to reserve an object slot (Line~8). If the block is full, object reservation fails and we retry by selecting a new active block. If the block no longer exists, we have to consider three cases\footnote{We give a more systematic correctness argument in the appendix.}.

\begin{enumerate}
  \item There is currently no block at this location. In this case, object reserveration fails, because all slots are marked as allocated in the object allocation bitmap when a block is deleted. We call this process \emph{block invalidation}.
  \item The block was deleted and a new block of the same type was allocated at the same location. Such ABA problems are harmless and allocation will succeed.
  \item The block was deleted and there is now a block of different type at the same location. At this point, the constructor has not run yet, so no data in the data segment was corrupted. This is because all blocks have the same structure, i.e., the object allocation bitmap is always at the same location. We can safely rollback the allocation by running the deallocation routine.
\end{enumerate}

%When a new object is created, first an object slot is allocated, i.e., a block is selected and an object slot is reserved inside the block. Afterwards, the constructor is called. Object slot allocation runs entirely on the GPU and is a completely lock-free algorithm. Section~\ref{sec:details_alloc_dealloc} describes how concurrency is handled in this algorithm (Alg.~\ref{alg:alloc_algo}, Lines~14--18) and why it is thread-safe. On a high level, an object slot of type $T$ is allocated as follows.

%\begin{enumerate}
%\item Find an existing block of type $T$ in the active block bitmap (Alg.~\ref{alg:alloc_algo}, Line~2). This is a block that should have space for at least one more object.
%\item If no active block could be found, use the free block bitmap to find and clear an unused block location (Line~4) and create a new block of type $T$ (Line~5). Set the same bit in the block allocation bitmap and the active block bitmap (Lines~6, 7).
%\item Try to allocate the object in the block (Line~9, details in Alg.~\ref{alg:block_allocate}). If this fails (Line~10; e.g., because the block is now full), start over from the beginning. Otherwise, if this allocation filled up the block entirely, clear its bit in the active bitmap (Line~13).
%\end{enumerate}

%TODO: What if tryfindset fails but a block exists?

%\RestyleAlgo{boxruled}
%\LinesNumbered
\SetKwComment{Comment}{$\triangleright$\ }{}
\SetAlgoVlined
\begin{algorithm}[t]
\small
 \Repeat(\Comment*[f]{\textsf{Infinite loop if OOM}}){false}{
  bid $\gets$ active[T].\emph{try\_find\_set}(); \hfill \Comment{\textsf{Find and return the position of any set bit.}}
  \If(\Comment*[f]{\textsf{Slow path}}){\emph{bid} = FAIL} {
    bid $\gets$ free.\emph{clear}();   \hfill \Comment{\textsf{Find and clear a set bit atomically, return position.}}
    \emph{initialize\_block}<T>(bid);  \hfill \Comment{\textsf{Set type ID, initialize object bitmaps.}}
    allocated[T].\emph{set}(bid)\;
    active[T].\emph{set}(bid)\;
  }
  alloc $\gets$ heap[bid].\emph{reserve}();  \hfill \Comment{\textsf{Reserve an object slot. See Alg.~\ref{alg:block_allocate}.}}
  \If{\emph{alloc} $\not=$ FAIL}{
    ptr $\gets$ \emph{make\_pointer}(bid, alloc.slot)\;
    t $\gets$ heap[bid].type; \hfill \Comment{\textsf{Volatile read}}
    \lIf{\emph{alloc.state} = FULL}{
      active[t].\emph{clear}(bid)
    }
    \lIf{t = T}{
      \Return ptr
    }{
      \emph{deallocate}<t>(ptr); \hfill \Comment{\textsf{Type of block has changed. Rollback.}}
    }
  }
 }
 \caption{DAllocatorHandle::allocate<T>() : T* \hfill \fbox{GPU}}
 \label{alg:alloc_algo}
\end{algorithm}
%\RestyleAlgo{boxruled}
%\LinesNumbered
\SetKwComment{Comment}{$\triangleright$\ }{}
\begin{algorithm}[t]
\small
%\vspace{-0.5cm}
\begin{multicols}{2}
  bid $\gets$ \emph{extract\_block}(ptr)\;
  slot $\gets$ \emph{extract\_slot}(ptr)\;
  state $\gets$ heap[bid].\emph{deallocate}(slot)\;
  \uIf{\emph{state} = FIRST}{
    active[T].\emph{set}(bid)
  }
  \ElseIf{\emph{state} = EMPTY}{
    \If{invalidate(bid)}{
      t $\gets$ heap[bid].type\;
      active[t].\emph{clear}(bid)\;
      allocated[t].\emph{clear}(bid)\;
      free.\emph{set}(bid)\;
    }
  }
  \end{multicols}
\vspace{0.25cm}
 \caption{DAllocatorHandle::deallocate<T>(T* ptr) : void \hfill \fbox{GPU}}
 \label{algo:dealloc_a}
\end{algorithm}

\subsection{Object Deallocation}
When an object is deleted, \textsc{DynaSOAr} extracts its runtime type $T$ from the object pointer. Then, \textsc{DynaSOAr} runs the C++ destructor and deallocates the memory (Alg.~\ref{algo:dealloc_a}) as follows.

We first extract block and object slot IDs from the object pointer and free the object slot by atomically flipping its bit in the object allocation bitmap from 1 to 0. Based on the return value of the atomic operation we know the fill level of the block right before the deallocation.

If this deallocation freed the first object slot (block previously full), we mark the block as active (Line~5), so that other threads can find it and allocate objects in it. If this deallocation freed the last object slot (block now empty), we attempt to delete the block (Lines~7--11). Safe memory reclamation is known to be difficult in lock-free algorithms~\cite{Michael:2002:SMR:571825.571829}. The main problem is that one or more contending threads, in the course of their lock-free operations, may have selected the block that we are about to delete for new allocations.

To avoid the block from being modified by other threads, we \emph{invalidate} it. Block invalidation attempts to atomically flipping all bits in the object allocation bitmap from 0 to 1. If this atomic operation failed to flip at least one bit from 0 to 1 (because it was already 1), another thread must have reserved an object slot in the meantime. In this case, we rollback the changes to the object allocation bitmap and abort block invalidation and deletion.

If invalidation was successful, the block is guaranteed to be empty and cannot be modified by other threads anymore because all bits in the object allocation bitmap are 1. The type of the block may have changed in the meantime (Line~8), but it is now safe to mark this block location as \emph{free}, so that a new block can be initialized at this location.

%On a high level, an existing object of type $T$ is deallocated (Alg.~\ref{algo:dealloc_a}) as follows.

%Safe Memory Reclamation for Dynamic Lock-Free Objects Using Atomic Reads and Writes
%Decoupling Lock-Free Data Structures from Memory Reclamation for Static Analysis
%Reclaiming Memory for Lock-Free Data Structures: There has to be a Better Way
%Automatic Memory Reclamation for Lock-Free Data Structures
%A Practical Wait-Free Simulation for Lock-Free Data Structures

%\begin{enumerate}
%  \item Decode the block index and the object slot index from the object identifier (Lines~1--2).
%  \item Deallocate the object in the block (Line~4). If the block was full and this request freed the first slot, set the block's bit in the active block bitmap (Line~6).  Otherwise, if this request emptied the block entirely, clear its bit in the block allocation bitmap and the active block bitmap and set its bit in the free block bitmap (Lines~10--12).
%\end{enumerate}

\subsection{Parallel Object Enumeration: \texttt{parallel\_do}}
\label{sec:par_do_all_sec3}
%\begin{figure}[t]
%  \centering
%  \includegraphics[width=0.65\columnwidth]{}
%  \caption{Example: Assignment of threads to objects with a \#threads-stride loop (4~threads per block). Neighboring objects are processed by consecutive thread ranges, likely within the same warp.}
%  \label{fig:stride_loop}
%\end{figure}
Parallel do-all is the foundation of SMMO applications. It launches a GPU kernel that runs a method \texttt{T::func} on all objects of a type $T$. That method may read and write fields of the object that it is bound to (\texttt{this}). The goal of parallel do-all is to assign objects to GPU threads in such a way that memory coalescing is maximized for those field accesses.

\begin{figure}
  \begin{minipage}[c]{0.52877004242\textwidth}
    \includegraphics[width=\textwidth]{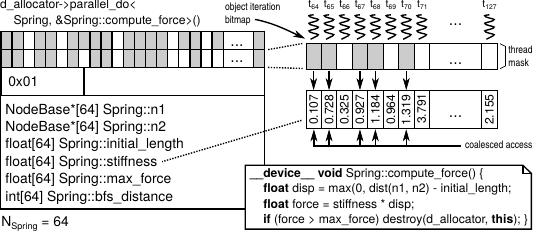}
  \end{minipage}\hfill
  \begin{minipage}[c]{0.43\textwidth}
  \caption{Thread Assignment Example. 64 threads with consecutive IDs (2 warps) are assigned to every allocated block of type \texttt{Spring}. Since not all object slots are in use, as indicated by the block iteration bitmap, some threads have no work to do. All other threads can benefit from memory coalescing when reading/writing fields of the object that they are assigned to.}
  \label{fig:thread_assignment_ex}
  \end{minipage}
\end{figure}

Memory coalescing is maximized when all threads of a warp access consecutive memory addresses at the same time. In this case, all those memory accesses can be serviced by efficient vector loads/writes. In CUDA, threads are identified by thread IDs. Each warp consists of a consecutive range of threads. E.g., warp 0 consists of threads $t_0, t_1, \ldots t_{31}$. Assuming a block capacity of $N_T$, \soaalloc{} assigns $N_T$ consecutive threads to the objects in a block (Fig.~\ref{fig:thread_assignment_ex}). This leads to good memory coalescing on average. Perfect memory coalescing can be achieved if the following two conditions apply.

\begin{itemize}
  \item $N_T$ is a multiple of the warp size 32. If this is not the case, then there are warps whose threads process elements in two or more different blocks at the same time.
  \item Objects have good clustering, i.e., every block except for at most one is entirely full. Due to the way objects are allocated (only in active blocks), we expect a high fill level.
\end{itemize}

%$N_T$ is the number of object slots in a block and ``\%" means \emph{modulo}.
\soaalloc{} uses the block allocation bitmap to find blocks to which threads should be assigned. Assigning only one object to a thread is too inefficient if the number of objects is large. Therefore, a thread $t_\mathit{tid}$ may have to process an object slot in multiple blocks. Our scheduling strategy always assigns the same object slot position $\mathit{id}_O(\mathit{tid})$, but in multiple blocks $\mathit{id}_B(\mathit{tid})$ (strided by the number of threads~\cite{cuda_grid_stride}), to a thread. In those formulas, $R$ is an array of indices of all allocated blocks of type $T$, i.e., all blocks containing objects of type $T$. The total number of threads $n$ can be hand-tuned by the programmer. With those formulas, every thread can by itself determine the objects that it should process.
%\begin{align*}
%\mathit{id}_O = \mathit{tid} \mod \,\,\, N_T \tag{\emph{object slot index}}
%\end{align*}
%\begin{align*}
%\mathit{id}_B = \left( R\left[\frac{\mathit{tid}}{N_T}\right], R\left[\frac{\mathit{tid}+n}{N_T}\right], R\left[\frac{\mathit{tid}+2n}{N_T}\right], \ldots \right) \tag{\emph{sequence of block indices}}
%\end{align*}

\begin{figure}
  \begin{minipage}[c]{0.44187973217\textwidth}
    \includegraphics[width=\columnwidth]{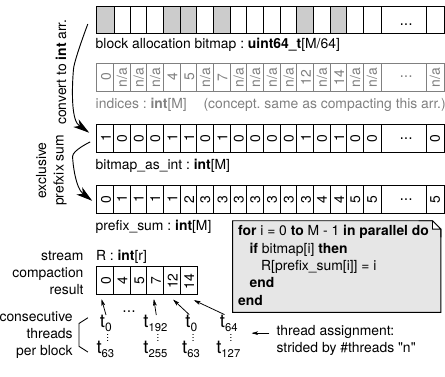}
  \end{minipage}\hfill
  \begin{minipage}[c]{0.558\textwidth}
\begin{align*}
\mathit{id}_O(\mathit{tid}) = \mathit{tid} \,\,\, \% \,\,\, N_T \tag{\emph{assigned obj. slot idx.}}
\end{align*}
\begin{align*}
\mathit{id}_B(\mathit{tid}) = \left(\hspace{-0.1cm}R\hspace{-0.1cm}\left[\frac{\mathit{tid} + k \cdot n}{N_T}\right] \left.\right\vert k \in [0; \mathit{num}_B(\mathit{tid}))\hspace{-0.1cm}\right) \tag{\emph{assigned block indices}}
\end{align*}
\begin{align*}
\mathit{num}_B(\mathit{tid}) =  \left\lceil \frac{r\cdot N_T - \mathit{tid}}{n} \right\rceil \tag{\emph{number of assigned blocks}}
\end{align*}
  \end{minipage}
      \caption{Example: Compacting block allocation bitmap indices and assigning $n=256$ threads to 6 allocated blocks with $N_T = 64$. This prefix sum-based implementation retains the order of indices (i.e., $R$ is sorted), but this is not necessary for correctness.
    } \label{fig:scan_enumeration}
\end{figure}

The array $R$ is required because every thread must by itself find the $\frac{\mathit{tid}}{N_T}$-th, $\frac{\mathit{tid}+n}{N_T}$-th, etc. allocated block of type $T$ quickly, without scanning the entire block allocation bitmap. \textsc{DynaSOAr} precomputes $R$ before every parallel-do operation (Fig.~\ref{fig:scan_enumeration}). Conceptually, this is an application of stream compaction~\cite{IJNC151} and usually implemented with a prefix sum~\cite{Sengupta06awork-efficient, Billeter:2009:ESC:1572769.1572795}: Given a bitmap of size $M$, generate an \emph{indices} array of size $M$ containing $i$ at position $i$ if the $i$-th bit is set. Otherwise, store an \emph{invalid marker}. Now filter/compact the array to retain only valid values, resulting in an array $R$ of size $r$. Note that we do not care if the original ordering of indices is retained. Sec.~\ref{sec:hierarchical_bit} describes how this algorithm is further optimized with hierarchical bitmaps to avoid scanning empty bitmap parts.

%\subsection{Summary}
%This section gave a high-level overview of \soaalloc{}'s design and implementation. Details, including a more thorough explanation of the algorithms, are provided in the next sections.

%As a summary, objects are stored in a Structure of Arrays data layout inside blocks. A block is a container for objects and always has the same size to make memory addresses of field values easy to compute. A number of bitmaps keep track of object allocations within a block, as well as of metainformation of all blocks. Objects are referred to with object identifiers, which are embedded into C++ pointers. Objects are allocated and deallocated with lock-free algorithms that are based on atomic memory operations. Parallel do-all utilizes block allocation bitmaps to assign objects to GPU threads for good memory coalescing.

\section{Optimizations}
\label{sec:optimizations}
This section describes performance optimizations that \soaalloc{} applies in addition to the SOA data layout to achieve good (de)allocation performance.

\subsection{Hierarchical Bitmaps}
\label{sec:hierarchical_bit}
\soaalloc{} uses bitmaps for finding blocks or free space for blocks. Since, with growing heap sizes, bitmaps can reach several megabytes in size, we use a hierarchy of bitmaps, such that \emph{set} bits (ones) can be found with a logarithmic order of memory accesses.

Our hierarchical bitmaps are structurally recursive (i.e., bitmaps nested in each other) and hide their hierarchy as an implementation detail from their interface. Such bitmaps are used in database systems~\cite{10.1007/978-3-540-39403-7_19} and garbage collectors~\cite{Ueno:2011:ENG:2034773.2034802}, but we do not know of any hierarchical bitmaps that support concurrent modifications.

\subsubsection{Data Structure}
A hierarchical bitmap of size $N$ bits consists of two parts: an array of size $\lceil N/64\rceil$ of 64-bit \emph{containers} (\texttt{uint64\_t}), and a \emph{nested bitmap} of size $\lceil N/64 \rceil$ if $N > 64$. A container $C_i^l$ consists of bits $b_{64 \cdot i}^l$, ...,  $b_{64 \cdot i + 63}^l$ and is represented by one bit $b_{i}^{l+1}$ in the nested (higher-level) bitmap (Fig.~\ref{fig:bitmap_clear_example}). That bit is set if at least one bit is set in the container.

\begin{align*}
b_i^{l+1} = \bigvee_{k=0}^{63} b_{64\cdot i + k}^{l} \tag{\emph{container consistency}}
\end{align*}

We chose a container size of 64 bits because C++ has a 64-bit integer type and CUDA (and most other architectures) provide atomic operations for modifying 64-bit values. Bits in a container are changed with atomic operations. Higher-level bits (and thus bitmaps) are \emph{eventually consistent} with their containers. Keeping both consistent all the time is impossible without locking, because two different memory locations cannot be changed together atomically. However, due to the design of the bitmap operations, the bitmap is guaranteed to be in a consistent state when all bitmap operations (of all threads) are completed, at the end of a GPU kernel. Bitmap operations retry or give up (\emph{FAIL}) to handle temporary inconsistencies. This is a key difference compared to other lock-free hierarchical data structures such as SNZI~\cite{Ellen:2007:SSN:1281100.1281106}, which have stronger runtime consistency guarantees and require more complex algorithms.

\subsubsection{Operations}
\label{sec:bitmap_operations_412}
\begin{figure}[t]
%0.34621962598
  \begin{minipage}[c]{0.4\textwidth}
    \includegraphics[width=\columnwidth]{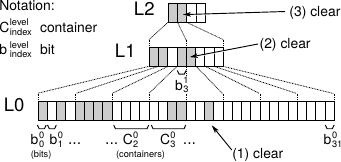}
  \end{minipage}\hfill
  \begin{minipage}[c]{0.55\textwidth}
    \caption{Example: Hierarchical bitmap of size 32 with container size 4 (instead of 64). This example illustrates how (1) a \emph{clear}(18) operation triggers (2) a \emph{clear}(4) operation in the nested bitmap, which triggers (3) a \emph{clear}(1) operation in the next nested bitmap.} \label{fig:bitmap_clear_example}
  \end{minipage}
  %\vspace{10pt}
\end{figure}
All bitmap operations except for \emph{indices()} are device functions that run entirely on the GPU. All operations that modify memory are thread-safe and their semantics are atomic. Internally, they are all implemented with atomic memory operations. %Internally, atomic bitwise \emph{AND} and \text{OR} memory operations are used to clear or set a specific bit.

\begin{itemize}
  \item $\texttt{try\_clear}(\texttt{pos})$ atomically sets the bit at position $\texttt{pos}$ to 0. Returns \emph{true} if the bit was 1 before and \emph{false} otherwise.
  \item $\texttt{clear}(\texttt{pos})$ \emph{switches} the bit at position $\texttt{pos}$ from 1 to 0. Retries until the bit was actually changed by the current thread. This is identical to \texttt{\textbf{while} (!try\_clear(pos)) \{\}}.

  %E.g, if the bit was already 0, this operation retries (blocks) until the bit could be changed from 1 to 0.
  \item $\texttt{set}(\texttt{pos})$ switches the bit at position $\texttt{pos}$ from 0 to 1. Retries until the bit was changed.
  \item $\texttt{try\_find\_set}()$ returns the position of an arbitrary bit that is set to 1 or \emph{FAIL} if none was found. Must be used with caution, because the returned bit position might already have changed when using the result.
  \item $\texttt{clear}()$ atomically clears and returns the position of an arbitrary set bit. This is identical to \texttt{\textbf{while} ((i = try\_find\_set()) != FAIL \&\& try\_clear(i)) \{\}; \textbf{return} i;}.
  \item $\texttt{get}(\texttt{pos})$ returns the value of the bit at position $\texttt{pos}$.
  \item $\texttt{indices}()$ returns an array of indices of all set bits. This is a host function and cannot be used in a GPU kernel.
\end{itemize}

%\RestyleAlgo{boxruled}
%\LinesNumbered
\SetKwComment{Comment}{$\triangleright$\ }{}
\begin{algorithm}[t]
\small
\begin{multicols}{2}
 cid $\gets$ pos / 64\;
 offset $\gets$ pos \% 64\;
 mask $\gets$ 1 {<}{<} offset\;
% \Repeat{\emph{success}}{
 prev $\gets$ \emph{atomicAnd}(\&container[cid], $\sim$mask)\;
 success $\gets$ (prev \& mask) $\not=$ 0\;
% }

\If{\hspace{-4pt} \narrowstyle success $\wedge$ has\_nested $\wedge$ {popc}(\emph{prev})\hspace{-0.17cm}\tikz[remember picture] \node [] (d){};\hspace{-2pt} \emph{=\,1}}{
  nested.\emph{clear}(cid)\;
}
\begin{tikzpicture}[remember picture, overlay,
    every edge/.append style = { ->, thick, >=stealth,
                                  dashed, line width = 1pt },
    every node/.append style = { align = center,
                                 font=\sffamily\small, fill= gray!20},
                  text width = 1.9cm ]
  \node [notice={(-0.2,0.15)}, below right = 0.3cm and -1.5cm of d]  (D) {\mbox{\begin{varwidth}{1.9cm}\narrowstyle \scriptsize \textbf{population cnt.}: number of set bits\end{varwidth}}};

\end{tikzpicture} \hspace{-0.17cm} \textbf{return } success;
\end{multicols}
 \caption{Bitmap::try\_clear(pos) : void  \hfill \fbox{GPU}}
 \label{alg:clear}
\end{algorithm}

\begin{algorithm}[t]
\small
\begin{multicols}{2}
\uIf{has\_nested}{
  cid $\gets$ nested.\emph{try\_find\_set}()\;
  \lIf{\emph{cid} = FAIL}{
    \Return \emph{FAIL}
  }
}\Else{
  cid $\gets$ 0\;
}
offset $\gets$ \emph{ffs}\tikz[remember picture] \node [] (d){};\hspace{-0.17cm}(container[cid])\;
\eIf{offset = \emph{NONE}}{
  \Return \emph{FAIL}
}{
\begin{tikzpicture}[remember picture, overlay,
    every edge/.append style = { ->, thick, >=stealth,
                                  dashed, line width = 1pt },
    every node/.append style = { align = center,
                                 font=\sffamily\small, fill= gray!20},
                  text width = 1.79cm ]
  \node [notice={(-0.3,0.075)}, below right = 0.0cm and 2.3cm of d]  (D) {\mbox{\begin{varwidth}{1.79cm}\narrowstyle \scriptsize \textbf{find first set}: idx. of 1\textsuperscript{st} set bit\end{varwidth}}};

\end{tikzpicture}
  \Return 64*cid + offset\;
}

\end{multicols}
 \caption{Bitmap::try\_find\_set() : int \hfill \fbox{GPU}}
 \label{alg:top_down_tr}
\end{algorithm}
\begin{algorithm}[t]
\small
\begin{multicols}{2}
\eIf{has\_nested}{
  selected $\gets$ nested.\emph{indices}()
}{
  selected $\gets$ [0]
}
R $\gets$ array(N)\;
r $\gets$ 0\;
\For({\,\,\,$\triangleright$ \fbox{GPU}}){$\mbox{cid} \in \mbox{selected}$ \emph{\textbf{in parallel}}}{
  c $\gets$ container[cid]\;
  s $\gets$ \emph{atomicAdd}(\&r, \emph{popc}(c))\;
  \For{$i \gets 0$ \KwTo \emph{\emph{popc}(c))}}{
    R[s + i] $\gets$ 64*cid + \emph{nth\_bit}\tikz[remember picture] \node [] (d){};\hspace{-0.17cm}(c, i)\;
  }
}
 \begin{tikzpicture}[remember picture, overlay,
    every edge/.append style = { ->, thick, >=stealth,
                                  dashed, line width = 1pt },
    every node/.append style = { align = center,
                                 font=\sffamily\small, fill= gray!20},
                  text width = 1.2cm ]
  \node [notice={(-0.15,0.125)}, below left = 0.25cm and -0.9cm of d]  (D) {\mbox{\begin{varwidth}{1.2cm}\narrowstyle \scriptsize idx. of $i$\textsuperscript{th} set bit in $c$\end{varwidth}}};

\end{tikzpicture} \hspace{-0.17cm} \Return R.\emph{subarray}(0, r)\;
\end{multicols}
 \caption{Bitmap::indices() : int[N] \hfill \fbox{CPU}}
 \label{alg:indices}
\end{algorithm}

\subsubsection{Set and Clear with Atomic Operations}
As many other lock-free algorithms, our hierarchical bitmaps are based on a combination of atomic operations and retries~\cite{doi:10.1002/9781119332015.ch3}. The return value of an atomic operation indicates if a bit was actually changed and if it is this thread's responsibility to update the higher-level bitmap (Fig.~\ref{fig:bitmap_clear_example}).

As an example, Alg.~\ref{alg:clear} shows how to clear the bit at position \texttt{pos}. In Line~4, the respective container is bit-ANDed with a mask containing ones everywhere except for that position. This will clear the bit at position \texttt{pos} but leave all other bits unchanged. The current thread actually changed the bit if it is set in \texttt{prev} (Line~5). If this operation cleared the last bit (Line~6), then the bit in the higher-level bitmap must be cleared.

Note that higher-level bits are always changed with \emph{clear(pos)}/\emph{set(pos)} and not with their respective \emph{try\_} versions, because other concurrently running bitmap operations that are still in process may not have updated all higher-level bitmaps yet, leaving the data structure in a temporarily inconsistent state. If we were to use \emph{try\_} versions, a mandatory update of the higher-level bitmap could be accidentally dropped due to a bitmap inconsistency. \emph{clear(pos)}/\emph{set(pos)} ensure that the update is performed eventually by retrying (and spin-blocking the thread) until the update was successful.

\subsubsection{Finding an Arbitrary Set Bit}
Instead of scanning the entire L0 bitmap, set bits can be found faster with a top-down traversal of the bitmap hierarchy, as shown in Alg.~\ref{alg:top_down_tr}. A request is first delegated to the higher-level bitmap (Line~2) to select a container. When that call returns, a set bit is chosen in the selected container (Line~6).

Even if the bitmap has set bits, this operation can fail if it reads an inconsistent combination of containers from different hierarchy levels. For example, consider that a container with exactly one set bit is chosen by the recursive call. However, before reaching Line~6, another thread clears that bit as part of a concurrent bitmap operation. In that case, \emph{try\_find\_set} fails even though there may be set bits in other containers.

\textsc{DynaSOAr}'s performance is affected by such bitmap inconsistencies when searching for active blocks (Alg.~\ref{alg:alloc_algo}, Line~2). While bitmap inconsistencies do not affect correctness, they lead to higher fragmentation because \textsc{DynaSOAr} will initialize additional blocks even though objects could be accommodated in already existing blocks. We analyze the effect of such bitmap inconsistencies in our benchmarks (Sec.~\ref{sec:detailed_analysis_wator_s}).

\subsubsection{Enumerating Set Bit Indices}
\label{sec:enumerating_set_bit_indices}
Before launching a parallel do-all kernel, \soaalloc{} uses the \emph{indices} operation to generate a compact array of allocated block indices ($R$ in Fig.~\ref{fig:scan_enumeration}). No GPU code is running at this time, so the bitmap is guaranteed to be in a consistent state. To ensure good scaling with increasing heap sizes, and thus increasing block bitmap sizes, \soaalloc{} utilizes the bitmap hierarchy to quickly skip containers without any set bits (Alg.~\ref{alg:indices}).

First, an index array is generated for the higher-level bitmap (Line~2). This array is then processed in parallel; the \emph{for} loop in Line~7 is a GPU kernel and every thread processes one or multiple containers selected by the recursive call. If a container $C_i^l$ does not have any set bits, then its corresponding bit $b_i^{l+1}$ is in a cleared state in the higher-level bitmap and not included in \texttt{selected}. Every thread reserves space in the result array $R$ by increasing an atomic counter and fills its portion of the array with bit indices. This algorithm proved to be faster and requires less memory than a prefix sum algorithm, which needs multiple array copies/buffers per bitmap. Note that, in contrast to the prefix sum-based implementation of Sec.~\ref{sec:par_do_all_sec3}, this algorithm does not retain the order of indices and $R$ and is not sorted.

%CUDA programs that use \soaalloc{} must perform such address computations on every field access in application code. Such code should ideally be generated automatically by a C++/CUDA preprocessor such as ROSE. Alternatively, previous work has shown how custom data can be embedding in C++ pointers with a combination of type punning and implicit type conversions~\cite{Springer:2018:ICD:3178433.3178439}; this is the approach that \soaalloc{} is taking at the moment. As a consequence, \soaalloc{} can be used with the standard CUDA compiler and the set of required modifications to application code is restricted to field declarations in classes.

\subsection{Reducing Thread Contention}
\label{sec:less_allocation_cont}
In Alg.~\ref{alg:top_down_tr} and~\ref{alg:block_allocate}, threads are competing with each other for bits: Only one thread can reserve any given object slot and only a limited number of threads can succeed with allocations in a block. To guarantee correctness, our design is heavily based on atomic operations. These operations became considerably faster with recent GPU architectures~\cite{DeGonzalo:2019:AGW:3314872.3314884, warp_aggre}, but performance can still suffer when too many threads choose the same bit, because threads have to retry if allocation fails. \soaalloc{} employs two techniques to reduce such thread contention. %to find set (or cleared) bits, in order to find an active block or to reserve an object slot for allocation

\begin{description}
\item [Allocation Request Coalescing] Originally proposed by XMalloc~\cite{5577907}, \soaalloc{} combines memory allocation requests of the same type within a warp. One \emph{leader thread} reserves all object slots in a single block on behalf of all participating threads (optimized version of Alg.~\ref{alg:block_allocate}). If the selected active block does not have enough free object slots, \soaalloc{} reserves as many slots as possible and then chooses another active block for the remaining allocation requests. This reduces atomic memory operations, because multiple bits in an object allocation bitmap are set in one operation. Furthermore, the constructor for newly allocated objects can run more efficiently, because field accesses are coalesced.

\item [Bitmap Rotation] Instead of a plain \emph{find first set} (ffs) in Alg.~\ref{alg:top_down_tr} and~\ref{alg:block_allocate}, bitmaps are first rotating-shifted by a value depending on the warp ID and a seed that is changed with every retry. This increases the probability of threads choosing different active blocks for allocation and reduces the probability of threads trying to reserve the same object slots in a block. This is a key optimization technique that improved performance by an order of magnitude. 
\end{description}

While bitmap traversals are relatively cheap, block initializations are expensive because in addition to initializing object bitmaps, bits in three different bitmaps (plus hierarachy) must be changed (slow path of Alg.~\ref{alg:block_allocate}). To avoid unnecessary block initializations, it proved beneficial to retry the search for active blocks (Line~2) a constant number of times before entering the slow path. This optimization resulted in lower fragmentation and improved performance.

%Regardless of the number of threads and allocation contention, \soaalloc{} does the same number of traversals through the active block bitmap hierarchy to find an active block, before allocating a new block. In case of high contention, there is a higher chance of \emph{not} finding an active block, upon which a new block is initialized, potentially increasing fragmentation. Therefore, \soaalloc{} trades increased fragmentation for near-perfect scaling w.r.t. the number of concurrent allocations during allocation spikes.

\subsection{Efficient Bit Operations}
%\RestyleAlgo{boxruled}
%\LinesNumbered
\SetKwComment{Comment}{$\triangleright$\ }{}
\begin{algorithm}[t]
\small
\begin{multicols}{2}
  \Repeat{\emph{success} $\vee$ \emph{block\_full}} {
    pos $\gets$ \emph{ffs}($\sim$bitmap)\;
    \lIf{\emph{pos} = NONE}{
      \Return \emph{FAIL}
    }
    mask $\gets$ 1 {<}{<} pos\;
    before $\gets$ \emph{atomicOr}(\&bitmap, mask)\;
    success $\gets$ (before \& mask)) = 0\;
    block\_full $\gets$ before = 0xFF...F\;
  }
  \If{{success}}{
    \eIf{popc(\emph{before}) \emph{= 63}}{
      \Return (pos, \emph{FULL})
    }{
      \Return (pos, \emph{REGULAR})
    }
  }
\Return \emph{FAIL}\;
\end{multicols}
\vspace{0.1cm}
 \caption{Block::reserve() : (int, state) \hfill \fbox{GPU}}
 \label{alg:block_allocate}
\end{algorithm}

\soaalloc{} is taking advantage of efficient bitwise operations such as \emph{ffs} (``find first set'') and \emph{popc} (``population count''). Modern CPU and GPU architectures have dedicated instructions for such operations. As an example, Alg.~\ref{alg:block_allocate} shows how a single object slot is reserved. Instead of checking all bits in a loop, \emph{ffs} in Line~2 is used to find a free slot (index of a cleared bit) in the object allocation bitmap and \emph{popc} in Line~10 counts the number of previously allocated slots (number of set bits) to decide if this request filled up the block.

As another example, due to allocation request coalescing, every thread must now extract its reserved object slot from a set of allocations performed by a leader thread on behalf of the entire warp. This boils down to finding the $i$-th set bit in a 64-bit bitmap $b$ of newly reserved object slots, where $i$ is the rank of a thread among all allocating threads in the warp. Instead of checking every bit in $b$ one-by-one (loop with 64 iterations in the worst case) and keeping track of the number of set bits seen so far, we apply \emph{b $\gets$ b \& (b - 1)} in a loop $i - 1$ times (to clear the first $i - 1$ bits) and then calculate \emph{ffs}(\emph{b}). We omit the details of this optimization here, as it is only one example for a variety of similar low-level optimizations.

%before finding a bit with ffs in order to reduce the probability of multiple threads trying to reserve the same bit. A modulus offset is then added to the result of ffs to get a valid bit position (shifting the result back). This second optimization is also applied to the implementation of \emph{try\_find\_set()} of hierarchical bitmaps.

%ffs, popc, rotating bitmasks, passing number of retries as seed.
%\vspace{-0.25cm}

\section{Related Work}
\label{sec:related_work}
% Other parallel do: http://cacs.usc.edu/education/cs653/Edwards-Kokkos-JPDC14.pdf, also SoAx
CUDA provides an on-device dynamic memory allocator, but it is unoptimized and slow. To solve this issue, multiple custom allocators have been developed over the last years. These allocators achieve good performance by exploiting an allocation pattern that many applications on massively parallel SIMD architectures exhibit: Most allocations are small in size and due to mostly regular control flow, many allocations have the same byte size.

Halloc~\cite{hallocweb} is one of these allocators. It is a slab allocator and can allocate only a few dozen predetermined byte sizes between 16~bytes and 3~KB. This is fast but can lead to internal fragmentation. \textsc{DynaSOAr} can avoid such internal fragmentation because allocation sizes are determined from compile-time type information of the application. A slab in Halloc contains same-size allocations and tracks allocations with a bitmap. To avoid scanning large bitmaps, a hash function determines which bits to check during allocations. Only one slab can be active per allocation size and if the active slab becomes too full, it is replaced with a new one. In contrast, more than one block per type can be active in \textsc{DynaSOAr} and blocks are filled up entirely.

XMalloc~\cite{5577907} is the first allocator with allocation request coalescing, which was adopted by many other allocators, including \textsc{DynaSOAr}. Coalesced requests are served from \emph{basicblocks}, which are organized in one of multiple lock-free free lists depending on their size.

FDGMalloc maintains a private heap for every warp~\cite{Widmer:2013:FDM:2458523.2458535}, similar to Hoard~\cite{Berger:2000:HSM:378993.379232}. It does not have a general \emph{free} operation and can only deallocate entire heaps, so it is not suitable for SMMO applications.

CircularMalloc (CMalloc)~\cite{Vinkler:2015:RED:3071494.3071506} allocates memory in a ring buffer. Every allocation has a pointer to the next allocation or free chunk, wrapping around at the end of the buffer. CMalloc traverses the linked list for free chunks during allocations. To reduce allocation contention, every multiprocessor starts its traversal at a different location. This is similar to \textsc{DynaSOAr}'s bitmap rotation optimization.

ScatterAlloc~\cite{6339604} hashes allocation requests to memory \emph{pages} depending on their allocation size and the multiprocessor ID. Pages hold allocations of the same size, but slightly smaller requests can be accommodated, leading to internal fragmentation. While \soaalloc{} uses hierarchical bitmaps, ScatterAlloc uses hashing with linear probing for finding pages during allocations. For benchmarks, we use mallocMC~\cite{eckert_carlchristian_helmut_johannes_2014_34461}, a reimplementation of ScatterAlloc that is still maintained.

%Halloc~\cite{hallocweb} is a slab allocator. It maintains regions (\emph{slabs}) of equally sized objects for object sizes from 16~bytes to 3~KB. Larger objects are allocated with the CUDA allocator. Slabs are similar to blocks in \soaalloc{}, but larger. Each slab has a bitmap for tracking allocations (similar to object allocation bitmaps). To avoid scanning large bitmaps, Halloc uses a hash function that determines which bits to check. Only one slab of a certain size can be active and if a slab reaches a certain fill level, it is replaced with another one (using another bitmap). This technique is similar to \soaalloc{}'s block management, but in \soaalloc{} more than one block can be active (per type) and blocks are filled up entirely before they become inactive.

Both Halloc and ScatterAlloc maintain fill levels to quickly skip congested memory areas that are above a certain threshold, because the performance of any hashing technique degrades with an increasing number of collisions. In \soaalloc{}, temporary inconsistencies in bitmap hierarchies increase with the number of concurrent allocations, but \soaalloc{} can dynamically adapt to such cases by initializing additional blocks.

%\section{SMMO Execution Model}
%To underline the importance of SMMO-style applications in high-performance computing, most notably in the area of agent-based simulations, we describe a few prominent examples.

\begin{table}[!htp]
\caption{Description of Benchmark Applications}
%\vspace{-10pt}
\label{tab:smmo_apps}
\setlength\tabcolsep{4pt}
\begin{tabularx}{\textwidth}{cXllllll}
\hline\hline
 \narrowstyle & \footnotesize \textbf{Benchmark Description} & \footnotesize \rotatebox[origin=l]{90}{\narrowstyle\textbf{\#par. do-all}} & \footnotesize \rotatebox[origin=l]{90}{\textbf{\narrowstyle\#classes}\,\,\,\,} & \footnotesize \rotatebox[origin=l]{90}{\narrowstyle\textbf{alloc./dealloc.}} & \footnotesize \rotatebox[origin=l]{90}{\narrowstyle\textbf{smallest class\,\,\,\,}} & \footnotesize \rotatebox[origin=l]{90}{\narrowstyle\textbf{largest class}} \\
\Xhline{2\arrayrulewidth}
\narrowstyle \raisebox{-0.85\totalheight}{\includegraphics[width=0.175\textwidth]{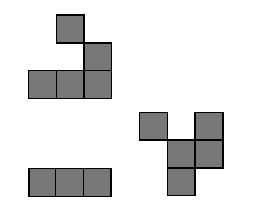}} &
\footnotesize \textbf{Game of Life:} A cellular automaton due to J. H. Conway. This version has a time complexity of O(\#alive cells) instead of the standard O(\#cells) algorithm. Cells can be dead, alive or alive-candidates. Alive-candidates are dead cells that may become active in the next iteration. Only alive-candidates and alive cells are processed with parallel do-all operations.
& \footnotesize\rotatebox[origin=r]{90}{4 / iteration} & \footnotesize\rotatebox[origin=r]{90}{4 (2 dyn.)} & \footnotesize\rotatebox[origin=r]{90}{\ding{51} / \ding{51}} & \footnotesize\rotatebox[origin=r]{90}{5B, 2 fields} & \footnotesize\rotatebox[origin=r]{90}{8B, 1 field} \\
\hline
\narrowstyle\raisebox{-0.9\totalheight}{\includegraphics[width=0.175\textwidth]{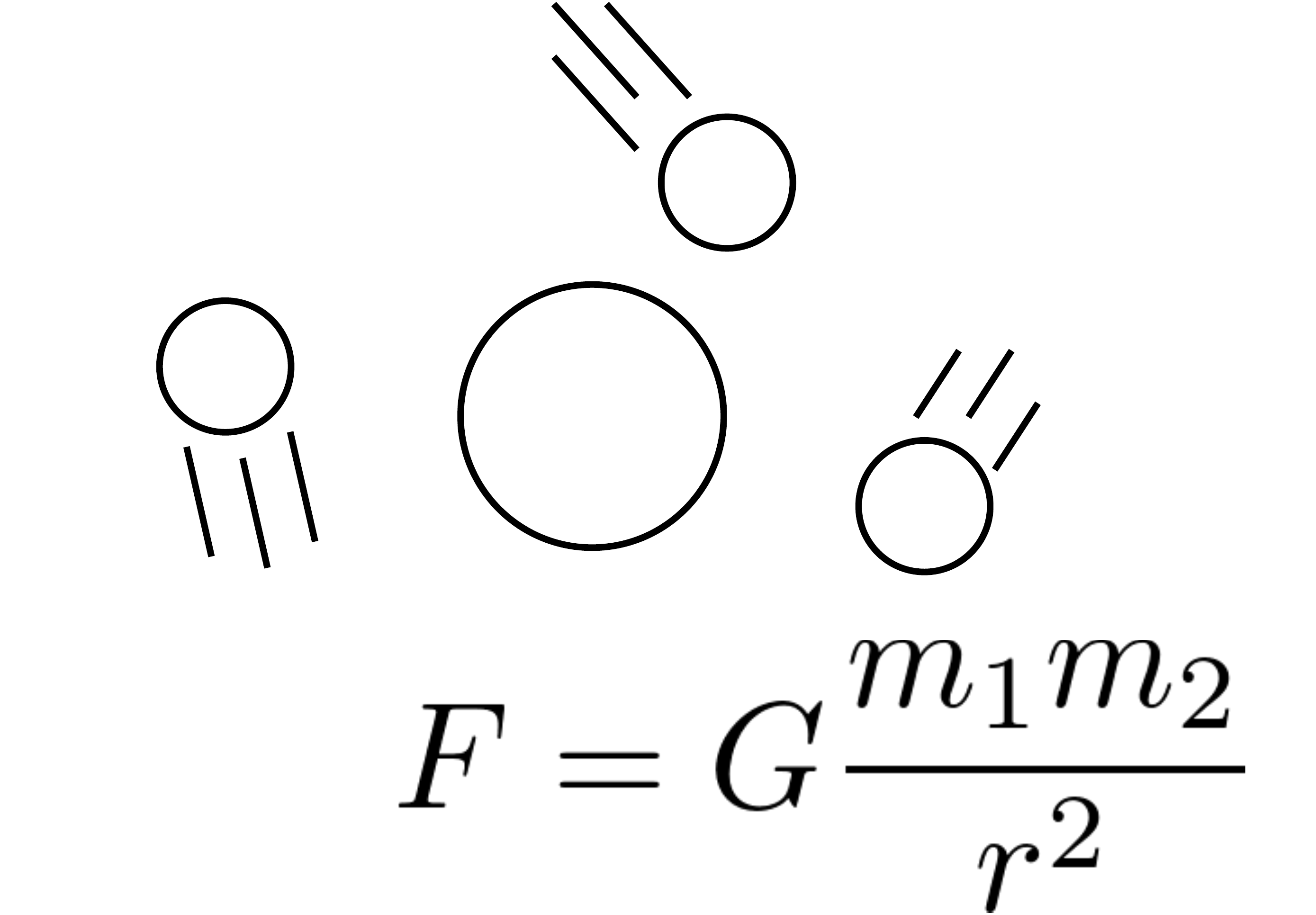}} & \footnotesize \textbf{N-Body:} Simulates the movement of particles according to gravitational forces. A \texttt{device\_do} operation is required to calculate (and then sum up) the gravitational force between every pair of particles. All objects are allocated upfront. This benchmark has no dynamic object (de)allocation. & \footnotesize\rotatebox[origin=r]{90}{2 / iteration}  & \footnotesize \rotatebox[origin=r]{90}{1 (0 dyn.)} & \footnotesize\rotatebox[origin=r]{90}{\ding{55} / \ding{55}} & \footnotesize\rotatebox[origin=r]{90}{\,\,\,28B, 7 fields} & \footnotesize\rotatebox[origin=r]{90}{(same)} \\
\hline
\narrowstyle\raisebox{-0.9\totalheight}{\includegraphics[width=0.175\textwidth]{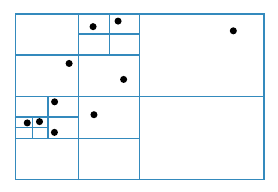}} & \footnotesize \textbf{Barnes-Hut:} An extension of N-Body in which bodies are stored in a quad tree~\cite{BURTSCHER201175} (2D), to evaluate \textsc{DynaSOAr} with dynamic tree data structures. The running time is dominated by the construction/maintenance (i.e., frequent node inserts/removals) of the quad tree via parallel top-down/bottom-up tree traversals. Tree nodes are dynamically (de)allocated.
& \footnotesize\rotatebox[origin=r]{90}{10 / iteration} & \footnotesize\rotatebox[origin=r]{90}{3 (1 dyn.)} & \footnotesize\rotatebox[origin=r]{90}{\ding{51} / \ding{51}} & \footnotesize\rotatebox[origin=r]{90}{68B, 9 fields} & \footnotesize\rotatebox[origin=r]{90}{\,\,\,102B, 12 fields}\\
\hline
\raisebox{-0.95\totalheight}{\includegraphics[width=0.175\textwidth]{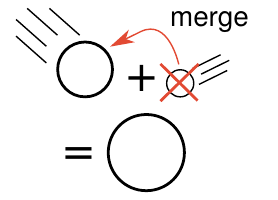}} & \footnotesize \textbf{Particle Collisions:} Similar to N-Body, but particles are merged according to perfectly inelastic collision when they are getting too close. The number of particles decreases gradually. This benchmark has dynamic object deallocation but no dynamic object allocation. & \footnotesize\rotatebox[origin=r]{90}{6 / iteration} & \footnotesize\rotatebox[origin=r]{90}{1 (1 dyn.)} & \footnotesize\rotatebox[origin=r]{90}{\ding{55} / \ding{51}} & \footnotesize\rotatebox[origin=r]{90}{\,\,\,38B, 10 fields} & \footnotesize\rotatebox[origin=r]{90}{(same)} \\
\hline
\raisebox{-0.88\totalheight}{\includegraphics[width=0.175\textwidth]{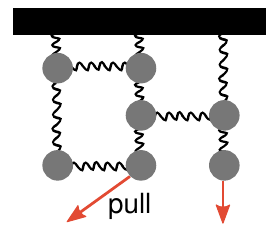}} &\footnotesize \textbf{Structure:} Simulates a fracture in a composite material, modeled as a FEM. Intuitively, the mesh is a graph and edges between nodes are springs. When pulling the mesh on one side, the material starts to break eventually. Isolated nodes are detected with a BFS~\cite{Harish:2007:ALG:1782174.1782200} and removed. Literature describes extensions that would benefit from dynamic allocation~\cite{LU2018240}. & \footnotesize\rotatebox[origin=r]{90}{3 / iteration} & \footnotesize\rotatebox[origin=r]{90}{5 (4 dyn.)} & \footnotesize\rotatebox[origin=r]{90}{\ding{55} / \ding{51}} & \footnotesize\rotatebox[origin=r]{90}{32B, 6 fields} & \footnotesize\rotatebox[origin=r]{90}{46B, 7 fields} \\
\hline
\raisebox{-0.85\totalheight}{\includegraphics[width=0.175\textwidth]{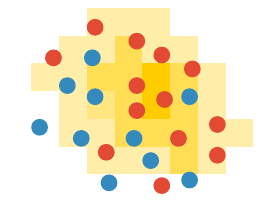}} & \footnotesize \textbf{Sugarscape:} An agent-based social simulation~\cite{RePEc:mtp:titles:0262550253}. Agents inhabit a 2D grid and can move to neighboring cells. Cells contain sugar which is consumed by agents. Sugarscape can simulate a variety social dynamics (e.g., trade, war, environmental pollution). Our simulation is quite simple. We simulate resource consumption, ageing and mating.  & \footnotesize\rotatebox[origin=r]{90}{12 / iteration} & \footnotesize\rotatebox[origin=r]{90}{4 (2 dyn.)} &  \footnotesize\rotatebox[origin=r]{90}{\ding{51} / \ding{51}} & \footnotesize\rotatebox[origin=r]{90}{52B, 7 fields} & \footnotesize\rotatebox[origin=r]{90}{\,\,\,74B, 11 fields} \\
\hline
\raisebox{-0.85\totalheight}{\includegraphics[width=0.175\textwidth]{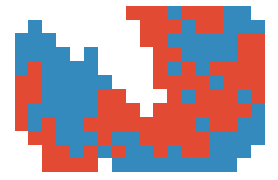}} & \footnotesize \textbf{Wa-Tor:} An agent-based predator-prey simulation~\cite{10.2307/24969495}. Fish/sharks occupy a 2D grid of cells and can move to neighboring cells. Fish and sharks reproduce after some iterations. Fish die when they are eaten and sharks starve to death when they run out of food. & \footnotesize\rotatebox[origin=r]{90}{8 / iteration} & \footnotesize\rotatebox[origin=r]{90}{4 (2 dyn.)} & \footnotesize\rotatebox[origin=r]{90}{\ding{51} / \ding{51}} & \footnotesize\rotatebox[origin=r]{90}{60B, 4 fields} & \footnotesize\rotatebox[origin=r]{90}{\,\,\,64B, 5 fields} \\
\hline
\raisebox{-0.9\totalheight}{\includegraphics[width=0.175\textwidth]{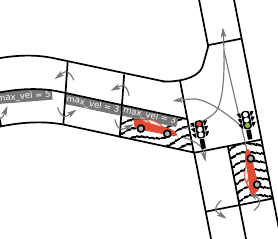}} & \footnotesize \textbf{Nagel-Schreckenberg:} A traffic flow simulation on a street network~\cite{nagel_schr}. This simulation can reproduce traffic jams and other real-world traffic phenomena. Streets are modeled as a network of cells, with at most one vehicle per cell. New vehicles are continuously added to the simulation and existing vehicles are removed at their final destination. & \footnotesize\rotatebox[origin=r]{90}{3 / iteration}  & \footnotesize\rotatebox[origin=r]{90}{4 (1 dyn.)} & \footnotesize\rotatebox[origin=r]{90}{\ding{51} / \ding{51}} & \footnotesize\rotatebox[origin=r]{90}{97B, 10 fields} & \footnotesize\rotatebox[origin=r]{90}{124B, 6 fields}\\
\hline
& \footnotesize \textbf{Linux Scalability:} Not an SMMO application. This microbenchmark allocates, then deallocates a fixed number of same-size objects in each thread, without ever accessing the memory~\cite{Lever:2000:MPM:1267724.1267780}. & \footnotesize \rotatebox[origin=r]{90}{n/a} & \footnotesize\rotatebox[origin=r]{90}{1 (1 dyn.)} & \footnotesize\rotatebox[origin=r]{90}{\ding{51} / \ding{51}} & \footnotesize\rotatebox[origin=r]{90}{\,\,\,4B, 1 field} & \footnotesize\rotatebox[origin=r]{90}{(same)} \\
\hline\hline
\end{tabularx}
\setlength\tabcolsep{6pt}
\end{table}

\section{Benchmarks}
\label{sec:benchmark}
We evaluated \soaalloc{} with multiple real-world SMMO applications that exhibit different memory allocation patterns (Table~\ref{tab:smmo_apps}). We ran all benchmarks on a computer with an Intel Core i7-5960X CPU, 32~GB main memory and an NVIDIA TITAN Xp GPU (12~GB device memory), and compiled them with nvcc (-O3) from the CUDA Toolkit~9.1 on Ubuntu~16.04.4.

We compare the running time with different allocators. If possible, we also measured the running time of \emph{baseline} implementations that do not use any dynamic memory management.

\paragraph*{Benchmark Applications}
We describe all benchmarks and their implementation in detail (incl. their SMMO structure) on GitHub\footnote{\url{https://github.com/prg-titech/dynasoar/wiki/Benchmark-Applications} (also see artifact)}. Our benchmarks are from different domains and fall into four categories.

\begin{enumerate}
  \item Objects allocated up front, no deallocation: \textsf{nbody}
  \item Objects allocated up front, then only deallocation: \textsf{collision}, \textsf{structure}
  \item Cellular automaton (CA) with static cells network: \textsf{sugarscape}, \textsf{traffic}, \textsf{wa-tor}
  \item Other: \textsf{barnes-hut}, \textsf{game-of-life}
\end{enumerate}

Baselines (SOA/AOS) are application variants without any dynamic memory allocation. Baselines of category (1) are trivial to implement with static allocation. In category (2), every object has a boolean \texttt{active} flag to prevent deleted objects from being enumerated in the future. In category (3), classes are merged with the underlying static cell data structure, which wastes memory in case of empty cells (Sec.~\ref{sec:bench_mem_usage_sec}). Category (4) applications cannot be implemented with static allocation, unless the application is changed fundamentally.

\paragraph*{Parallel do-all in Custom Allocators}
Other allocators do not provide do-all operations, which are required for SMMO applications. To compare \soaalloc{} with other allocators, we developed standalone \texttt{parallel\_do} and \texttt{device\_do} implementations that can be used with any allocator.

%\begin{figure}[!t]
%  \includegraphics[width=\textwidth]{}
%  \caption{Example: Removing deleted objects from array of allocations.}
%  \label{fig:compaction_adapter}
%\end{figure}

These components maintain arrays for allocated and deleted objects of each type. Pointers are inserted into these arrays with atomic operations. At the end of a parallel do-all operation, deleted pointers are removed from the array of allocated pointers. Then, the array of allocated pointers is compacted with a prefix sum operation (same as Fig.~\ref{fig:scan_enumeration}).

Depending on the number of (de)allocations, this mechanism may take a long time. A better allocator-specific mechanism could likely be developed with some reverse engineering. For that reason, we show the amount of time spent on \emph{parallel enumeration}. This time should not be taken into account when comparing the performance of different allocators.

\paragraph*{BitmapAlloc}
To analyze the performance of pure bitmap-based object allocation without SOA layout, blocks and fake pointers, we developed a second allocator \emph{BitmapAlloc}. This allocator treats the entire heap as one large object array, whose slots are managed by hierarchical bitmaps, similarly to \soaalloc{}: one \emph{allocation bitmap} per type and one \emph{free slot bitmap}. Allocation bitmaps are also used for \texttt{parallel\_do} and \texttt{device\_do}.

The main downside of BitmapAlloc is its inefficient memory usage. It supports only a single allocation size, potentially leading to high internal fragmentation.

\subsection{Performance Overview}
Fig.~\ref{fig:bench_overview} shows the running time of all benchmarked SMMO applications. \textsc{DynaSOAr} achieves superior performance over other allocators due to the SOA layout, a dense object allocation policy and an efficient parallel do-all operation.

All applications except for \textsf{structure} see a speedup by switching from AOS to SOA (compare baselines). In \textsf{structure}, most fields are used together, so SOA does not pay off.

Despite having no dynamic (de)allocation during the benchmark, \textsf{nbody} can see a slight speedup with dynamic memory allocation. This is likely due to fewer cache associativity collisions compared to a denser allocation in array~\cite{7853809}.

In \textsf{collision}, \soaalloc{}/BitmapAlloc enumerate objects with a bitmap scan of the object allocation bitmap (\texttt{device\_do}; 1 bit/object), more efficently than other allocators. Other allocators read objects pointers from an array (8 bytes/object). The baseline versions read an \texttt{active} flag (1 byte/object) from every object, including deleted ones. %Reading this field is a large overhead, considering that the number of objects drops by 90\% during the benchmark.

\textsf{game-of-life} and \textsf{wa-tor} are applications that (de)allocate a large number of objects, so enumeration takes a long time. \soaalloc{} and BitmapAlloc have much more efficient parallel-do operations than other allocators.

\textsf{sugarscape} and \textsf{wa-tor} exhibit a 2D grid structure of cells. Baseline versions take advantage of this geometric structure, leading to more coalesced memory access, while programmers have no control over where objects are placed in memory by dynamic allocators. For this reason, the baseline versions are faster than the versions with dynamic memory management. %However, the baseline versions waste memory (Section~\ref{sec:bench_mem_usage_sec}).

In general, in applications with dynamic memory management, objects are referred to with 64-bit object pointers, while all baseline versions use 32-bit integer indices. This penalizes especially benchmarks with small objects; their object sizes grow considerably just by switching from 32-bit integers indices to 64-bit pointers.

\begin{table}
\caption{Comparison of Allocators. \emph{Coal.} means \emph{Allocation Request Coalescing}.}
%\vspace{-10pt}
\label{fig:baselines}
\begin{tabularx}{\columnwidth}{Xcccc}
\hline \hline
 \footnotesize \narrowstyle \textbf{Allocator} & \footnotesize \narrowstyle \textbf{Coal.} & \footnotesize \textbf{SOA} & \footnotesize \textbf{Container} & \footnotesize \textbf{Finding Free Memory} \\
\Xhline{2\arrayrulewidth}
 \footnotesize \narrowstyle \soaalloc{} & \footnotesize\ding{51} & \footnotesize\ding{51} & \footnotesize Block & \footnotesize Hierarchical Bitmap \\ \hline
 \footnotesize\narrowstyle \soaalloc{}-NoCoal & \footnotesize \ding{55} & \footnotesize \ding{51} & \footnotesize Block & \footnotesize \footnotesize Hierarchical Bitmap \\ \hline
 \footnotesize \narrowstyle BitmapAlloc & \footnotesize\ding{55} & \footnotesize\ding{55} & \footnotesize\ding{55} & \footnotesize Hierarchical Bitmap \\ \hline
 \textcolor{gray}{\footnotesize\narrowstyle CircularMalloc} & \textcolor{gray}{\footnotesize \ding{55}} & \textcolor{gray}{\footnotesize \ding{55}} & \textcolor{gray}{\footnotesize \ding{55}} & \textcolor{gray}{\footnotesize Linked List, Ring Buffer} \\ \hline
 \footnotesize\narrowstyle Default CUDA Allocator & \footnotesize \ding{55} & \footnotesize \ding{55} & \footnotesize (Unknown) & \footnotesize (Unknown) \\ \hline
 \textcolor{gray}{\footnotesize \narrowstyle FDGMalloc} & \textcolor{gray}{\footnotesize \ding{51}} & \textcolor{gray}{\footnotesize \ding{55}} &  \textcolor{gray}{\footnotesize Priv. Heap, Superblock} &  \textcolor{gray}{\footnotesize Linked List}\\ \hline
 \footnotesize \narrowstyle Halloc & \footnotesize \ding{55} & \footnotesize \ding{55} & \footnotesize Slab & \footnotesize Bitmap, Hashing \\ \hline
 \footnotesize \narrowstyle mallocMC (ScatterAlloc) & \footnotesize \ding{51} & \footnotesize \ding{55} & \footnotesize Superblock, Region, Page & \footnotesize Hashing \\ \hline
 \textcolor{gray}{\footnotesize \narrowstyle XMalloc} & \textcolor{gray}{\footnotesize \ding{51}} & \textcolor{gray}{\footnotesize \ding{55}} & \textcolor{gray}{\footnotesize (4 block hierarchies)} & \textcolor{gray}{\footnotesize Lock-free Free Lists}  \\
 \hline
\hline
\end{tabularx}
%\vspace{10pt}
\end{table}

\begin{figure}
  \includegraphics[width=\textwidth]{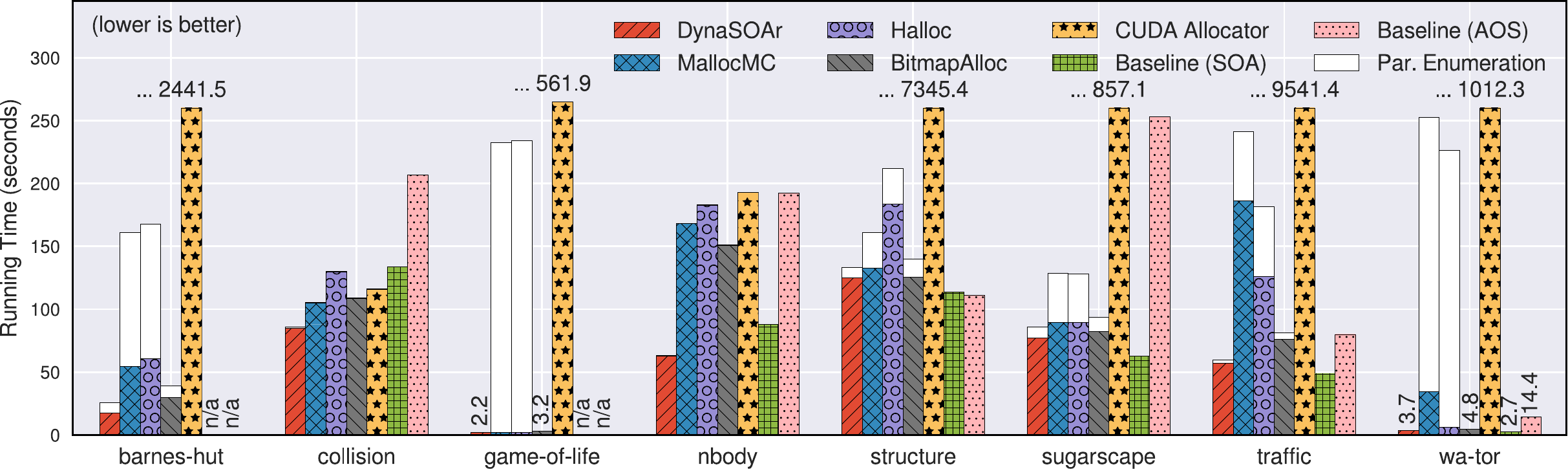}
  %\vspace{-15pt}
  \caption{Running Time of SMMO Application Benchmarks. We gave every allocator some extra memory to avoid memory scarcity slowdowns: The heap size is 8~GiB, at least 4 times bigger than the maximum amount of all allocated memory at any point throughout the program execution.}
  \label{fig:bench_overview}
  \vspace{20pt}

  \includegraphics[width=\textwidth]{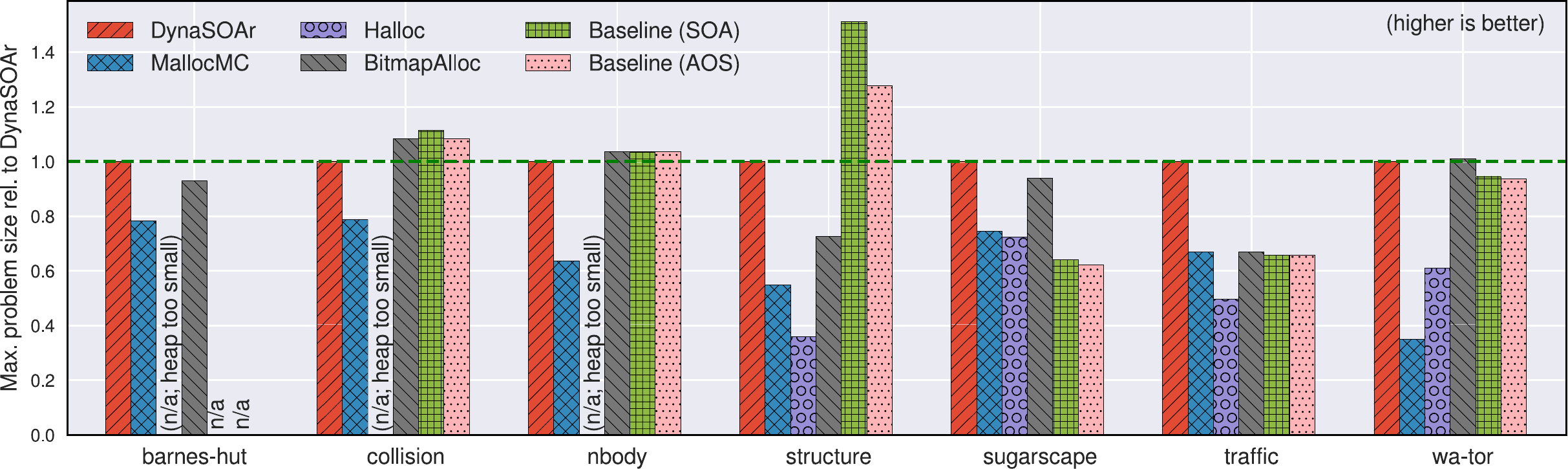}
  %\vspace{-15pt}
  \caption{Space Efficiency.  We measured the max. problem size of every allocator with the same heap size. Does not take into account enumeration arrays. Results are relative to \textsc{DynaSOAr}.}
  \label{fig:memusage}
  \vspace{20pt}

  \centering
  \begin{subfigure}[t]{0.485\textwidth}
    \centering
    \includegraphics[width=\textwidth]{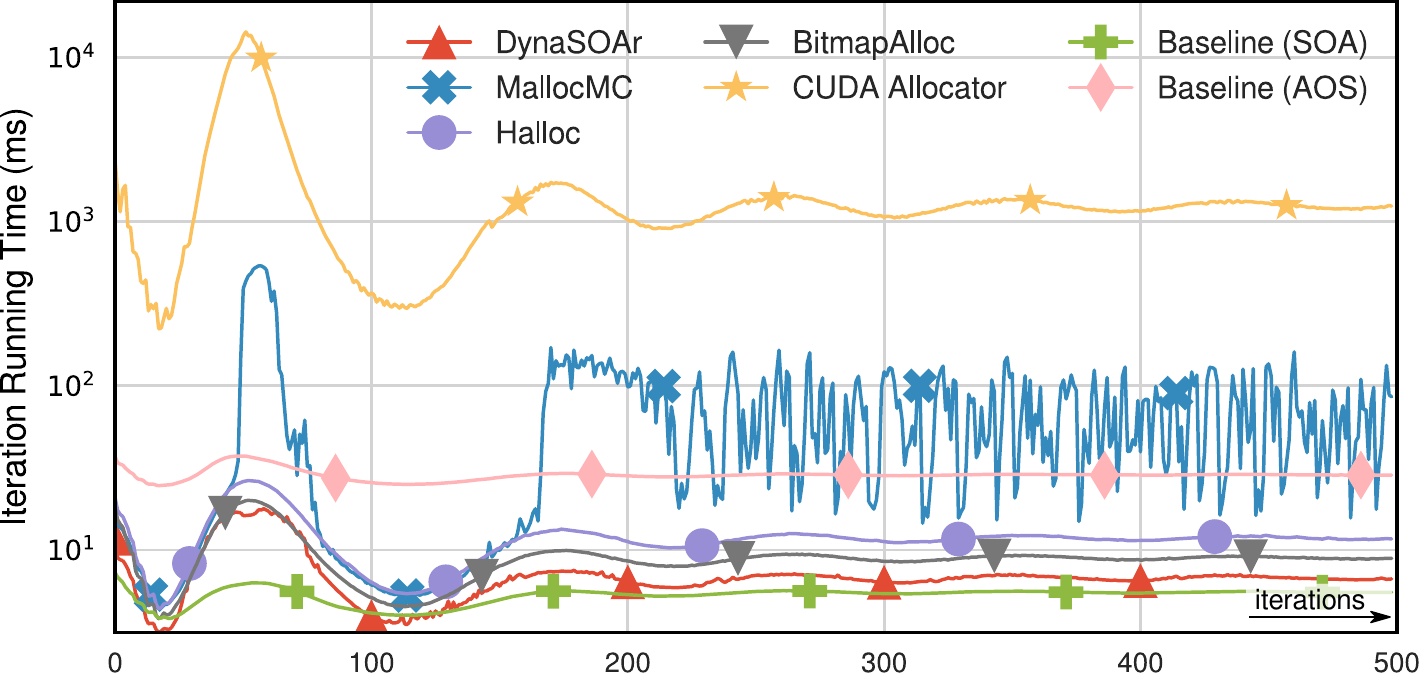}
    %\vspace{-0.65cm}
    \caption*{\footnotesize \textbf{\textsf{(a)}} Comparison with other allocators.}  
  \end{subfigure}\hfill
  \begin{subfigure}[t]{0.485\textwidth}
    \centering
    \includegraphics[width=\textwidth]{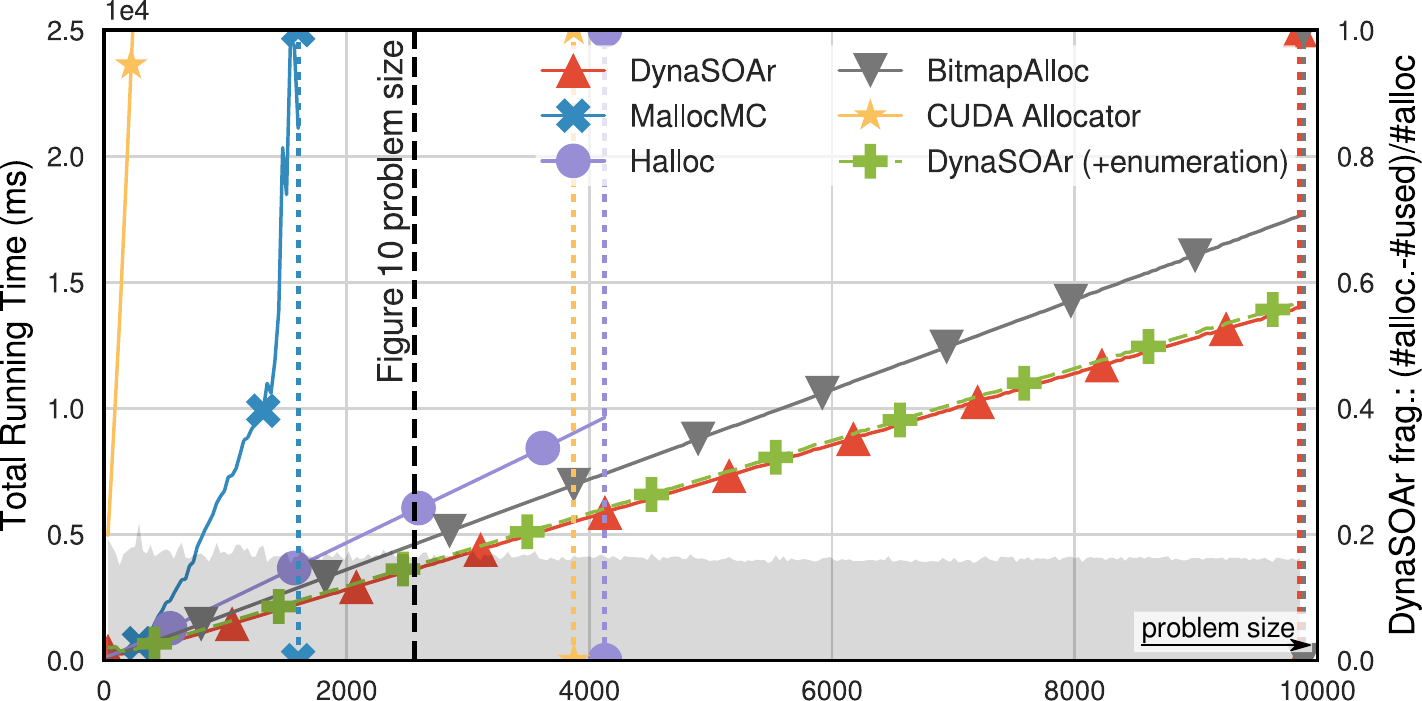}
    %\vspace{-0.65cm}
    \caption*{\footnotesize \textbf{\textsf{(b)}} Fixed heap size, increasing problem size.} 
  \end{subfigure}

  \vspace{0.4cm}

  \begin{subfigure}[t]{0.485\textwidth}
    \centering
    \includegraphics[width=\textwidth]{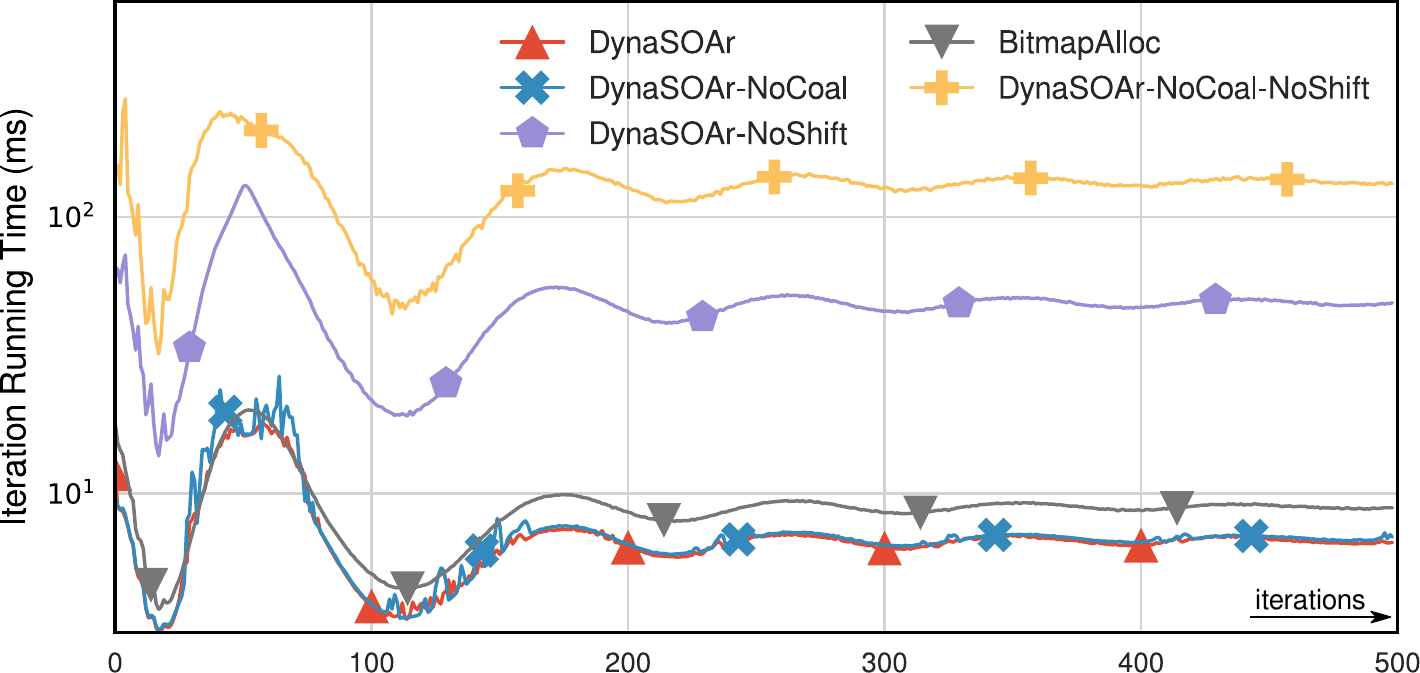}
    %\vspace{-0.65cm}
    \caption*{\footnotesize \textbf{\textsf{(c)}} Isolating single \soaalloc{} optimizations.} 
  \end{subfigure}\hfill
  \begin{subfigure}[t]{0.485\textwidth}
    \centering
    \includegraphics[width=\textwidth]{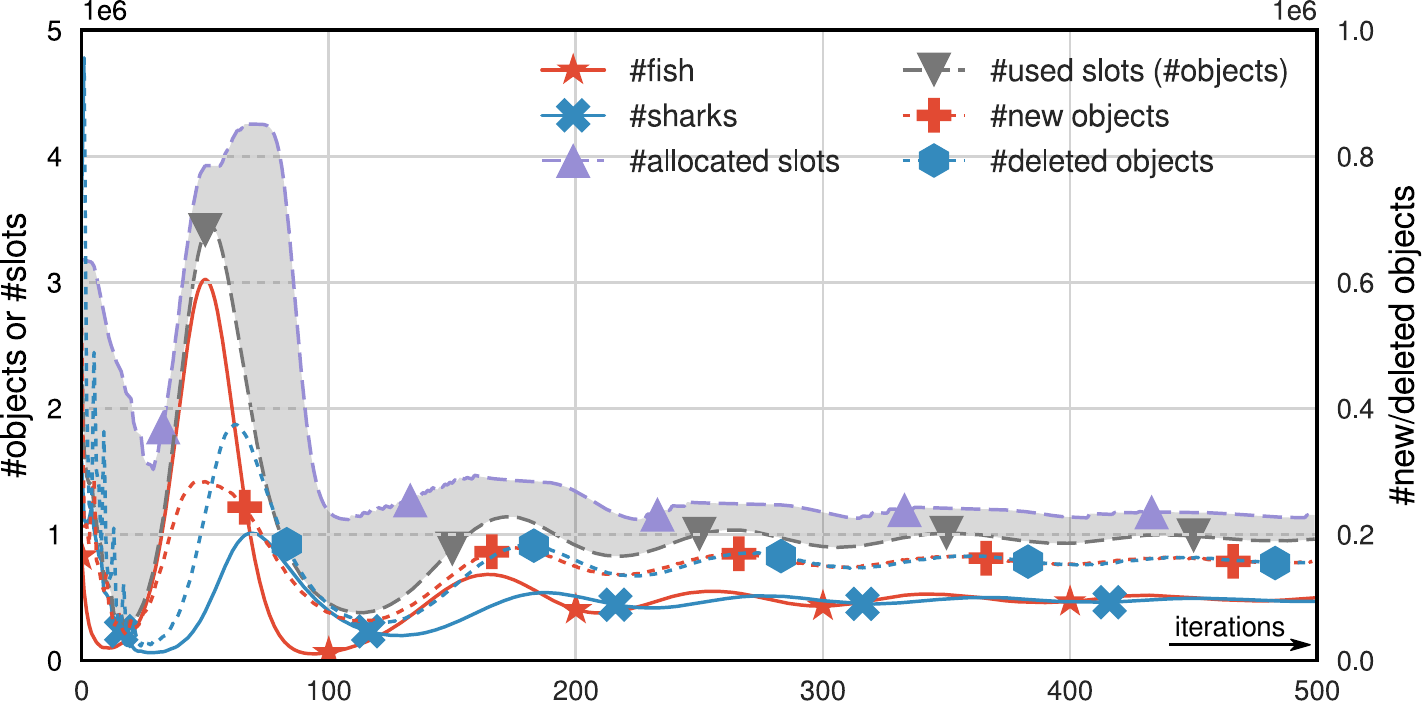}
    %\vspace{-0.65cm}
    \caption*{\footnotesize \textbf{\textsf{(d)}} Number of (de)allocations and fragmentation.}   
  \end{subfigure}
  \caption{Detailed Analysis of \textsf{wa-tor}. Does not include enumeration time, unless indicated.}\label{fig:detailed_wa_tor}

  %\includegraphics[width=0.49\columnwidth]{}
  %\includegraphics[width=0.49\columnwidth]{}
  %\caption{Scaling study: Problem size. The number of cells (and thus also fish and sharks) is increasing on the x-axis. The y-axis denotes the total running time for 500 iterations. The gray area shows the fragmentation level after 500 iterations.}
  \label{fig:benchmark_scaling}
\end{figure}

%In \texttt{nbody}, the versions with dynamic memory management are faster than the baseline implementation, despite not using any dynamic (de)allocations, except for the initial

\subsection{Space Efficiency}
\label{sec:bench_mem_usage_sec}
To evaluate how efficiently allocators manage memory, we gave them the same heap size and experimentally determined the max. problem size before running out of memory (Fig.~\ref{fig:memusage}).

%For category (1) and (2) applications that allocate all memory during startup (\textsf{collision}, \textsf{nbody}, \textsf{structure}), the baseline versions are more space-efficient. The exact number of objects per type is known ahead of time, so placing objects in memory is trivial.

For category (1) and (2) applications that allocate all memory during startup (\textsf{collision}, \textsf{nbody}, \textsf{structure}), the baseline versions are more space-efficient. The exact number of objects per type is known ahead of time, so placing objects in memory is trivial. However, even though category (2) applications delete objects throughout their runtime, the memory consumption of the baseline versions does not decrease over time. This is a problem even for \textsc{DynaSOAr} because blocks can only be deleted when they are entirely empty, which can take some time.

Category (3) applications (\textsf{sugarscape}, \textsf{traffic}, \textsf{wa-tor}) exhibit a fixed grid/network structure of cells, upon which a dynamic set of agents is moving. The baseline versions allocate the fields of agents directly inside cells. Classes for agents are combined with the cell class and some fields have \texttt{null} values (or garbage) if they are not used. This wastes memory because not all cells are occupied by agents all the time. Here, \soaalloc{} is not as fast as optimized SOA baseline implementations, but it can handle significantly larger problem sizes.

Out of all allocators, \soaalloc{} is most space-efficient. MallocMC and Halloc are based on a hashing approach. With rising heap fill levels, it becomes increasingly difficult to find free memory for allocations, so they fail to use the entire heap memory. \soaalloc{} and BitmapAlloc can avoid this problem with bitmaps, which act as an index for free memory.

Albeit negligible in these benchmarks, \soaalloc{} and Baseline (SOA) also benefit from slightly smaller object sizes: Only SOA arrays must be aligned and not every object.

\subsection{Detailed Analysis of \textsf{wa-tor}}
\label{sec:detailed_analysis_wator_s}
%\begin{figure}
%\centering
%\includegraphics[width=0.85\columnwidth]{}
%\caption{Class Diagram of Wa-Tor. This design uses a type ID field in \emph{Agent} for checking if a cell is occupied by a fish or a shark.}
%\label{fig:wator_uml_d}
%\end{figure}
%\begin{figure}
%  \includegraphics[width=\columnwidth]{running_time_overview_1.pdf}
%  \caption{Wa-Tor Running Time. Without object enumeration time.}
%  \label{fig:bench_wator_overview}
%\end{figure}
\textsf{wa-tor} is a particularly interesting benchmark. It exhibits a massive number of (de)allocations in waves, until an equilibrium between fish and sharks is reached. This allows us to measure performance at a massive and at a lower number of concurrent (de)allocations. For a fair comparison of allocators, we do not include time spent on enumeration in this section.

Fig.~\ref{fig:detailed_wa_tor}\textsc{a} shows that \soaalloc{} always provides superior performance compared to other allocators; during (de)allocation spikes (around iteration 50), as well as if fewer concurrent (de)allocations take place. The performance of mallocMC degrades after a few iterations and does not recover, possibly due to a fragmented heap.

%Halloc can handle problem sizes of up to 2165 (70\% heap utilization).
In (\textsc{b}), all allocators were given a heap size of 1~GB and the problem size increases gradually on the x-axis. mallocMC performs well at first, but its performance drops rapidly as soon as the heap starts filling up. \soaalloc{} can handle much larger problem sizes, given the same amount of heap memory. The running time grows linearly with the problem size, showing that recent GPU architectures can handle atomic operations quite well.

Fragmentation in \soaalloc{} is different from other allocators: \soaalloc{} does not have internal or external fragmentation by design, but memory within allocated blocks is only available for a certain type. This sort of fragmentation decreases with better clustering. In \textsc{DynaSOAr}, fragmentation $F$ is the relative number of unused objects slots among all allocated blocks $\mathit{Blocks}$ (gray area in (\textsc{b}) and (\textsc{d})).

%\begin{align*}
%F = \frac{\sum_{b \in \mathit{Blocks}} (N_{\mathit{type}(b)} - \mathit{used}(b))}{\sum_{b \in \mathit{Blocks}}  N_{\mathit{type}(b)}} \tag{\emph{fragmentation}}
%\end{align*}

\begin{align*}
F = \frac{\sum_{b \in \mathit{Blocks}} (N_{\mathit{type}(b)} - \mathit{used}(b))}{\sum_{b \in \mathit{Blocks}}  N_{\mathit{type}(b)}}  \approx \frac{1}{\mbox{\#blocks}} \sum_{b \in \mathit{Blocks}} \frac{\mbox{\#free slots}(b)}{\mbox{\#slots}(b)} \tag{\emph{fragmentation}}
\end{align*}

At iterations~60--80 in (\textsc{d}), \soaalloc{} has high fragmentation because many fish objects were deallocated. However, a block can only be deallocated when \emph{all} of its objects are deallocated. The fragmentation level decreases gradually because new allocations are performed in existing (active) blocks. Therefore, new blocks are rarely allocated and there is a chance that an active block will eventually run empty. As can be seen in (\textsc{b}), fragmentation is independent of the problem size and constant at around 18\% after 500 Wa-Tor iterations. 

%We define fragmentation $F$ as the relative number of unused (but allocated) object slots. In the following formula, $A$ is the set of all allocated blocks (of all types), $\mathit{used}(b)$ is the number of used object slots in a block $b$ and $N_t$ is the number of objects a block of type $t$ can hold.
%\begin{align*}
%F = \frac{\sum_{b \in A} (N_{\mathit{type}(b)} - \mathit{used}(b))}{\sum_{b \in A}  N_{\mathit{type}(b)}} \tag{\emph{fragmentation}}
%\end{align*}
%The gray area in (c) shows the difference between allocated and used object slots.  The fragmentation level decreases gradually (i.e., \soaalloc{} does not suffer from \emph{unbounded blowup}~\cite{Berger:2000:HSM:378993.379232}) because new allocations are performed in existing (active) blocks; therefore, new blocks are rarely allocated (only during allocation spikes), while there is a chance that an active block will run empty eventually. Fig.~\ref{fig:benchmark_scaling} shows that fragmentation is independent of the problem size and constant at around 18\% after 500 iterations in Wa-Tor.

\begin{figure}
  \begin{minipage}[c]{0.485\textwidth}
    \includegraphics[width=\textwidth]{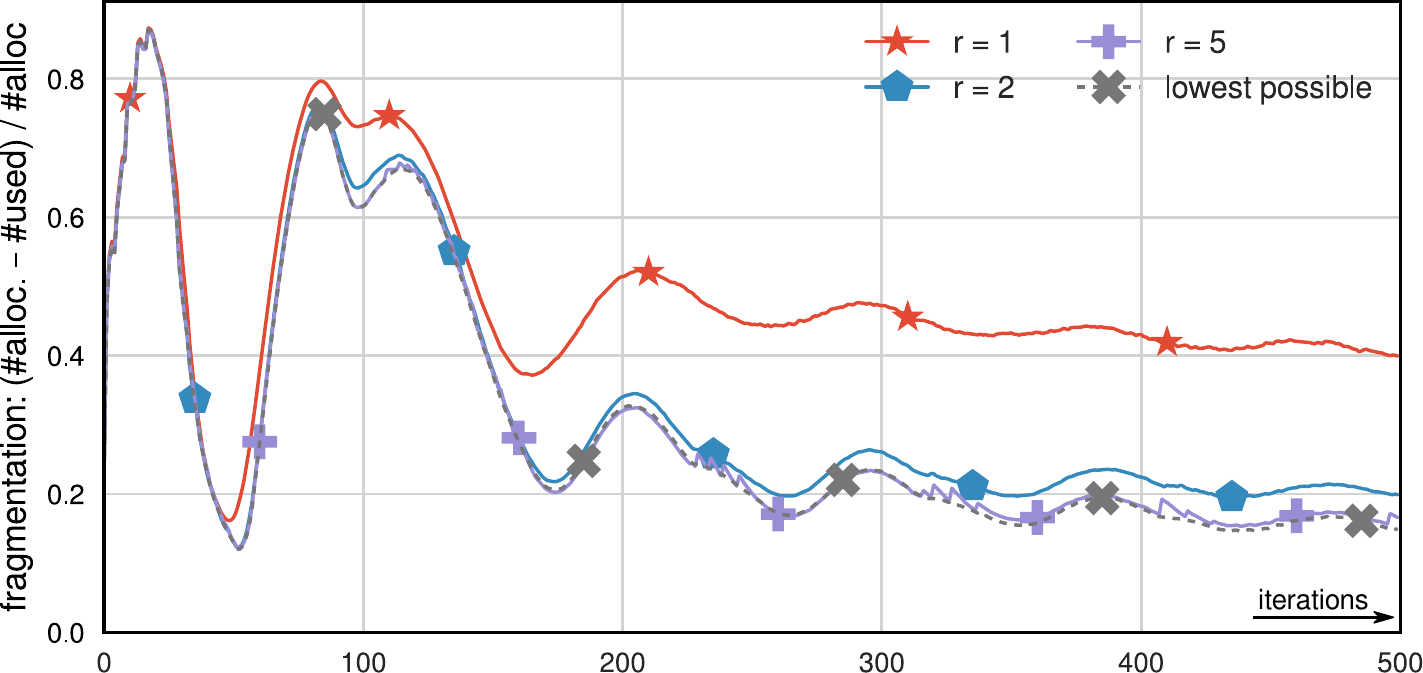}
  \end{minipage}\hfill
  \begin{minipage}[c]{0.475\textwidth}
    \caption{Memory fragmentation (\textsf{wa-tor}) by \#active block lookup attemps $r$ (Alg.~\ref{alg:block_allocate}, Line~2). With only 1 retry ($r=2$), frag. is reduced by 50\%. \textsc{DynaSOAr} uses $r=5$ by default, which is close to the lowest achievable frag. level (i.e., without thread contention). Due to unfortunate alloc.-delloc. patterns, a frag. rate of 0\% is not achievable without manually relocating objects or predicting future (de)allocations.} \label{fig:frag_param}
  \end{minipage}
\end{figure}

%\paragraph{Source of Speedup}
We implemented multiple \soaalloc{} variants to pinpoint the source of \soaalloc{}'s speedup over other allocators (Fig.~\ref{fig:detailed_wa_tor}\textsc{c}. The most important optimization is the rotation-shifting of bitmaps. Without shifting (\textsf{*-NoShift}), performance degrades severely due to thread contention. Allocation request coalescing is another optimization that reduces thread contention significantly (compare \textsf{DynaSOAr-NoCoal-NoShift} and \textsf{DynaSOAr-NoShift}), but it cannot improve performance much further if we are already rotation-shifting bitmaps (compare \textsf{DynaSOAr} and \textsf{DynaSOAr-NoCoal}).

In Fig.~\ref{fig:frag_param}, we experiment with the number of active block lookup attempts before entering the slow path, which strongly affects fragmentation. %All \soaalloc{} variants allocate objects in blocks and benefit from an SOA layout. BitmapAlloc uses hierarchical bitmaps to allocate objects directly, causing objects to be widely scattered in the heap. %Sorting object pointers before a kernel launch significantly improves data locality and the performance of Halloc; this is not necessary in \soaalloc{} because consecutive thread ranges are assigned to entire blocks.
%\bgroup
%\def\arraystretch{0.9}

%\begin{figure}
%  \includegraphics[width=\columnwidth]{}
%  \caption{Measuring the benefit of \soaalloc{} optimizations.}
%  \label{fig:measure_benefit_soa}
%\end{figure}

\subsection{Raw Allocation Performance}
The \emph{Linux Scalability} microbenchmark~\cite{Lever:2000:MPM:1267724.1267780} measures the raw (de)allocation time of allocators. We set the heap size to 1~GiB and one CUDA kernel allocates $n$ 64-byte objects in each of the 16,384 threads. A second CUDA kernel deallocates all objects. Allocated memory is never accessed. In Fig.~\ref{fig:benchmark_scaling_2}\textsc{a}, the x-axis denotes the number of allocations per thread $n$ and the y-axis shows the total benchmark running time divided by $n$.

We chose the size of the heap such that it can hold exactly $16384 \times n$ objects with $n=1024$ (100\% heap utilization). No allocator can reach perfect utilization because some memory is used for internal data structures such as bitmaps.

%The heap size is set such that it fits 1024 $\times$ \#threads many objects. We ran the benchmark with increasing allocation loads (Fig.~\ref{fig:benchmark_scaling_2}(a)): The x-axis denotes the number of (de)allocations per thread and the y-axis shows the running time divided by the number of (de)allocations.

Halloc is the fastest allocator. Both Halloc and mallocMC fail to allocate more than 510 objects (49.8\% utilization). This is better than in some other benchmarks, likely because only objects of one size are allocated. \soaalloc{} (96.9\% utilization), BitmapAlloc (98.4\% utilization) and Halloc scale almost perfectly with the number of allocations.

\begin{figure}
  \centering
  \begin{subfigure}[t]{0.485\textwidth}
    \centering
    \includegraphics[width=\textwidth]{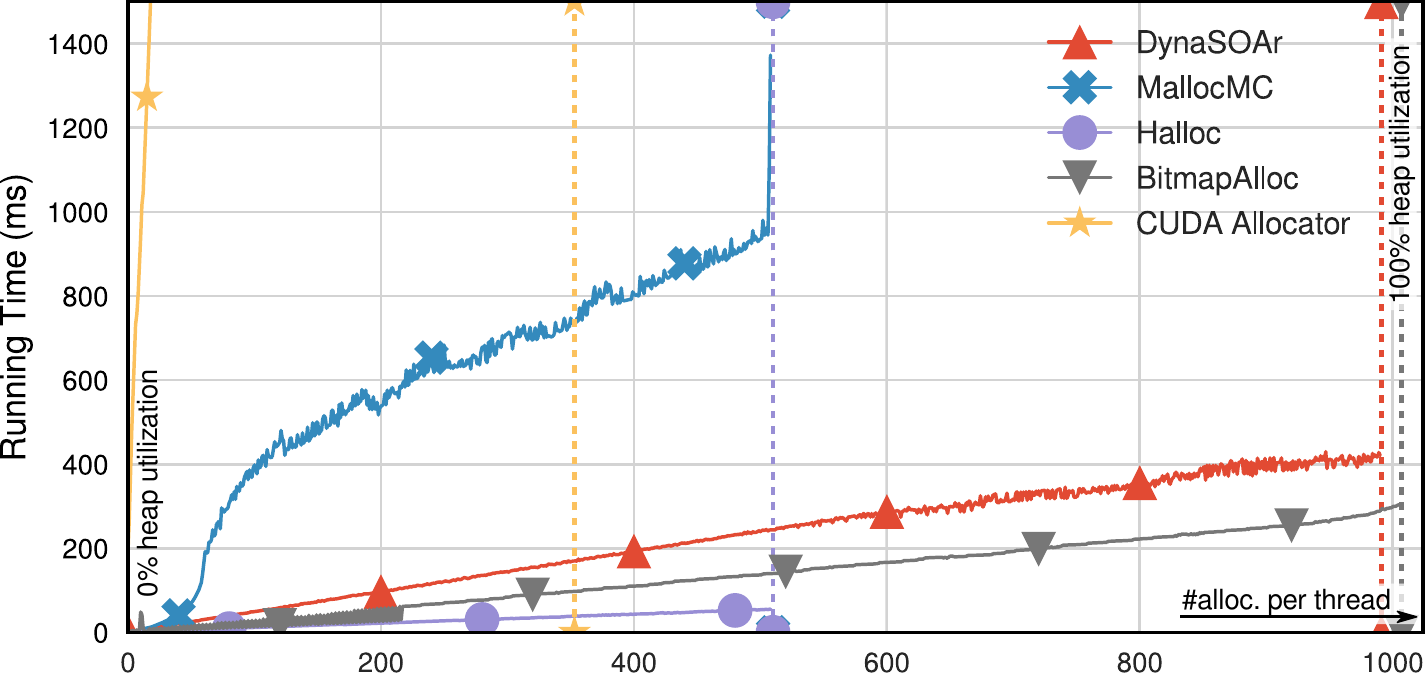}
    \caption*{\footnotesize \textbf{\textsf{(a)}} Linux Scalability: Increasing \#allocations.}   
  \end{subfigure}\hfill
  \begin{subfigure}[t]{0.485\textwidth}
    \centering
    \includegraphics[width=\textwidth]{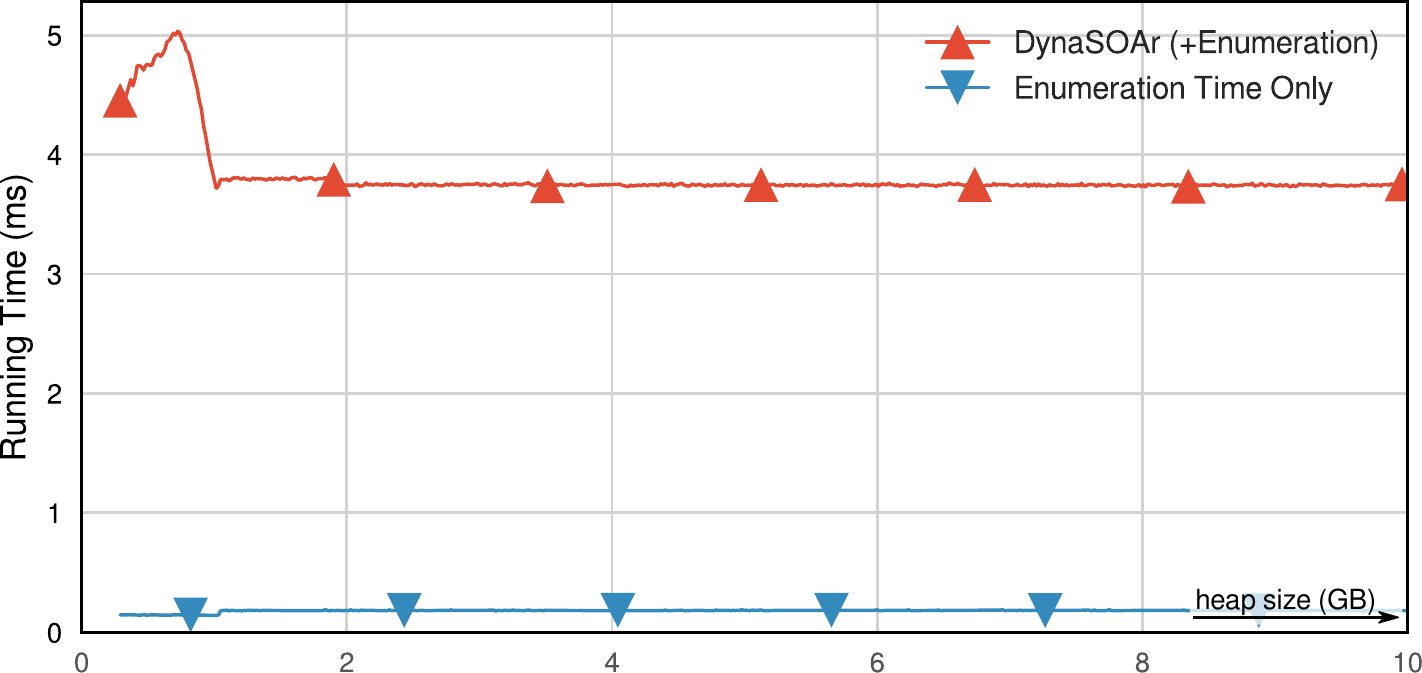}
    \caption*{\footnotesize \textbf{\textsf{(b)}} Scaling study: Heap size (\textsf{wa-tor}).} 
  \end{subfigure}
  \caption{Scaling Study: Number of Allocations and Heap Size.}
  \label{fig:benchmark_scaling_2}
\end{figure}

%\begin{figure}
%  \includegraphics[width=\columnwidth]{}
%  \caption{Raw Malloc/Free Time: 16384 threads are allocating objects of size 16 bytes. The x-axis denotes the number of objects that each thread allocates (and frees). Each thread first allocates all objects and then frees all objects in the same order; the entire process is repeated 10 times. The heap size is constant and set to store at exactly 1024 objects per thread.}
%\end{figure}

%\begin{figure}
%  \includegraphics[width=\columnwidth]{}
%  \caption{Linux Scalability Benchmark: Relative Raw Malloc/Free Time. 16384 threads are allocating objects of size 16 bytes. The x-axis denotes the number of objects that each thread allocates (and frees). Each thread first allocates all objects and then frees all objects in the same order; the entire process is repeated 10 times. The heap fits exactly 16,777,216 objects (1024 per thread).}
%  \label{fig:linuxscal}
%\end{figure}

%\begin{figure}
%  \includegraphics[width=\columnwidth]{}
%  \caption{Scaling study: Heap size. Evaluating two strategies for object enumeration. The problem size is constant at 2048x1024 cells.}
%  \label{fig:scaling_heap}
%\end{figure}

\subsection{Parallel Object Enumeration}
The overhead of object enumeration (parallel do-all) is negible in most benchmarks (Fig.~\ref{fig:bench_overview}, Fig.~\ref{fig:benchmark_scaling}\textsc{b}). In Fig.~\ref{fig:benchmark_scaling_2}\textsc{b}, the problem size is fixed but the heap size increases on the x-axis. \textsc{DynaSOAr}'s performance (and that of object enumeration) is independent of the size of the heap, if enough memory is available for the application. This shows that our hierarchical bitmaps work well with various heap sizes.

%Fig.~\ref{fig:benchmark_scaling_2}(b) compares two variants for object enumeration: One with an optimized hierarchical bitmap traversal (\emph{indices()} operation; Sec.~\ref{sec:enumerating_set_bit_indices}) and one with a prefix sum pass on the bitmap (as described in Sec.~\ref{sec:par_do_all_sec3}) using the CUB~1.8.0 library~\cite{nvidiaprefixsum}. In this benchmark, the problem size is fixed but the heap size increases on the x-axis. 

%The running time of the bitmap hierarchy version is independent of the heap size. On the contrary, the running time of the prefix sum variant increases with the heap size, because it scans entire block bitmaps without utilizing the hierarchy. This is because the size of the block bitmaps increases with increasing heap sizes.  %Furthermore, it uses more memory buffers and requires a higher number of (non-atomic) memory accesses.

\section{Conclusion}
\label{sec:conclusion}
We presented \soaalloc{}, a new dynamic object allocator for SIMD architectures. The main insight of our work is that memory allocators should not only aim for good raw (de)allocation performance, but also optimize the usage of allocated memory. \soaalloc{} was designed for GPUs, but its basic ideas are applicable to other architectures and systems with good or guaranteed vectorization such as the Intel SPMD compiler~\cite{6339601}.

\soaalloc{} achieves good memory access performance by controlling (a) memory allocation and (b) memory access with a parallel do-all operation. \textsc{DynaSOAr}'s main speedup over other allocators is due to an SOA-style object layout, which can benefit memory bandwidth utilization (through coalesced memory access) and cache utilization. To allow for dynamic (de)allocation of objects, \textsc{DynaSOAr} allocates objects in blocks instead of a plain SOA layout. \textsc{DynaSOAr} utilizes hierarchical bitmaps for fast and compact allocations with low fragmentation.

Our benchchmarks show that \soaalloc{} can achieve significant speedups over state-of-the-art allocators of more than 3x in application code with structured data, due to better memory access performance. \soaalloc{} also has a significantly lower memory footprint than other allocators, mainly because \textsc{DynaSOAr} has no internal fragmentation by design and is not based on hashing. Our work also shows how an SOA layout can support class inheritance without wasting memory: by allocating objects in blocks and encoding block sizes in object pointers.

In the future, we will investigate how \textsc{DynaSOAr} can be extended to support virtual functions and other custom object layouts.

 %Future work will investigate how \soaalloc{}'s design can be extended to a more general \emph{memory} allocator for unstructured data.

%%
%% Bibliography
%%

%% Please use bibtex, 

\bibliography{paper}

\begin{thebibliography}{10}

\bibitem{Abel:1999:ATS}
James Abel, Kumar Balasubramanian, Mike Bargeron, Tom Craver, and Mike Phlipot.
\newblock Applications tuning for streaming {SIMD} extensions.
\newblock {\em Intel Technology Journal}, (Q2):13, May 1999.

\bibitem{warp_aggre}
Andy Adinets.
\newblock {CUDA} pro tip: Optimized filtering with warp-aggregated atomics.
\newblock
  \href{https://devblogs.nvidia.com/cuda-pro-tip-optimized-filtering-warp-aggregated-atomics/}{\texttt{https://devblogs.nvidia.com/cuda-pro-tip-optimized-filtering-warp-
  aggregated-atomics/}}, 2017.

\bibitem{hallocweb}
Andrew~V. Adinetz and Dirk Pleiter.
\newblock Halloc: A high-throughput dynamic memory allocator for {GPGPU}
  architectures.
\newblock In {\em GPU Technology Conference 2014}, 2014.

\bibitem{ALEXANDER1998113}
Stephen~G. Alexander and Craig~B. Agnor.
\newblock N-body simulations of late stage planetary formation with a simple
  fragmentation model.
\newblock {\em Icarus}, 132(1):113--124, 1998.
\newblock \href {http://dx.doi.org/10.1006/icar.1998.5905}
  {\path{doi:10.1006/icar.1998.5905}}.

\bibitem{Alglave:2015:GCW:2694344.2694391}
Jade Alglave, Mark Batty, Alastair~F. Donaldson, Ganesh Gopalakrishnan, Jeroen
  Ketema, Daniel Poetzl, Tyler Sorensen, and John Wickerson.
\newblock {GPU} concurrency: Weak behaviours and programming assumptions.
\newblock In {\em Proceedings of the Twentieth International Conference on
  Architectural Support for Programming Languages and Operating Systems},
  ASPLOS '15, pages 577--591, New York, NY, USA, 2015. ACM.
\newblock \href {http://dx.doi.org/10.1145/2694344.2694391}
  {\path{doi:10.1145/2694344.2694391}}.

\bibitem{allan2010survey}
Robert~J. Allan.
\newblock Survey of agent based modelling and simulation tools.
\newblock Technical Report DL-TR-2010-007, Science and Technology Facilities
  Council, Warrington, United Kingdom, Oct 2010.

\bibitem{DBLP:journals/corr/abs-1710-11246}
Saman Ashkiani, Martin Farach{-}Colton, and John~D. Owens.
\newblock A dynamic hash table for the {GPU}.
\newblock {\em CoRR}, abs/1710.11246, 2017.

\bibitem{IJNC151}
Darius Bakunas-Milanowski, Vernon Rego, Janche Sang, and Chansu Yu.
\newblock Efficient algorithms for stream compaction on {GPUs}.
\newblock {\em International Journal of Networking and Computing},
  7(2):208--226, 2017.
\newblock \href {http://dx.doi.org/10.15803/ijnc.7.2_208}
  {\path{doi:10.15803/ijnc.7.2_208}}.

\bibitem{bandini2009}
Stefania Bandini, Sara Manzoni, and Giuseppe Vizzari.
\newblock Agent based modeling and simulation: An informatics perspective.
\newblock {\em Journal of Artificial Societies and Social Simulation}, 12(4):4,
  2009.

\bibitem{cpp_placement}
Eli Bendersky.
\newblock The many faces of operator new in {C++}.
\newblock
  \url{https://eli.thegreenplace.net/2011/02/17/the-many-faces-of-operator-new-in-c},
  2011.

\bibitem{Berger:2000:HSM:378993.379232}
Emery~D. Berger, Kathryn~S. McKinley, Robert~D. Blumofe, and Paul~R. Wilson.
\newblock Hoard: A scalable memory allocator for multithreaded applications.
\newblock In {\em Proceedings of the Ninth International Conference on
  Architectural Support for Programming Languages and Operating Systems},
  ASPLOS IX, pages 117--128, New York, NY, USA, 2000. ACM.
\newblock \href {http://dx.doi.org/10.1145/378993.379232}
  {\path{doi:10.1145/378993.379232}}.

\bibitem{intel_aos_soa}
Paul Besl.
\newblock A case study comparing {AoS} (arrays of structures) and {SoA}
  (structures of arrays) data layouts for a compute-intensive loop run on
  {Intel Xeon} processors and {Intel Xeon Phi} product family coprocessors.
\newblock Technical report, Intel Corporation, 2013.

\bibitem{Billeter:2009:ESC:1572769.1572795}
Markus Billeter, Ola Olsson, and Ulf Assarsson.
\newblock Efficient stream compaction on wide {SIMD} many-core architectures.
\newblock In {\em Proceedings of the Conference on High Performance Graphics
  2009}, HPG '09, pages 159--166, New York, NY, USA, 2009. ACM.
\newblock \href {http://dx.doi.org/10.1145/1572769.1572795}
  {\path{doi:10.1145/1572769.1572795}}.

\bibitem{Blumofe:1999:SMC:324133.324234}
Robert~D. Blumofe and Charles~E. Leiserson.
\newblock Scheduling multithreaded computations by work stealing.
\newblock {\em J. ACM}, 46(5):720--748, September 1999.
\newblock \href {http://dx.doi.org/10.1145/324133.324234}
  {\path{doi:10.1145/324133.324234}}.

\bibitem{Brown:2015:RML:2767386.2767436}
Trevor~Alexander Brown.
\newblock Reclaiming memory for lock-free data structures: There has to be a
  better way.
\newblock In {\em Proceedings of the 2015 ACM Symposium on Principles of
  Distributed Computing}, PODC '15, pages 261--270, New York, NY, USA, 2015.
  ACM.
\newblock \href {http://dx.doi.org/10.1145/2767386.2767436}
  {\path{doi:10.1145/2767386.2767436}}.

\bibitem{BURTSCHER201175}
Martin Burtscher and Keshav Pingali.
\newblock Chapter 6 -- an efficient {CUDA} implementation of the tree-based
  {Barnes Hut} n-body algorithm.
\newblock In Wen mei W.~Hwu, editor, {\em GPU Computing Gems Emerald Edition},
  Applications of GPU Computing Series, pages 75 -- 92. Morgan Kaufmann,
  Boston, 2011.
\newblock \href {http://dx.doi.org/10.1016/B978-0-12-384988-5.00006-1}
  {\path{doi:10.1016/B978-0-12-384988-5.00006-1}}.

\bibitem{CARY199720}
John~R. Cary, Svetlana~G. Shasharina, Julian~C. Cummings, John~V.W. Reynders,
  and Paul~J. Hinker.
\newblock Comparison of {C++} and {Fortran 90} for object-oriented scientific
  programming.
\newblock {\em Computer Physics Communications}, 105(1):20--36, 1997.
\newblock \href {http://dx.doi.org/10.1016/S0010-4655(97)00043-X}
  {\path{doi:10.1016/S0010-4655(97)00043-X}}.

\bibitem{Chilimbi:1999:CSD:301618.301635}
Trishul~M. Chilimbi, Bob Davidson, and James~R. Larus.
\newblock Cache-conscious structure definition.
\newblock In {\em Proceedings of the ACM SIGPLAN 1999 Conference on Programming
  Language Design and Implementation}, PLDI '99, pages 13--24, New York, NY,
  USA, 1999. ACM.
\newblock \href {http://dx.doi.org/10.1145/301618.301635}
  {\path{doi:10.1145/301618.301635}}.

\bibitem{nvidia_memoryco}
NVIDIA Corporation.
\newblock {CUDA C} best practices guide.
\newblock
  \url{https://docs.nvidia.com/cuda/cuda-c-best-practices-guide/index.html\#coalesced-access-to-global-memory},
  2018.

\bibitem{doi:10.1002/9781119332015.ch3}
Cederman Daniel, Gidenstam Anders, Ha~Phuong, Sundell Hkan, Papatriantafilou
  Marina, and Tsigas Philippas.
\newblock {\em Lock-Free Concurrent Data Structures}, chapter~3, pages 59--79.
\newblock Wiley-Blackwell, 2017.
\newblock \href {http://dx.doi.org/10.1002/9781119332015.ch3}
  {\path{doi:10.1002/9781119332015.ch3}}.

\bibitem{10.1007/978-3-540-25934-3_2}
Kei Davis and J{\"o}rg Striegnitz.
\newblock Parallel object-oriented scientific computing today.
\newblock In Frank Buschmann, Alejandro~P. Buchmann, and Mariano~A. Cilia,
  editors, {\em Object-Oriented Technology. ECOOP 2003 Workshop Reader}, pages
  11--16, Berlin, Heidelberg, 2004. Springer-Verlag.
\newblock \href {http://dx.doi.org/10.1007/978-3-540-25934-3_2}
  {\path{doi:10.1007/978-3-540-25934-3_2}}.

\bibitem{DeGonzalo:2019:AGW:3314872.3314884}
Simon~Garcia De~Gonzalo, Sitao Huang, Juan G\'{o}mez-Luna, Simon Hammond, Onur
  Mutlu, and Wen-mei Hwu.
\newblock Automatic generation of warp-level primitives and atomic instructions
  for fast and portable parallel reduction on {GPUs}.
\newblock In {\em Proceedings of the 2019 IEEE/ACM International Symposium on
  Code Generation and Optimization}, CGO 2019, pages 73--84, Piscataway, NJ,
  USA, Feb 2019. IEEE Press.
\newblock \href {http://dx.doi.org/10.1109/CGO.2019.8661187}
  {\path{doi:10.1109/CGO.2019.8661187}}.

\bibitem{10.2307/24969495}
Alexander~K. Dewdney.
\newblock Computer creations: Sharks and fish wage an ecological war on the
  toroidal planet {Wa-Tor}.
\newblock {\em Scientific American}, 251(6):14--26, 12 1984.

\bibitem{eckert_carlchristian_helmut_johannes_2014_34461}
Carlchristian H.~J. Eckert.
\newblock {Enhancements of the massively parallel memory allocator ScatterAlloc
  and its adaption to the general interface mallocMC}.
\newblock Junior thesis. Technische Universit{\"a}t Dresden, October 2014.
\newblock \href {http://dx.doi.org/10.5281/zenodo.34461}
  {\path{doi:10.5281/zenodo.34461}}.

\bibitem{osti_1398234}
Harold~C. Edwards and Daniel~A. Ibanez.
\newblock Kokkos' task {DAG} capabilities.
\newblock Technical Report SAND2017-10464, Sandia National Laboratories,
  Albuquerque, New Mexico, USA, 9 2017.
\newblock \href {http://dx.doi.org/10.2172/1398234}
  {\path{doi:10.2172/1398234}}.

\bibitem{Ellen:2007:SSN:1281100.1281106}
Faith Ellen, Yossi Lev, Victor Luchangco, and Mark Moir.
\newblock {SNZI}: Scalable nonzero indicators.
\newblock In {\em Proceedings of the Twenty-sixth Annual ACM Symposium on
  Principles of Distributed Computing}, PODC '07, pages 13--22, New York, NY,
  USA, 2007. ACM.
\newblock \href {http://dx.doi.org/10.1145/1281100.1281106}
  {\path{doi:10.1145/1281100.1281106}}.

\bibitem{RePEc:mtp:titles:0262550253}
Joshua~M. Epstein and Robert Axtell.
\newblock {\em Growing Artificial Societies: Social Science from the Bottom
  Up}, volume~1.
\newblock The MIT Press, 1 edition, 1996.

\bibitem{FORDE1990355}
Bruce~W.R. Forde, Ricardo~O. Foschi, and Siegfried~F. Stiemer.
\newblock Object-oriented finite element analysis.
\newblock {\em Computers \& Structures}, 34(3):355--374, 1990.
\newblock \href {http://dx.doi.org/10.1016/0045-7949(90)90261-Y}
  {\path{doi:10.1016/0045-7949(90)90261-Y}}.

\bibitem{Franco:2017:YAG:3133850.3133861}
Juliana Franco, Martin Hagelin, Tobias Wrigstad, Sophia Drossopoulou, and Susan
  Eisenbach.
\newblock You can have it all: Abstraction and good cache performance.
\newblock In {\em Proceedings of the 2017 ACM SIGPLAN International Symposium
  on New Ideas, New Paradigms, and Reflections on Programming and Software},
  Onward! 2017, pages 148--167, New York, NY, USA, 2017. ACM.
\newblock \href {http://dx.doi.org/10.1145/3133850.3133861}
  {\path{doi:10.1145/3133850.3133861}}.

\bibitem{97c20025105249ca9f87087e9d7ec2c8}
Dietma Gallistl.
\newblock The adaptive finite element method.
\newblock {\em Snapshots of modern mathematics from Oberwolfach}, 13, 2016.
\newblock \href {http://dx.doi.org/10.14760/SNAP-2016-013-EN}
  {\path{doi:10.14760/SNAP-2016-013-EN}}.

\bibitem{Gelado:2019:TGM:3293883.3295727}
Isaac Gelado and Michael Garland.
\newblock Throughput-oriented {GPU} memory allocation.
\newblock In {\em Proceedings of the 24th Symposium on Principles and Practice
  of Parallel Programming}, PPoPP '19, pages 27--37, New York, NY, USA, 2019.
  ACM.
\newblock \href {http://dx.doi.org/10.1145/3293883.3295727}
  {\path{doi:10.1145/3293883.3295727}}.

\bibitem{Grunwald:1993:ICL:155090.155107}
Dirk Grunwald, Benjamin Zorn, and Robert Henderson.
\newblock Improving the cache locality of memory allocation.
\newblock In {\em Proceedings of the ACM SIGPLAN 1993 Conference on Programming
  Language Design and Implementation}, PLDI '93, pages 177--186, New York, NY,
  USA, 1993. ACM.
\newblock \href {http://dx.doi.org/10.1145/155090.155107}
  {\path{doi:10.1145/155090.155107}}.

\bibitem{Harish:2007:ALG:1782174.1782200}
Pawan Harish and P.~J. Narayanan.
\newblock Accelerating large graph algorithms on the {GPU} using {CUDA}.
\newblock In {\em Proceedings of the 14th International Conference on High
  Performance Computing}, HiPC'07, pages 197--208, Berlin, Heidelberg, 2007.
  Springer-Verlag.
\newblock \href {http://dx.doi.org/10.1007/978-3-540-77220-0_21}
  {\path{doi:10.1007/978-3-540-77220-0_21}}.

\bibitem{cuda_grid_stride}
Mark Harris.
\newblock {CUDA} pro tip: Write flexible kernels with grid-stride loops.
\newblock
  \href{https://devblogs.nvidia.com/cuda-pro-tip-write-flexible-kernels-grid-stride-loops/}{\texttt{https://devblogs.nvidia.com/cuda-pro-tip-write-flexible-kernels-grid-stride
  -loops/}}, 2013.

\bibitem{cpp_obj}
Kevlin Henney.
\newblock Valued conversions.
\newblock {\em C++ Report}, 12:37--40, July 2000.

\bibitem{HOMANN2018325}
Holger Homann and Francois Laenen.
\newblock {SoAx}: A generic {C++} structure of arrays for handling particles in
  {HPC} codes.
\newblock {\em Computer Physics Communications}, 224:325--332, 2018.
\newblock \href {http://dx.doi.org/10.1016/j.cpc.2017.11.015}
  {\path{doi:10.1016/j.cpc.2017.11.015}}.

\bibitem{5577907}
Xiaohuang Huang, Christopher~I. Rodrigues, Stephen Jones, Ian Buck, and Wen-Mei
  Hwu.
\newblock Xmalloc: A scalable lock-free dynamic memory allocator for many-core
  machines.
\newblock In {\em 2010 10th IEEE International Conference on Computer and
  Information Technology}, pages 1134--1139, June 2010.
\newblock \href {http://dx.doi.org/10.1109/CIT.2010.206}
  {\path{doi:10.1109/CIT.2010.206}}.

\bibitem{5473222}
Byunghyun {Jang}, Dana {Schaa}, Perhaad {Mistry}, and David {Kaeli}.
\newblock Exploiting memory access patterns to improve memory performance in
  data-parallel architectures.
\newblock {\em IEEE Transactions on Parallel and Distributed Systems},
  22(1):105--118, January 2011.
\newblock \href {http://dx.doi.org/10.1109/TPDS.2010.107}
  {\path{doi:10.1109/TPDS.2010.107}}.

\bibitem{Kale:1993:CPC:165854.165874}
Laxmikant~V. Kale and Sanjeev Krishnan.
\newblock {CHARM++}: A portable concurrent object oriented system based on
  {C++}.
\newblock In {\em Proceedings of the Eighth Annual Conference on
  Object-oriented Programming Systems, Languages, and Applications}, OOPSLA
  '93, pages 91--108, New York, NY, USA, 1993. ACM.
\newblock \href {http://dx.doi.org/10.1145/165854.165874}
  {\path{doi:10.1145/165854.165874}}.

\bibitem{10.1007/978-3-662-48096-0_21}
Klaus Kofler, Biagio Cosenza, and Thomas Fahringer.
\newblock Automatic data layout optimizations for {GPUs}.
\newblock In Jesper~Larsson Tr{\"a}ff, Sascha Hunold, and Francesco Versaci,
  editors, {\em Euro-Par 2015: Parallel Processing}, pages 263--274, Berlin,
  Heidelberg, 2015. Springer-Verlag.
\newblock \href {http://dx.doi.org/10.1007/978-3-662-48096-0_21}
  {\path{doi:10.1007/978-3-662-48096-0_21}}.

\bibitem{7853809}
Florian Lemaitre and Lionel Lacassagne.
\newblock Batched cholesky factorization for tiny matrices.
\newblock In {\em 2016 Conference on Design and Architectures for Signal and
  Image Processing (DASIP)}, pages 130--137, Oct 2016.
\newblock \href {http://dx.doi.org/10.1109/DASIP.2016.7853809}
  {\path{doi:10.1109/DASIP.2016.7853809}}.

\bibitem{Lever:2000:MPM:1267724.1267780}
Chuck Lever and David Boreham.
\newblock Malloc() performance in a multithreaded {Linux} environment.
\newblock In {\em Proceedings of the Annual Conference on USENIX Annual
  Technical Conference}, ATEC '00, Berkeley, CA, USA, 2000. USENIX Association.

\bibitem{Li:2014:ENS:2701002.2701020}
Xiaosong Li, Wentong Cai, and Stephen~J. Turner.
\newblock Efficient neighbor searching for agent-based simulation on {GPU}.
\newblock In {\em Proceedings of the 2014 IEEE/ACM 18th International Symposium
  on Distributed Simulation and Real Time Applications}, DS-RT '14, pages
  87--96, Washington, DC, USA, 2014. IEEE Computer Society.
\newblock \href {http://dx.doi.org/10.1109/DS-RT.2014.19}
  {\path{doi:10.1109/DS-RT.2014.19}}.

\bibitem{Li:2015:CAS:2769458.2769470}
Xiaosong Li, Wentong Cai, and Stephen~J. Turner.
\newblock Cloning agent-based simulation on {GPU}.
\newblock In {\em Proceedings of the 3rd ACM SIGSIM Conference on Principles of
  Advanced Discrete Simulation}, SIGSIM PADS '15, pages 173--182, New York, NY,
  USA, 2015. ACM.
\newblock \href {http://dx.doi.org/10.1145/2769458.2769470}
  {\path{doi:10.1145/2769458.2769470}}.

\bibitem{doi:10.1002/cpe.3808}
Xiaosong Li, Wentong Cai, and Stephen~J. Turner.
\newblock Supporting efficient execution of continuous space agent-based
  simulation on {GPU}.
\newblock {\em Concurrency and Computation: Practice and Experience},
  28(12):3313--3332, 2016.
\newblock \href {http://dx.doi.org/10.1002/cpe.3808}
  {\path{doi:10.1002/cpe.3808}}.

\bibitem{LU2018240}
X.~Lu, B.Y. Chen, V.B.C. Tan, and T.E. Tay.
\newblock Adaptive floating node method for modelling cohesive fracture of
  composite materials.
\newblock {\em Engineering Fracture Mechanics}, 194:240--261, 2018.
\newblock \href {http://dx.doi.org/10.1016/j.engfracmech.2018.03.011}
  {\path{doi:10.1016/j.engfracmech.2018.03.011}}.

\bibitem{Mattis:2015:COI:2814228.2814230}
Toni Mattis, Johannes Henning, Patrick Rein, Robert Hirschfeld, and Malte
  Appeltauer.
\newblock Columnar objects: Improving the performance of analytical
  applications.
\newblock In {\em 2015 ACM International Symposium on New Ideas, New Paradigms,
  and Reflections on Programming and Software (Onward!)}, Onward! 2015, pages
  197--210, New York, NY, USA, 2015. ACM.
\newblock \href {http://dx.doi.org/10.1145/2814228.2814230}
  {\path{doi:10.1145/2814228.2814230}}.

\bibitem{Michael:2002:SMR:571825.571829}
Maged~M. Michael.
\newblock Safe memory reclamation for dynamic lock-free objects using atomic
  reads and writes.
\newblock In {\em Proceedings of the Twenty-first Annual Symposium on
  Principles of Distributed Computing}, PODC '02, pages 21--30, New York, NY,
  USA, 2002. ACM.
\newblock \href {http://dx.doi.org/10.1145/571825.571829}
  {\path{doi:10.1145/571825.571829}}.

\bibitem{Michael:2004:SLD:996841.996848}
Maged~M. Michael.
\newblock Scalable lock-free dynamic memory allocation.
\newblock In {\em Proceedings of the ACM SIGPLAN 2004 Conference on Programming
  Language Design and Implementation}, PLDI '04, pages 35--46, New York, NY,
  USA, 2004. ACM.
\newblock \href {http://dx.doi.org/10.1145/996841.996848}
  {\path{doi:10.1145/996841.996848}}.

\bibitem{10.1007/978-3-540-39403-7_19}
Miko{\l}aj Morzy, Tadeusz Morzy, Alexandros Nanopoulos, and Yannis
  Manolopoulos.
\newblock Hierarchical bitmap index: An efficient and scalable indexing
  technique for set-valued attributes.
\newblock In Leonid Kalinichenko, Rainer Manthey, Bernhard Thalheim, and Uwe
  Wloka, editors, {\em Advances in Databases and Information Systems}, pages
  236--252, Berlin, Heidelberg, 2003. Springer-Verlag.
\newblock \href {http://dx.doi.org/10.1007/978-3-540-39403-7_19}
  {\path{doi:10.1007/978-3-540-39403-7_19}}.

\bibitem{nagel_schr}
Kai Nagel and Michael Schreckenberg.
\newblock A cellular automaton model for freeway traffic.
\newblock {\em J. Phys. I France}, 2(12):2221--2229, Sept. 1992.
\newblock \href {http://dx.doi.org/10.1051/jp1:1992277}
  {\path{doi:10.1051/jp1:1992277}}.

\bibitem{master_th_patel}
Parag Patel.
\newblock Object oriented programming for scientific computing.
\newblock Master's thesis, The University of Edinburgh, 2006.

\bibitem{6339601}
Matt Pharr and William~R. Mark.
\newblock ispc: A {SPMD} compiler for high-performance {CPU} programming.
\newblock In {\em 2012 Innovative Parallel Computing (InPar)}, pages 1--13.
  IEEE Computer Society, May 2012.
\newblock \href {http://dx.doi.org/10.1109/InPar.2012.6339601}
  {\path{doi:10.1109/InPar.2012.6339601}}.

\bibitem{IJNC126}
Max Plauth, Frank Feinbube, Frank Schlegel, and Andreas Polze.
\newblock A performance evaluation of dynamic parallelism for fine-grained,
  irregular workloads.
\newblock {\em International Journal of Networking and Computing},
  6(2):212--229, 2016.
\newblock \href {http://dx.doi.org/10.15803/ijnc.6.2_212}
  {\path{doi:10.15803/ijnc.6.2_212}}.

\bibitem{Schafer:2013:RLD:2492045.2492052}
Henry Sch\"{a}fer, Benjamin Keinert, and Marc Stamminger.
\newblock Real-time local displacement using dynamic {GPU} memory management.
\newblock In {\em Proceedings of the 5th High-Performance Graphics Conference},
  HPG '13, pages 63--72, New York, NY, USA, 2013. ACM.
\newblock \href {http://dx.doi.org/10.1145/2492045.2492052}
  {\path{doi:10.1145/2492045.2492052}}.

\bibitem{Sengupta06awork-efficient}
Shubhabrata Sengupta, Aaron~E. Lefohn, and John~D. Owens.
\newblock A work-efficient step-efficient prefix sum algorithm.
\newblock In {\em Workshop on Edge Computing Using New Commodity
  Architectures}, 2006.

\bibitem{doi:10.1080/21580103.2016.1262793}
Hark-Soo Song and Sang-Hee Lee.
\newblock Effects of wind and tree density on forest fire patterns in a
  mixed-tree species forest.
\newblock {\em Forest Science and Technology}, 13(1):9--16, 2017.
\newblock \href {http://dx.doi.org/10.1080/21580103.2016.1262793}
  {\path{doi:10.1080/21580103.2016.1262793}}.

\bibitem{Spliet:2014:KDM:2588768.2576781}
Roy Spliet, Lee Howes, Benedict~R. Gaster, and Ana~Lucia Varbanescu.
\newblock {KMA}: A dynamic memory manager for {OpenCL}.
\newblock In {\em Proceedings of Workshop on General Purpose Processing Using
  GPUs}, GPGPU-7, pages 9:9--9:18, New York, NY, USA, 2014. ACM.
\newblock \href {http://dx.doi.org/10.1145/2576779.2576781}
  {\path{doi:10.1145/2576779.2576781}}.

\bibitem{Springer:2018:ICD:3178433.3178439}
Matthias Springer and Hidehiko Masuhara.
\newblock {Ikra-Cpp}: A {C++/CUDA DSL} for object-oriented programming with
  structure-of-arrays layout.
\newblock In {\em Proceedings of the 2018 4th Workshop on Programming Models
  for SIMD/Vector Processing}, WPMVP'18, pages 6:1--6:9, New York, NY, USA,
  2018. ACM.
\newblock \href {http://dx.doi.org/10.1145/3178433.3178439}
  {\path{doi:10.1145/3178433.3178439}}.

\bibitem{6339604}
Markus Steinberger, Michael Kenzel, Bernhard Kainz, and Dieter Schmalstieg.
\newblock {ScatterAlloc}: Massively parallel dynamic memory allocation for the
  {GPU}.
\newblock In {\em 2012 Innovative Parallel Computing (InPar)}, pages 1--10.
  IEEE Computer Society, May 2012.
\newblock \href {http://dx.doi.org/10.1109/InPar.2012.6339604}
  {\path{doi:10.1109/InPar.2012.6339604}}.

\bibitem{master_th_cuda_allc}
Radek Stibora.
\newblock Building of {SBVH} on graphical hardware.
\newblock Master's thesis, Faculty of Informatics, Masaryk University, 2016.

\bibitem{placement_delete}
Bjarne Stroustrup.
\newblock {Bjarne Stroustrup}'s {C++} style and technique {FAQ}. is there a
  ``placement delete"?
\newblock \url{http://www.stroustrup.com/bs_faq2.html\#placement-delete}, 2017.

\bibitem{STRZODKA2012429}
Robert Strzodka.
\newblock Chapter 31 - abstraction for {AoS} and {SoA} layout in {C++}.
\newblock In Wen mei W.~Hwu, editor, {\em GPU Computing Gems Jade Edition},
  Applications of GPU Computing Series, pages 429--441. Morgan Kaufmann,
  Boston, 2012.
\newblock \href {http://dx.doi.org/10.1016/B978-0-12-385963-1.00031-9}
  {\path{doi:10.1016/B978-0-12-385963-1.00031-9}}.

\bibitem{Tasos:2018:ESS:3242947.3242951}
Alexandros Tasos, Juliana Franco, Tobias Wrigstad, Sophia Drossopoulou, and
  Susan Eisenbach.
\newblock Extending {SHAPES} for {SIMD} architectures: An approach to native
  support for struct of arrays in languages.
\newblock In {\em Proceedings of the 13th Workshop on Implementation,
  Compilation, Optimization of Object-Oriented Languages, Programs and
  Systems}, ICOOOLPS '18, pages 23--29, New York, NY, USA, 2018. ACM.
\newblock \href {http://dx.doi.org/10.1145/3242947.3242951}
  {\path{doi:10.1145/3242947.3242951}}.

\bibitem{Ueno:2011:ENG:2034773.2034802}
Katsuhiro Ueno, Atsushi Ohori, and Toshiaki Otomo.
\newblock An efficient non-moving garbage collector for functional languages.
\newblock In {\em Proceedings of the 16th ACM SIGPLAN International Conference
  on Functional Programming}, ICFP '11, pages 196--208, New York, NY, USA,
  2011. ACM.
\newblock \href {http://dx.doi.org/10.1145/2034773.2034802}
  {\path{doi:10.1145/2034773.2034802}}.

\bibitem{Vinkler:2015:RED:3071494.3071506}
Marek Vinkler and Vlastimil Havran.
\newblock Register efficient dynamic memory allocator for {GPUs}.
\newblock {\em Comput. Graph. Forum}, 34(8):143--154, December 2015.
\newblock \href {http://dx.doi.org/10.1111/cgf.12666}
  {\path{doi:10.1111/cgf.12666}}.

\bibitem{Volkov:EECS-2016-143}
Vasily Volkov.
\newblock {\em Understanding Latency Hiding on GPUs}.
\newblock PhD thesis, EECS Department, University of California, Berkeley, Aug
  2016.
\newblock URL:
  \url{http://www2.eecs.berkeley.edu/Pubs/TechRpts/2016/EECS-2016-143.html}.

\bibitem{Weber:2017:MAL:3132652.3106341}
Nicolas Weber and Michael Goesele.
\newblock {MATOG}: Array layout auto-tuning for {CUDA}.
\newblock {\em ACM Trans. Archit. Code Optim.}, 14(3):28:1--28:26, August 2017.
\newblock \href {http://dx.doi.org/10.1145/3106341}
  {\path{doi:10.1145/3106341}}.

\bibitem{Widmer:2013:FDM:2458523.2458535}
Sven Widmer, Dominik Wodniok, Nicolas Weber, and Michael Goesele.
\newblock Fast dynamic memory allocator for massively parallel architectures.
\newblock In {\em Proceedings of the 6th Workshop on General Purpose Processor
  Using Graphics Processing Units}, GPGPU-6, pages 120--126, New York, NY, USA,
  2013. ACM.
\newblock \href {http://dx.doi.org/10.1145/2458523.2458535}
  {\path{doi:10.1145/2458523.2458535}}.

\bibitem{Zhu:2015:PIM:2817095.2817115}
Xiangyuan Zhu, Kenli Li, Ahmad Salah, Lin Shi, and Keqin Li.
\newblock Parallel implementation of {MAFFT} on {CUDA}-enabled graphics
  hardware.
\newblock {\em IEEE/ACM Trans. Comput. Biol. Bioinformatics}, 12(1):205--218,
  January 2015.
\newblock \href {http://dx.doi.org/10.1109/TCBB.2014.2351801}
  {\path{doi:10.1109/TCBB.2014.2351801}}.

\end{thebibliography}

\appendix

\section{Concurrency and Correctness}
\label{sec:concurrency}
%Section~\ref{sec:arch_overview} and~\ref{sec:hierarchical_bit} described the algorithms and data structures of \soaalloc{} on a high level. In this section, we argue their correctness in a multithreaded environment such as GPUs.

CUDA has a weak consistency model for global memory access~\cite{Alglave:2015:GCW:2694344.2694391}. Writes to memory performed by one thread are \emph{not} guaranteed to become visible to other threads in the same order. However, atomic writes \emph{have} that property (\emph{sequential consistency}). Furthermore, \emph{thread fences} can be used between two memory writes to enforce sequential consistency, if necessary.

Moreover, global memory reads/writes may be buffered in registers/caches, without a global memory load/store. Thus, memory writes by one thread may not become visible to other threads until the next GPU kernel, unless reads/writes are \texttt{volatile} or performed with atomic operations.

All bitmap operations are sequentially consistent and do not suffer from load/store buffering because they are based on atomic memory operations.

\subsection{Object Slot Reservation/Freeing}
\label{sec:details_alloc_dealloc}
Inside a block, object allocations are tracked with the object allocation bitmap. Every object allocation bitmap has 64 bits, regardless of the block capacity. If a block's capacity is smaller than 64, then the last $64-N$ bits are set to 1 during block initialization to prevent threads from reserving these slots during object allocation.

Object slots are reserved/freed with atomic operations. These bypass the incoherent L1 caches and are thread-safe: E.g., based on their return value, we know if the current thread reserved a slot or if a contending thread was faster (Alg.~\ref{alg:block_allocate}, Line~5). Based on their return value, we also know if the current thread reserved the last slot (Line~11), in which case the block should be marked as inactive by the allocation algorithm.

\subsubsection{Slot Reservation}
\texttt{Block::reserve()} (Alg.~\ref{alg:block_allocate}) reserves a single object slot in the block. Our actual implementation may reserve multiple slots at once due to allocation request coalescing.

\begin{enumerate}
  \item \textbf{Preconditions:} Block was initialized at least once. (Calling this method on invalidated blocks or full blocks is OK. This function will simply return FAIL.)
  \item \textbf{Postconditions:} If the result is different from FAIL, the resulting slot at position is reserved for this thread (and no other thread).
  \item \textbf{Return Value:} Success indicator, atomically reserved slot position, block state.
  \item \textbf{Linearization Point:} Atomic OR operation (Line~5).
\end{enumerate}

\SetKwComment{Comment}{$\triangleright$\ }{}
\begin{algorithm}[t]
\small
\begin{multicols}{2}
    mask $\gets$ 1 {<}{<} pos\;
    before $\gets$ \emph{atomicAnd}(\&bitmap, $\sim$mask)\;
    \textcolor{gray}{success $\gets$ (before \& mask)) $\not=$ 0\;}
    \textcolor{gray}{\emph{assert}(success);} \hfill \Comment{\textcolor{gray}{\textsf{Precondition.\,\,\,\,\,\,\,\,}}}

    \uIf{popc(\emph{before}) \emph{= 1}}{
      \Return \emph{EMPTY}\;
    }
    \uElseIf{popc(\emph{before}) \emph{= 64}}{
      \Return \emph{FIRST}\;
    }
    \Else{
      \Return \emph{REGULAR}\;
    }
\end{multicols}
\vspace{0.25cm}
 \caption{Block::deallocate(pos) : state \hfill $\triangleright$\ \textsf{Assuming block size 64.} \,\,\,\fbox{GPU}}
 \label{alg:block_deleteslot}
\end{algorithm}

\subsubsection{Slot Freeing}
\texttt{Block::deallocate(pos)} (Alg.~\ref{alg:block_deleteslot}) frees a single object slot in the block. To support allocation request coalescing, we have a modified version of this function that can rollback multiple slots at once.

\begin{enumerate}
  \item \textbf{Preconditions:} Bit \texttt{pos} is set to 1 in the object allocation bitmap. (Deleting an object multiple times or trying to delete an arbitrary pointer is illegal.)
  \item \textbf{Postconditions:} Bit \texttt{pos} is set to 0 in the object allocation bitmap.
  \item \textbf{Return Value:} Block state.
  \item \textbf{Linearization Point:} Atomic AND operation (Line~2).
\end{enumerate}

\subsection{Safe Memory Reclamation with Block Invalidation}
\label{sec:block_invalidation_appx}
Safe memory reclamation (SMR) in lock-free algorithms is notoriously difficult. An SMR problem arises in \textsc{DynaSOAr} when deleting blocks. A block should be deleted as soon as its last object has been deleted. This by itself is easy to detect with atomic operations (Alg.~\ref{alg:block_deleteslot}, Line~6). However, a contending thread may already have selected the now empty block in the course of its own concurrent allocate operation, before the block is actually deleted. Now it is no longer safe to delete the block, but the deleting thread is not aware of that.

Elaborate techniques for SMR such as hazard pointers and epoch-based reclamation have been proposed in previous work~\cite{Brown:2015:RML:2767386.2767436, Michael:2002:SMR:571825.571829}. \textsc{DynaSOAr} is able to exploit a key characteristic of its data structure to solve this SMR problem in a simple way: Since all blocks have the same size and structure, object allocation bitmaps are always located at the same position. Therefore, we can optimistically proceed with bitmap modifications and rollback changes if necessary.

Our solution to SMR is \emph{block invalidation}. Before deleting a block, a thread tries to \emph{invalidate} (atomically set to 1) all bits in the object allocation bitmap. Bits that were already 1 are not considered invalidated because those object slots are in use. After successful invaldation, bits remain invalidated until a new block is initialized in the same location. Other threads may still be able to find the block in the active block bitmap for a while, but object slot reservations can no longer succeed. %Only if a thread invalidated all bits is it safe to delete the block. %Block invalidation was successful if a thread could reserve all object slots (Line~2). There are two cases in which block invalidation can fail.

\SetKwComment{Comment}{$\triangleright$\ }{}
\begin{algorithm}[t]
\small
  heap[bid].type $\gets$ T; \hfill \Comment{\textsf{Volatile write.}}
  \emph{\_\_threadfence}()\;
  heap[bid].bitmap $\gets$ 0; \hfill \Comment{\textsf{Volatile write, assuming block capacity 64.}}
 \caption{DAllocatorHandle::initialize\_block<T>(int bid) : void \hfill \fbox{GPU}}
 \label{alg:block_init}
\end{algorithm}

\SetKwComment{Comment}{$\triangleright$\ }{}
\begin{algorithm}[t]
\small
  bitmap\_ptr $\gets$ \&heap[bid].bitmap\;
  before $\gets$ \emph{atomicOr}(bitmap\_ptr, 0xFF...F); \hfill \Comment{\textsf{Invalidate (set) all obj. allocation bitmap bits.}}
  \If(\Comment*[f]{\textsf{$\geq$ 1 bit was invalidated.}}){before $\not=$ 0xFF...F}{
    t $\gets$ heap[bid].type\;
    \uIf(\Comment*[f]{\textsf{All 64 bits invalidated by this \emph{atomicOr}.}}){\emph{before = 0}}{
      \Return \emph{true}\;
    }
    \Else(\Comment*[f]{\textsf{Not all bits invalidated. Rollback.}}){
      before\_rollback $\gets$ \emph{atomicAnd}(bitmap\_ptr, before)\;
      \If(\Comment*[f]{\textsf{Other thread cleared a bit.}}){\emph{before\_rollback $\not=$ 0xFF...F}}{
            active[t].\emph{clear}(bid); \hfill \Comment{\textsf{Other thread expects an inactive block.}}
          }
      \If(\Comment*[f]{\textsf{Empty again. Retry invalidation.}}){\emph{(before\_rollback \& before) = 0}}{
        \Return \emph{invalidate}(bid)\;
      }
    }
  }
\Return \emph{false}\;
 \caption{DAllocatorHandle::invalidate(int bid) : bool \hfill \fbox{GPU}}
 \label{alg:block_invalidate}
\end{algorithm}

Allocating threads can detect changes in the block type. Before a previously invalidated block becomes available for allocations again (by initializing its object allocation bitmap), we update the block type. We put a thread fence between both writes to ensure that threads see the new block type before they see free slots in the bitmap (Alg.~\ref{alg:block_init}). Threads allocate objects optimistically and rollback changes should they detect a different block type (Alg.~\ref{alg:alloc_algo}, Line~14; also see Sec.~\ref{sec:cor_obj_alloc}).

%TODO: Put threadfence between bitmap operation and setting the type.
\paragraph*{Details}
Block invalidation\footnote{For presentation reasons, we assume a block capacity of 64 in all algorithms in this paper.} (Alg.~\ref{alg:block_invalidate}) fails if a thread is unable to invalidate at least one bit. In that case, if at least one bit was changed through invalidation, this change must be rolled back (Line~8): In \texttt{before} exactly those bits are zero that were invalidated by the thread.

While a thread is running an invalidation operation, other threads may continue to concurrently reserve/free object slots in the same block, unaware of the fact that a thread is trying to invalidate the block. Those threads will update block bitmaps based on the object allocation bitmap state that they are seeing. Therefore, block invalidation must update block bitmaps, as every invalidated bit appears to be an allocated object slot to other threads.

Since block invalidation fills up a block, the block's \emph{active[t]} state should be removed after Line~7, because, if we enter this \emph{else} branch, the thread just \emph{filled up} the block by reserving the remaining object slots (however, not all 64~slots, otherwise, we would be in  the \emph{then} branch of Line~5). However, we defer this step, as an invalidation rollback would likely have to mark the same block as \emph{active[t]} again. Unless, another thread concurrently freed an object slot in-between invalidation and invalidation rollback. For such a thread it will seem as if its deallocation just freed the first slot, causing it activate the block (Alg.~\ref{algo:dealloc_a}, Line~5). However, since we defered block deactivation, this \emph{set(bid)} operation will spin until we deactivate the block (Alg.~\ref{alg:block_invalidate}, Line~10). If invalidation rollback empties the block again, we try to invalidate the block one more time\footnote{Our actual implementation is iterative instead of recursive.}.

Note that block invalidation is independent of the type of a block. After invalidating at least one bit, the block type is fixed until invalidation rollback or block initialization, since other threads do not change invalidated bits. As such, the block cannot be deleted or reinitialized to another type by another thread. Other threads can, however, delete and initialize a block with different type after invalidation rollback. It is, nevertheless, safe to assume a block type of \emph{t} in Line~10, since this is merely an execution of a defered operation that should have happened earlier when the block type was known to be \emph{t}.

%\begin{itemize}
%  \item All objects slots were reserved by other threads before Line~1 is executed. In that case, invalidation did not modify the object allocation bitmap and invalidation fails.
%  \item At least one, but not all object slots were reserved by other threads. In that case, the object allocation bitmap was modified (Line~4) and invalidation must be rolled back (Line~6). The block should now be in an inactive state, but deactivation takes place in Alg.~\ref{algo:dealloc_a} (Line~10). If another thread $t_o$ deallocated an object before invalidation rollback (Line~7), the block must be deactivated because $t_o$ expects an inactive block. Finally, if invalidation rollback empties the block again (Line~10), invalidation is retried. Note that the block type cannot change between invalidation and invalidation rollback because at least one slot was invalidated and the block can thus not be deleted (and reinitialized) by other threads. So it is safe to read the block type in Line~5. 
%\end{itemize}

% Talk about sequential consistency, atomic updates, threadfence

\subsection{Object Allocation}
\label{sec:cor_obj_alloc}
The critical parts during allocations (Alg.~\ref{alg:alloc_algo}) are \emph{block selection} (Line~2) and \emph{object slot reservation} (Line~8). Both operations by themselves are atomic, but not together. Block selection returns the index of an active block of type $T$, so we expect that after Line~8, we reserved an object slot in a block of type $T$. However, due to concurrent operations of other threads, some of these assumptions may be violated.

\begin{description}
  \item[Block Full] An active block was selected by \texttt{try\_find\_set} but the block filled up before making an allocation (i.e., the block is no longer active). In this case, object slot reservation will fail. Whenever allocation fails, it will restart from the beginning.
  \item[Block Deallocated] A block was selected by \texttt{try\_find\_set} but deallocated before reserving a slot. In this case, slot reservation will fail because the block is now in an invalidated state.
  \item[Block Replaced (ABA)] A block was selected by \texttt{try\_find\_set} but deallocated and reinitialized to a block of same type $T$. This is harmless: We do not care about block identity.
  \item[Block Replaced (Different Type)] A block was selected by \texttt{try\_find\_set} but deallocated and reinitialized to another type\footnote{Block initialization (Alg.~\ref{alg:block_init}) has a thread fence between setting the block type and resetting the object allocation bitmap, so threads are guaranteed to read the correct type $t$ after an allocation succeeded.} $t \not= T$. In this case, the allocation must be rolled back (Line~14). All blocks have the same basic structure, so no data can be overwritten accidentally during bitmap updates. Note that the rollback may trigger additional block bitmap updates.
  \textcolor{gray}{\item[Active Block Not Selected] A block becomes active shortly after \texttt{try\_find\_set} fails. Or, due to bitmap hierarchy inconsistencies, \texttt{try\_find\_set} fails to find an active block even though active blocks exist. This is harmless: No assumption is violated. A new block will be initialized, which merely increases fragmentation.}
\end{description}

Note that a block cannot be deallocated after an object slot was already reserved, because block invalidation would fail. Thus, the type of a block can also no longer change.

%We must be careful in the light of multiple concurrent threads. Allocations in a block are safe because they are performed with an atomic OR operation; only one thread can allocate a specific slot. Similarly, setting and clearing specific bits in the block bitmap is safe and considered atomic, as described in Section~\ref{sec:hierarchical_bit}.

\subsection{Object Deallocation}
The critical part during deallocations (Alg.~\ref{algo:dealloc_a}) is consistency between \emph{object slot deallocation} (Line~3) and \emph{block state updates}. If the current thread deallocated the first object (i.e., the block was full), then the block bit must be set to active. If the current thread deleted the last object (i.e., the block is empty), then the block must be deleted. The problem is that object slot deallocation and the corresponding block state update together are not atomic.

\begin{description}
  \item[Allocate After Delete-First] A thread $t_1$ deleted the first object of a block. However, before marking the block active (Line~6), another thread $t_2$ allocated this slot again; the block should be inactive. In this case, $t_2$ reserved the last slot, so it will mark the block as inactive (Alg.~\ref{alg:alloc_algo}, Line~12). This operation expects the bit to be in a set state and it will retry until $t_1$ sets the bit.
  \item[Block Deleted after Delete-First] A thread $t_1$ deleted the first object of a block. However, before marking the block active, other threads deallocated all other objects and a thread $t_2$ deleted the block. This is not possible because $t_2$ expects the block to be active (Line~9), i.e., bit set to 1, and blocks until then.
  \item[Block Replaced after Delete-First] A thread $t_1$ deleted the first object of a block. However, before marking the block active, the block was reinitialized to another type. This is not possible because only deleted blocks can be reinitialized (see previous point).
  \item[Allocate after Delete-Last] A thread $t_1$ deleted the last object of a block. However, before deleting the block, another thread $t_2$ allocated an object again, so it is unsafe to delete the block now. This case in handled by block invalidation.
  \item[Block Deleted after Delete-Last] A thread $t_1$ deleted the last object of a block. However, before deleting the block, another thread $t_2$ allocated an object and yet another thread $t_3$ deleted that object, rendering the block empty again and deleting it. Now the block is already deleted when $t_1$ is trying to delete the block. In this case, block invalidation of $t_1$ will fail because the block is still in an invalidated state and $t_1$ fails to invalidate all object slot bits.
  \item[Block Replaced after Delete-Last] Same as before, but yet another thread $t_4$ reininitializes the block to a different type. Now $t_1$ will invalidate and delete a new block whose type is different. This is OK. Block invalidation will succeed only if the block is empty. Both block invalidation and block deletion are independent of the block type.
\end{description}

\subsection{Correctness of Hierarchical Bitmap Operations}
%(consisting of bits $b_{64\cdot i + k}^l$ for $k=0...63$)
A container $C_i^l$ consists of bits $b_{64 \cdot i}^l$, ...,  $b_{64 \cdot i + 63}^l$ and is represented by one bit $b_{i}^{l+1}$ in the nested (higher-level) bitmap. That bit is set if and only if at least one bit is set in the container.

\begin{definition}[Consistency]
\label{def:consistency_crit}
A bit in level $b_{i}^{l+1}$ is \textbf{consistent} with its corresponding container $C_i^l$ in the lower-level bitmap if and only if:

\begin{align*}
b_i^{l+1} = \bigvee_{k=0}^{63} b_{64\cdot i + k}^{l}\,\, \textcolor{gray}{= \mathds{1}\left(\sum C_{\lfloor i/64 \rfloor}^l > 0\right)}
\end{align*}
\end{definition}

We say that the $L_{l+1}$ bitmap is in a consistent state with the $L_l$ bitmap if all bits $b_i^{l+1}$ in the $L_{l+1}$ bitmap satisfy the consistency criterion. The bitmap data structure as a whole is in a consistent state if all bitmap levels $L_i$ satisfy the consistency criterion.

\begin{definition}[Semantics of Bitmap Operations]
\label{def:semant_bitmap_ops}
Every bitmap $L_l$ provides operations for setting and clearing bits (Sec.~\ref{sec:bitmap_operations_412}). These operations may update bits in the higher-level bitmap $L_{l+1}$ if they \textbf{s}et the \textbf{f}irst bit ($\mathit{SF}_{\lfloor i/64 \rfloor}^l$) or \textbf{c}lear the \textbf{l}ast bit ($\mathit{CL}_{\lfloor i/64 \rfloor}^l$) of a container $C^l_i$, respectively:

\begin{align*}
\underbrace{\textsf{set}(b^l_i) \text{ and } \mathds{1}\left(\sum C_{\lfloor i/64 \rfloor}^l = 0\right)}_{\text{set-first: } \mathit{SF}_{\lfloor i/64 \rfloor}^l} & \text{ then } \textsf{set}(b_{\lfloor i/64 \rfloor}^{l+1}) & \,\,\,\,\,\, \forall i \in [0; 64) \\
\underbrace{\textsf{clear}(b^l_i) \text{ and } \mathds{1}\left(\sum C_{\lfloor i/64 \rfloor}^l = 1\right)}_{\text{clear-last: } \mathit{CL}_{\lfloor i/64 \rfloor}^l} & \text{ then } \textsf{clear}(b_{\lfloor i/64 \rfloor}^{l+1}) & \,\,\,\,\,\, \forall i \in [0; 64)
\end{align*}
\end{definition}

We would like to show that, assuming that a bitmap data structure is initially in a consistent state and given a multiset of bitmap operations $O_0$ on the $L_0$ bitmap, the entire bitmap data structure is in a consistent state after executing all operations.

\begin{definition}[Legal Bitmap Operations]
\label{def:correctness_multiset_op}
Let $\#\textsf{set}(b^l_i)$ and $\#\textsf{clear}(b^l_i)$ be the number of set and clear operations of $b^l_i$ in a multiset of bitmap operations $O_l$. We call $\mathds{S}(b^l_i) = \#\textsf{set}(b^l_i) - \#\textsf{clear}(b^l_i)$ the \textbf{set-surplus} of $b^l_i$. $O_l$ is \textbf{legal} if it satifies the following conditions.

\begin{enumerate}
  \item Overall bit operation is clear, remain or set: $\mathds{S}(b^l_i) \in \{-1, 0, 1\}$.
  \item Bit is in a cleared or set state afterwards: $b^l_i + \mathds{S}(b^l_i) \in \{0, 1\}$.
\end{enumerate}
\end{definition}

E.g., setting a cleared bit twice and clearing it once ($\mathds{S} = 2 - 1 = 1$ and $0 + 1 = 1$) is OK, but setting the bit three times and clearing it once ($\mathds{S} = 3 - 1 = 2$) would be an illegal usage of the bitmap data structure. Note that illegal bitmap operations deadlock in our implementation because \textsf{set} and \textsf{clear} spin-block and retry until they acutally changed the bit. If a legal bitmap operations multiset is executed fully concurrent (i.e., one thread per operation), then there is always a thread/operation that can make progress.

\begin{hypothesis}
Let us assume that a multiset of bitmap operations $O_l$ on the $L_l$ bitmap is legal according to Definition~\ref{def:correctness_multiset_op} for an arbitrary $l$ and that $L_l$ is initially consistent with $L_{l+1}$.
\end{hypothesis}

\begin{lemma}
Under the induction hypothesis, the bitmap operations multiset $O_{l+1}$ that is generated by the operations in $O_l$ according to Definition~\ref{def:semant_bitmap_ops} is also legal. Furthermore, after executing $O_l$, $L_l$ is still consistent with $L_{l+1}$.
%Given a multiset of bitmap operations $O$ on a container $C^l_i$ bitmap, if Assumption~\ref{ass:api_usage} holds and the bitmap is consistent, then the bitmap is consistent after executing $O$.
\end{lemma}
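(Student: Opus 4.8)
The plan is to prove both claims container by container. Fix a container $C_j^l$ of the $L_l$ bitmap; it is represented by the single bit $b_j^{l+1}$, and the operations of $O_{l+1}$ that touch $b_j^{l+1}$ are exactly the set-first events $\mathit{SF}_j^l$ and clear-last events $\mathit{CL}_j^l$ emitted by the sub-multiset of $O_l$ that acts on the $64$ bits of $C_j^l$. The first step is to pin down how many such events there are. Every \textsf{set}/\textsf{clear} on a bit of $L_l$ is realized by exactly one \emph{state-changing} atomic OR/AND on the container word --- the retry loop may perform extra no-op atomic steps, but legality of $O_l$ guarantees no deadlock and exactly one effective step per operation --- so the effective atomic steps on $C_j^l$ linearize into a total order. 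Tracking the population count $p$ of $C_j^l$ along that order, a \textsf{set} emits $\mathit{SF}_j^l$ precisely when its effective OR reads popcount $0$, i.e.\ at a $0\!\to\!1$ transition of $p$, and symmetrically a \textsf{clear} emits $\mathit{CL}_j^l$ precisely at a $1\!\to\!0$ transition. Hence $\#\mathit{SF}_j^l$ and $\#\mathit{CL}_j^l$ count the $0\!\to\!1$ and $1\!\to\!0$ transitions of $p$, respectively.

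Next I would observe that these two transition types strictly alternate along the linearization ($p$ stays $\geq 1$ between a $0\!\to\!1$ and the following $1\!\to\!0$, and stays $0$ otherwise), and that the first transition, if any, is $0\!\to\!1$ iff $P_j = 0$, where $P_j$, $P_j'$ denote the populations of $C_j^l$ before and after executing $O_l$. A short case analysis on $(\mathds{1}(P_j > 0), \mathds{1}(P_j' > 0))$ then yields
\[
\mathds{S}(b_j^{l+1}) \;=\; \#\mathit{SF}_j^l - \#\mathit{CL}_j^l \;=\; \mathds{1}(P_j' > 0) - \mathds{1}(P_j > 0) \;\in\; \{-1,0,1\}.
\]
By the induction hypothesis $b_j^{l+1}$ is initially consistent, so $b_j^{l+1} = \mathds{1}(P_j > 0)$ and therefore $b_j^{l+1} + \mathds{S}(b_j^{l+1}) = \mathds{1}(P_j' > 0) \in \{0,1\}$. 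This verifies both conditions of Definition~\ref{def:correctness_multiset_op} for the restriction of $O_{l+1}$ to $b_j^{l+1}$; ranging over all $j$ shows $O_{l+1}$ is legal. Reading the same identity at quiescence, the final value of $b_j^{l+1}$ is $b_j^{l+1} + \mathds{S}(b_j^{l+1}) = \mathds{1}(P_j' > 0) = \bigvee_{k=0}^{63} b_{64 j + k}^l$ evaluated in the post-state, which is exactly the criterion of Definition~\ref{def:consistency_crit}; so $L_l$ is again consistent with $L_{l+1}$.

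The step I expect to require the most care is the transition-counting identity under genuine concurrency: I must argue that ``the \textsf{set} that sees popcount $0$'' is a well-defined, linearizable event even when many threads concurrently touch different bits of $C_j^l$, and that a \textsf{clear} issued while its target bit is transiently $0$ (because a concurrent \textsf{set} has not yet landed) still contributes exactly one effective atomic AND --- hence at most one $1\!\to\!0$ transition --- in the linearized order. This is where atomicity of the OR/AND primitives and legality of $O_l$ (no dropped, duplicated, or deadlocking effective operations) carry the argument; the remaining ingredients --- strict alternation of the two transition types and the case analysis above --- are routine.
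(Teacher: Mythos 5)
Your proposal is correct and follows essentially the same route as the paper's proof: both argue container by container, identify the surplus of the higher-level bit with the difference of set-first and clear-last counts, and combine this with the initial consistency assumption to obtain both legality of $O_{l+1}$ and post-execution consistency. The only difference is presentational: where the paper reduces via a ``without loss of generality, all set-first/clear-last operations act on the same bit'' step and then enumerates four cases, you justify the same counting identity $\mathds{S}(b_j^{l+1}) = \mathds{1}(P_j' > 0) - \mathds{1}(P_j > 0)$ explicitly through the linearized alternation of $0\!\to\!1$ and $1\!\to\!0$ popcount transitions, which is a slightly more careful rendering of the same idea.
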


\begin{proof}
Let us first consider the bitmap operations of a single container $C^l_i$. Let $\#\mathit{SF}_i^l$ be the number of times a first bit is set in the container and $\#\mathit{CL}_i^l$ be the number of times a last bit is cleared in the container. Then, according to Definition~\ref{def:semant_bitmap_ops}, $b_{\lfloor i/64 \rfloor}^{l+1}$ is set $\#\mathit{SF}_i^l$ times and cleared $\#\mathit{CL}_i^l$ times. We have to prove that the set-surplus $\mathds{S}(b^{l+1}_{\lfloor i/64 \rfloor}) = \#\mathit{SF}_i^l - \#\mathit{CL}_i^l$ satisfies the legality criteria of Definition~\ref{def:correctness_multiset_op}.

Without loss of generality, let us assume that all set-first and clear-last operate on the same bit $b^l_k$. Then, $\mathds{S}(b^{l+1}_{\lfloor i/64 \rfloor}) = \mathds{S}(b^l_k) \in \{-1, 0, 1\}$. Hence, the generated bitmap operations $O_{l+1}$ for any bit on the $L_{l+1}$ bitmap satisfy the first legality condition of Definition~\ref{def:correctness_multiset_op}.

Now we have to show that also the second legality condition holds and that $b^{l+1}_{\lfloor i/64 \rfloor}$ is consistent with $C^l_i$ after executing $O_l$. We consider two cases.

\begin{enumerate}
  \item $b^{l+1}_{\lfloor i/64 \rfloor} = 0$. Therefore, due to initial consistency, $\sum C_{\lfloor i/64 \rfloor}^l = 0$. Therefore, $\#\mathit{SF}_i^l - \#\mathit{CL}_i^l \in \{0, 1\}$, otherwise, $O_l$ would not be legal. Therefore, $b^{l+1}_{\lfloor i/64 \rfloor} + \mathds{S}(b^{l+1}_{\lfloor i/64 \rfloor}) \in \{0, 1\}$.
  \begin{enumerate}
    \item If $\#\mathit{SF}_i^l - \#\mathit{CL}_i^l = 0$, then $\vee_{k=0}^{63} b_{64\cdot i + k}^{l} = 0$ after $O_l$. At the same time, $\mathds{S}(b^{l+1}_{\lfloor i/64 \rfloor}) = 0$, so $b^{l+1}_{\lfloor i/64 \rfloor} = 0$ after $O_l$, which is consistent with the state of $C^l_i$ after $O_l$.
    \item If $\#\mathit{SF}_i^l - \#\mathit{CL}_i^l = 1$, then $\vee_{k=0}^{63} b_{64\cdot i + k}^{l} = 1$ after $O_l$. At the same time, $\mathds{S}(b^{l+1}_{\lfloor i/64 \rfloor}) = 1$, so $b^{l+1}_{\lfloor i/64 \rfloor} = 1$ after $O_l$, which is consistent with the state of $C^l_i$ after $O_l$.
  \end{enumerate}
  \item $b^{l+1}_{\lfloor i/64 \rfloor} = 1$. Therefore, due to initial consistency, $\sum C_{\lfloor i/64 \rfloor}^l > 0$. Therefore, $\#\mathit{SF}_i^l - \#\mathit{CL}_i^l \in \{-1, 0\}$, otherwise, $O_l$ would not be legal. Therefore, $b^{l+1}_{\lfloor i/64 \rfloor} + \mathds{S}(b^{l+1}_{\lfloor i/64 \rfloor}) \in \{0, 1\}$.
  \begin{enumerate}
    \item If $\#\mathit{SF}_i^l - \#\mathit{CL}_i^l = -1$, then $\vee_{k=0}^{63} b_{64\cdot i + k}^{l} = 0$ after $O_l$. At the same time, $\mathds{S}(b^{l+1}_{\lfloor i/64 \rfloor}) = -1$, so $b^{l+1}_{\lfloor i/64 \rfloor} = 0$ after $O_l$, which is consistent with the state of $C^l_i$ after $O_l$.
    \item If $\#\mathit{SF}_i^l - \#\mathit{CL}_i^l = 0$, then $\vee_{k=0}^{63} b_{64\cdot i + k}^{l} = 1$ after $O_l$. At the same time, $\mathds{S}(b^{l+1}_{\lfloor i/64 \rfloor}) = 0$, so $b^{l+1}_{\lfloor i/64 \rfloor} = 1$ after $O_l$, which is consistent with the state of $C^l_i$ after $O_l$.
  \end{enumerate}
\end{enumerate}

If all containers in $L_l$ are consistent with their respective bits in $L_{l+1}$, then the entire $L_l$ bitmap is consistent with the $L_{l+1}$ bitmap. Futhermore, all generated bitmap operations $O_{l+1}$ are legal because they satisfy both legality criteria.
\end{proof}

\begin{basecase}
The bitmap data structure is initially in a consistent state. Furthermore, $O_0$ is legal. Otherwise, programmers use the bitmap data structure incorrectly.
\end{basecase}

\section{Field Address Computation}
\label{sec:address_computation_ex}
This section describes a key implementation technique of the \textsc{DynaSOAr} DSL, that was taken from Ikra-Cpp~\cite{Springer:2018:ICD:3178433.3178439}: Proxy Types. This technique allows us to implement custom data layouts in C++~11 without breaking OOP abstractions or modifying the compiler.

Even though fields are declared with type \texttt{Field<B, N>}, they can be used almost like normal C++ types. There are certain limitations with respect to automatic type deduction (\texttt{auto} keyword). Internally, this is implemented with operator overloading, e.g.:

\begin{enumerate}
  \item \textbf{Implicit Conversion Operator:} \texttt{Field<B, N>} values can be implicitly converted to the N-th predeclared type in \texttt{B}, without an explicit type cast. We call \texttt{B} the \emph{base type}.
  \item \textbf{Member of Object/Pointer Operators:} It is possible to call non-virtual member functions if the base type is (pointer to) a class or struct.
  \item \textbf{Subscript Operator:} It is possible to use array access syntax (\texttt{[]}) for array base types.
  \item \textbf{Indirection/Address-of Operators:} It is possible to dereference a value of pointer base type and to take the address of a field value.
\end{enumerate} 

Listing~\ref{lst:addr_comp_ft} shows the implementation of the implicit conversion operator. This code first extracts all components that are required for address computation from an object pointer. Then it returns a reference to an object of the base type at the computed memory location.
\bigskip
\begin{lstlisting}[caption={Address Computation in Proxy Field Types}, label={lst:addr_comp_ft}]
// Implicit conversion operator: E.g., convert Field<NodeBase, 2> to float& in Figure <@\ref{fig:global_ref}@>.
template<typename B, int N>
Field<B, N>::operator typename B::predeclared_type<N>&() {
  int offset = ...;  // Computed with template metaprogramming. <@$\textsf{offset}_{\textsf{B::fieldname}}$@> in Figure <@\ref{fig:global_ref}@>.
  auto obj_ptr = reinterpret_cast<uint64_t>(this) - 2;  // p2 in Figure <@\ref{fig:global_ref}@>.
  // Bits 0-49 and clear 6 least significant bits.
  auto* block_address = reinterpret_cast<char*>(obj_ptr & 0x3FFFFFFFFFFC0);
  int obj_slot_id = obj_ptr & 0x3F;  // Bits 0-5
  int block_capacity = (obj_ptr & 0xFC000000000000) >> 50;  // Bits 50-55
  auto* soa_array = reinterpret_cast<typename B::predeclared_type<N>*>(
      block_address + field_offset * block_capacity);
  return soa_array[obj_slot_id];
}
\end{lstlisting}

\end{document}